\newcommand{\SKIP}[1]{}
\newcommand{\Redl}{\mathit{Red}_\ell}
\newcommand{\Redn}{\mathit{Red}_n}
\newcommand{\tleq}{\sqsubseteq}
\newcommand{\bet}{\mathit{btw}}
\newcommand{\domleq}{\preceq}
\newcommand{\domle}{\prec}
\newcommand{\Nnn}{\Nn|_n}
\newcommand{\Nnl}{\Nn|_\ell}
\newcommand{\Red}{\mathit{Red}}
\newcommand{\bK}{\bar{K}}
\newcommand{\bG}{\bar{G}}
\newcommand{\init}{\mathit{init}}
\newcommand{\Cinit}{C_{\init}}
\newcommand{\Cfin}{C_{\fin}}
\newcommand{\N}{\mathcal N}
\newcommand{\dom}{\mathit{dom}}
\newcommand{\out}{\mathit{out}}
\newcommand{\Proc}{\mathit{Proc}}
\newcommand{\ignore}[1]{}
\newcommand{\first}[1]{I(#1)}
\newtheorem{theorem}{Theorem}
\newtheorem{definition}{Definition}
\newtheorem{lemma}{Lemma}
\newenvironment{example}[1][!*!,!]%
 {\noindent{\textbf{Example: }%
     \ifthenelse{\equal{#1}{!*!,!}}{}{%
       \normalfont\ (#1)}}}%
 {\koniec\medskip}
\renewenvironment{proof}[1][!*!,!]%
 {\noindent{\textbf{Proof: }%
     \ifthenelse{\equal{#1}{!*!,!}}{}{%
       \normalfont\ (#1)}}}%
 {\koniec\medskip}
\newcommand{\algmargin}{\the\ALG@thistlm}
\renewcommand{\fin}{\mathit{fin}}
\renewcommand{\last}[1]{F(#1)}
\renewcommand{\act}[2][]{\xrightarrow[#1]{#2}}
\begin{document}
\title{Static Analysis of Deterministic Negotiations
\thanks{Accepted for publication in the Proceedings of LICS 2017}}

\author{\IEEEauthorblockN{Javier Esparza}
\IEEEauthorblockA{Technical University of Munich}
\thanks{Partially supported by the DFG Project \emph{Negotiations: A Model for
Tractable Concurrency}, and the Institute of Advance Studies of the Technische Universit\"at M\"unchen.}
\and
\IEEEauthorblockN{Anca Muscholl}
\IEEEauthorblockA{University of Bordeaux, LaBRI
}
\and
\IEEEauthorblockN{Igor Walukiewicz}
\IEEEauthorblockA{CNRS, LaBRI, University of Bordeaux
}
}

\IEEEoverridecommandlockouts
\IEEEpubid{\makebox[\columnwidth]{978-1-5090-3018-7/17/\$31.00~
\copyright2017 IEEE \hfill} \hspace{\columnsep}\makebox[\columnwidth]{ }}
\maketitle

\begin{abstract}
Negotiation diagrams are a model of concurrent computation akin to workflow Petri nets. 
Deterministic negotiation diagrams, equivalent to the much
studied and used free-choice workflow Petri nets, are surprisingly
amenable to verification. Soundness (a property close to
deadlock-freedom) can be decided in \PTIME. Further, other fundamental questions
like computing summaries or the expected
cost, can also be solved in \PTIME\ for sound deterministic negotiation 
diagrams, while they are PSPACE-complete in the general case. 

In this paper we generalize and explain these results. We extend the
classical ``meet-over-all-paths'' (MOP) formulation of static analysis
problems to our concurrent setting, and introduce
Mazurkiewicz-invariant analysis problems, which encompass the questions
above and new ones.
We show that any
Mazurkiewicz-invariant analysis problem can be solved in \PTIME\ for sound deterministic
negotiations whenever it is in \PTIME\ for sequential flow-graphs---even though 
the flow-graph of a deterministic negotiation diagram can be exponentially
larger than the diagram itself.
This gives a common explanation to the low-complexity of all the
analysis questions studied so far.
Finally, we show that classical gen/kill analyses are also an instance of our
framework, and obtain a \PTIME\ algorithm for detecting anti-patterns in 
free-choice workflow Petri nets.

Our result is based on a novel decomposition theorem, of independent
interest, showing that sound deterministic negotiation diagrams can be
hierarchically decomposed into (possibly overlapping) smaller sound 
diagrams. 
\end{abstract}

\IEEEpeerreviewmaketitle

\section{Introduction }

Concurrent systems are difficult to analyze due to the state explosion
problem. Unlike for sequential systems, the flow graph of a concurrent system
is often exponential in the size of the system, so that analysis
techniques for sequential systems cannot be directly applied. 
One approach to analyze concurrent systems is to take a general model
and design heuristics that work for relevant examples.
Another, that we pursue in this paper, is to find a restricted class
of concurrent systems and design provably efficient algorithms for particular
analysis problems for this class.

In \cite{negI} Esparza and Desel introduced
\emph{negotiation diagrams}, a model of concurrency closely related to workflow Petri nets.
Workflow nets are a very successful formalism for the description of business processes, 
and a back-end for graphical notations like BPMN (Business Process Modeling Notation), 
EPC (Event-driven Process Chain), or UML Activity Diagrams (see e.g.\ \cite{aalst,van2004workflow}). 
In a nutshell, negotiation diagrams are workflow Petri nets that can be decomposed into communicating 
sequential Petri nets, a feature that makes them more 
amenable to theoretical study, while the translation into workflow nets 
(described in \cite{DBLP:conf/apn/DeselE15}) allows to transfer results and algorithms to 
business process applications. 

A negotiation diagram describes a distributed system with a fixed set of sequential processes. 
The diagram is composed of ``atomic negotiations'', each one involving a (possibly different) 
subset of processes. An atomic negotiation starts when all its participants are ready to engage in it, 
and concludes with the selection of one out of a fixed set of possible outcomes; for each 
participant process, the outcome determines which atomic negotiations the process
is willing to engage in at the next step. As workflow Petri nets, negotiations can simulate linearly bounded automata,
and so all interesting analysis problems are 
\PSPACE-hard for them. 

A negotiation is \emph{deterministic} if for every process the outcome
of an atomic negotiation completely determines the next atomic
negotiation the process should participate in. As shown in
\cite{DBLP:conf/apn/DeselE15}, the connection between negotiations
diagrams and workflow Petri nets is particularly tight in the deterministic case:
Deterministic negotiation diagrams are essentially isomorphic to the class of
\emph{free-choice} workflow nets, a class important in practice\footnote{For example, 70\% of the almost 2000
workflow nets from the suite of industrial models studied in
\cite{van2007verification,fahland2009instantaneous,DBLP:conf/fase/EsparzaH16}
are free-choice.} and extensively studied, see e.g. \cite{DBLP:conf/bpm/Aalst00,DBLP:journals/is/FavreFV15,DBLP:conf/bpm/FahlandV16,DBLP:conf/tacas/FavreVM16,DBLP:conf/fase/EsparzaH16,DBLP:conf/qest/EsparzaHS16,DBLP:conf/concur/EsparzaKMW16}). 
The state space of deterministic negotiations/free-choice workflow nets can
grow exponentially in their size, and so they are subject to the state
explosion problem. However, theoretical research has shown that,
remarkably, several fundamental problems can be solved in polynomial
time by means of algorithms that avoid direct
exploration of the state space (contrary to other techniques, 
like partial-order reduction, that only reduce the number of states to be explored, 
and still have exponential worst-case complexity). 
First, it can be checked in \PTIME\ if a deterministic negotiation diagram is
\emph{sound}~\cite{negII}, a variant of deadlock-freedom property~\cite{DBLP:conf/tacas/FavreVM16}\footnote{About 50\%
of the free-choice workflow nets from the suite mentioned above are sound.}.
Then, for sound deterministic negotiation diagrams \PTIME\ algorithms have been
proposed for: the  \emph{summarization problem}  \cite{DBLP:conf/fase/EsparzaH16},
 the problem of computing the \emph{expected cost} of a probabilistic
 free-choice workflow net \cite{DBLP:conf/qest/EsparzaHS16}, and the
identification of some \emph{anti-patterns} \cite{DBLP:conf/concur/EsparzaKMW16}. 

In this paper we  develop a generic approach to the static analysis of
sound deterministic negotiation diagrams. It covers all the problems above as particular instances,
and new ones, like the computation of the best-case/worst-case execution time.
The approach is a generalization to the concurrent setting of the classical lattice-based approach to 
static analysis of sequential flow-graphs \cite{nielson}.
A flow-graph consists of a set of nodes, modeling program points, and a set of edges,
modeling program instructions, like assignments or guards\footnote{In some papers the roles of nodes and edges are reversed: Nodes are program instructions, and edges are program points. The version with program points as nodes is more convenient for our purposes.} In the lattice-based approach one (i) defines a lattice ${\cal D}$ of dataflow informations capturing the analysis at hand, (ii) assigns semantic transformers $\sem{a} \colon {\cal D} \rightarrow {\cal D}$ to each action $a$ of the flow-graph, (iii) assigns to a path $a_1 \, \cdots \, a_n$ of the flow graph the functional composition $\sem{a_n} \circ \cdots \circ \sem{a_1}$ of the transformers, and (iv) defines the result of the analysis as the ``Merge Over all Paths'', i.e, the join of the transformers of all execution paths, usually called the MOP-solution or just the MOP of the dataflow problem. So performing an analysis amounts to computing the MOP of the flow-graph for the corresponding lattice and transformers.

Katoen {\em et al.} have recently shown in~\cite{DBLP:conf/acsd/Katoen12},~\cite{DBLP:conf/apn/EisentrautHK013} that in order to adequately deal with quantitative analyses of concurrent systems, like expected costs, one needs a 
semantics  that distinguishes between the inherent nondeterminism of each sequential process, and the nondeterminism introduced by concurrency (the choice of the process that should perform the next step). Following these ideas, we introduce a semantics in which the latter is resolved by an external scheduler, and define the MOP for a given scheduler. The result of a dataflow analysis is then 
given by the infimum or supremum, depending on the application, of the MOPs for
all possible schedulers.

The contributions of the paper are the following:

(1) We present an extension of a static analysis framework to deterministic
negotiation diagrams. 
In particular, we identify the class of \emph{Mazurkiewicz invariant
frameworks} that respect the concurrency relation in negotiations. 
We prove a theorem showing a first important property of sound
deterministic negotiations, namely that the MOP is independent of the
scheduler for Mazurkiewicz invariant frameworks. 
This allows to compute the result of the analysis by fixing a
scheduler, and computing the MOP for it. 
As an another motivation for Mazurkiewicz invariant frameworks we
observe that there are static analysis frameworks for which analysis is
\NP\ hard, even for sound deterministic negotiation diagrams.

(2) The main contribution of the paper is a method to compute MOP
problems for sound deterministic negotiation diagrams.
The method does not require the computation of the reachable configurations.
We prove a novel \emph{decomposition theorem} showing that a
deterministic negotiation diagram is composed of smaller subnegotiations
involving only a subset of the processes, and that these subnegotiations
are themselves sound. This allows us to define a generic \PTIME\ 
algorithm for computing the MOP for Mazurkiewicz invariant static
analysis frameworks.

(3) Finally, we show that the problems studied in
\cite{DBLP:conf/fase/EsparzaH16,DBLP:conf/qest/EsparzaHS16}, and
others, are Mazurkiewicz invariant. Further, we show that the MOP of
an important class of analyses --  all four flavors of gen/kill problems,
well known in the static analysis community -- can be reformulated as invariant
frameworks, and computed in \PTIME.

\emph{Organization of the paper:} Section \ref{sec:def} introduces the negotiation 
model and static analysis frameworks. Section \ref{sec:decomp} proves the decomposition 
theorem. Section \ref{sec:compMOP} presents the algorithm to compute the MOP of
an arbitrary Mazurkiewicz-invariant analysis framework. Section
\ref{sec:genkill} deals with gen/kill analyses.



\medskip

\noindent \textbf{Related work.} 
As we have mentioned, deterministic negotiations
are very close to free-choice workflow Petri nets, also called
workflow graphs. Algorithms for the
analysis of specific properties of these nets have been studied in
\cite{DBLP:conf/bpm/Aalst00,DBLP:conf/caise/TrckaAS09,DBLP:journals/is/FavreFV15,DBLP:conf/bpm/FahlandV16,DBLP:conf/tacas/FavreVM16,DBLP:conf/fase/EsparzaH16,DBLP:conf/qest/EsparzaHS16,DBLP:conf/concur/EsparzaKMW16}. 
We  have already described above the relation to these works.

We discuss the connection to work on static analysis for (abstract models of) programming languages.
The synchronization-sensitive analysis of concurrent programs has been intensively studied
(see e.g \cite{DBLP:conf/popl/EsparzaP00,DBLP:journals/toplas/Ramalingam00,DBLP:conf/esop/SeidlS00,DBLP:conf/pldi/ChughVJL08,
DBLP:conf/tacas/FarzanM07,DBLP:conf/sas/LammichM08,DBLP:conf/sas/FarzanK10,
DBLP:conf/popl/SchwarzSVLM11,DBLP:journals/toplas/GantyM12,DBLP:journals/toplas/BouajjaniE13}). A
fundamental fact is that interprocedural synchronization-sensitive analysis is undecidable \cite{DBLP:journals/toplas/Ramalingam00}, and intraprocedural synchronization-sensitive analysis has high complexity (ranging from PSPACE-completeness to EXPSPACE-completeness, depending on the communication primitive, see e.g. \cite{DBLP:journals/acta/ReifS88}). This is in sharp contrast to the linear complexity of static analysis in the size of the flow graph for sequential programs, and causes work on the subject to roughly split into two research directions. The first one aims at obtaining decidability or low complexity of analyses by restricting the possible synchronization patterns. Many different restrictions have been considered: parbegin-parend constructs \cite{DBLP:conf/popl/EsparzaP00,DBLP:conf/esop/SeidlS00}, generalizations thereof  (see e.g.~\cite{DBLP:conf/sas/LammichM08}),  synchronization by nested locks (see e.g.~\cite{DBLP:conf/sas/FarzanK10}), and asynchronous programming (see e.g.\cite{DBLP:journals/toplas/GantyM12}).
The other direction does not restrict the synchronization patterns, at the price of worst-case exponential analysis algorithms (see e.g. \cite{DBLP:conf/tacas/FarzanM07,DBLP:conf/pldi/ChughVJL08}, where control-flow of parallel programs is modelled by Petri nets, and a notion similar to Mazurkiewicz invariance is also used). 

Compared with these papers,  the original feature of our work is that
we obtain polynomial analysis algorithms without restricting the
possible synchronization patterns; instead, deterministic negotiation diagrams
restrict the {\em interplay} between synchronization and choice. This distinction can be best appreciated
when we compare these formalisms, but excluding choice. In the programming languages of
\cite{DBLP:conf/popl/EsparzaP00,DBLP:conf/esop/SeidlS00,DBLP:conf/sas/LammichM08,DBLP:conf/sas/FarzanK10,DBLP:journals/toplas/GantyM12}, excluding choice means excluding if-then-else or alternative constructs, 
while for deterministic negotiations it means considering the special
case in which every node has exactly one outcome. 
Sound deterministic negotiation diagrams can model all synchronization
patterns given in terms of Mazurkiewicz traces, but it is not the case
for formalisms of \cite{DBLP:conf/popl/EsparzaP00,DBLP:conf/esop/SeidlS00,DBLP:conf/sas/LammichM08,DBLP:conf/sas/FarzanK10,DBLP:journals/toplas/GantyM12}.
 For example, the languages of
\cite{DBLP:conf/popl/EsparzaP00,DBLP:conf/esop/SeidlS00} cannot model
a synchronization pattern with three processes $A, B, C$ in which first $A$ 
synchronizes with $B$, then $A$ synchronizes with $C$, and finally $B$ synchronizes with
$C$. Observe that on the other hand, negotiations are finite state,
whereas the other formalisms we have mentioned have non-determinism,
recursion, and possibly, thread creation.


\section{Negotiations}\label{sec:def}

A~\emph{negotiation diagram} $\Nn$ is a tuple
$\struct{\Proc,N,\dom,R,\d}$, where $\Proc$ is a finite
set of \emph{processes} (or agents) 
and $N$ is a finite set of \emph{nodes} where
the processes can synchronize to choose an \emph{outcome}.
The function $\dom:N\to \Pp(\Proc)$ associates to every node $n \in N$ 
the (non-empty) set $\dom(n)$ of processes participating in it. Nodes
are denoted as $m$ or $n$, and processes as $p$ or $q$; possibly with
indices. The set of possible outcomes of nodes is denoted $R$\footnote{$R$ stands for \emph{result};
we prefer to avoid the confusing symbol $O$.},
and we use $a,b,\ldots$ to range over its elements. Every node $n\in N$ has its set
of possible outcomes $\out(n) \subseteq R$.

The control flow in a negotiation diagram is determined by a partial transition function
$\d:N\times R\times P\act{\cdot} \Pp(N)$, telling that after the
outcome $a$ of node $n$, process $p\in\dom(n)$ is
ready to participate in any of the nodes in the set
$\d(n,a,p)$.  So for every $n'\in\d(n,a,p)$ we have $p\in
\dom(n')\cap\dom(n)$, and for every $n$, $a\in \out(n)$ and
$p\in\dom(n)$ the result $\d(n,a,p)$ is defined. Observe that nodes may
have one single participant process, and/or have one single
outcome. A \emph{location} is a pair $(n,a)$ such that $a \in \out(n)$,
and we define its domain as $\dom(n)$.

\begin{figure}[ht]
\centerline{\scalebox{0.70}{\begin{tikzpicture}
\vnego[ports=2,id=n0,spacing=2.0]{0,0}
\node[above left = 0.5cm and -0.1 cm of n0_P0, font=\large] {$n_0$};
\node[left = 0.5cm of n0_P1, font=\large] {$D_1$};
\node[left = 0.5cm of n0_P0, font=\large] {$D_2$};
\vnego[ports=1,id=n1]{2.0,2.0}
\node[above = 0.2cm  of n1_P0, font=\large] {$n_1$};
\vnego[ports=1,id=n2]{2.0,0}
\node[below  = 0.2cm of n2_P0, font=\large] {$n_2$};
\vnego[ports=1,id=n3]{4.0,2.0}
\node[above = 0.2cm of n3_P0, font=\large] {$n_3$};
\vnego[ports=2,id=n4,spacing=2.0]{6,0}
\node[above left = 0.5cm and -0.1 cm of n4_P0, font=\large] {$n_4$};
\vnego[ports=1,id=n5]{7.0,-1.0}
\node[below = 0.2cm of n5_P0, font=\large] {$n_5$};
\vnego[ports=1,id=n6]{9.0,-1.0}
\node[below = 0.2 cm of n6_P0, font=\large] {$n_6$};
\vnego[ports=2,id=n7,spacing=2.0]{10,0}
\node[above left = 0.5cm and -0.1 cm of n7_P0, font=\large] {$n_7$};

\pgfsetarrowsend{latex}
\draw (n0_P0) -- (n2_P0) node [above,midway]{{\it reg}};
\draw (n0_P1) -- (n1_P0) node [above,midway]{{\it reg}};
\draw (n1_P0) -- (n3_P0) node [above,midway]{{\it send}};
\draw (n2_P0) -- (n4_P0) node [above,midway]{{\it eval}};
\draw (n3_P0) to [bend left = 20] node [above,midway]{{\it tout}} (n4_P1);
\draw (n3_P0) to [bend right = 20] node [below,midway]{{\it rec}} (n4_P1);
\draw (n4_P1) to [bend left = 15] node [above,midway]{{\it npr}} (n7_P1);
\draw (n4_P1) to [bend right = 15] node [below,midway]{{\it pr}} (n7_P1);
\draw (n4_P0) -- (n7_P0) node [above,midway]{{\it npr}};
\draw (n4_P0) -- (n5_P0) node [left,near end]{{\it pr}};
\draw (n5_P0) to [bend left = 20] node [above,midway]{{\it done}} (n6_P0);
\draw (n6_P0) to [bend left = 20] node [below,midway]{{\it nOK}} (n5_P0);
\draw (n6_P0) -- (n7_P0) node [right,near start]{{\it OK}};
\end{tikzpicture}}}
\caption{\small{A negotiation diagram with two processes.}}
\label{fig:example1}
\end{figure}
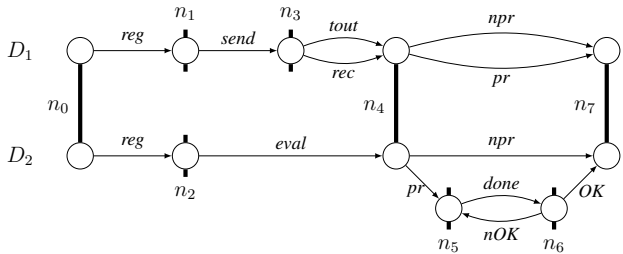


\begin{example}
Figure \ref{fig:example1} shows a negotiation diagram for (a slight
modification of) the well-known insurance claim example of
\cite{aalst} (see also Fig. 2 of \cite{DBLP:conf/fase/EsparzaH16} for
a workflow Petri net model). The diagram describes a workflow for
handling insurance claims by an insurance company with two departments
$D_1$ and $D_2$. The processes of the negotiation are $D_1$ and
$D_2$. The nodes $n_0, n_4, n_7$ have domain $\{D_1, D_2\}$; $n_1$ and
$n_3$ have domain $D_1$, and $n_2, n_3, n_6$ have domain $D_2$. After
the claim is {\it reg}istered, outcome \emph{reg} involves both
processes, $D_1$ {\it send}s a questionnaire to the client, and concurrently $D_2$ makes a first {\it eval}uation of the claim. After the client's answer is {\it rec}eived or a time-out occurs
(outcome {\it tout}), both departments decide together at node
$n_4$ whether to {\it pr}ocess the claim or not. In both cases $D_1$ has nothing further to do, and moves to the final node $n_7$. If the decision is to process the claim, then $D_2$ moves to $n_5$, and the claim is processed, possibly several times, until a satisfactory result is achieved (outcome {\it OK}), after which $D_2$ also moves to $n_7$. 
\end{example}

A \emph{configuration} of a negotiation diagram is a function $C:\Proc\to
\Pp(N)$ mapping each process $p$ to the set of nodes in which
$p$ is ready to engage. A node $n$ is \emph{enabled} in a 
configuration $C$ if $n\in C(p)$ for every $p\in \dom(n)$, that is, if all 
processes that participate in $n$ are ready to proceed with it. A configuration 
is a \emph{deadlock} if it has no enabled node. If node $n$ is enabled in $C$, and $a$ is an outcome
of $n$, then we say that location $(n,a)$ can be \emph{executed}, and its 
execution produces a new configuration $C'$ given by $C'(p)=\d(n,a,p)$ 
for $p\in\dom(n)$ and $C'(p)=C(p)$ for $p\not \in \dom(n)$.  We denote this by
$C\act{(n,a)}C'$. For example, in Figure~\ref{fig:example1} we have 
$C\act{(n_0,{\it reg})}C'$ for $C(D_1) = \{n_0\}= C(D_2)$ and
$C'(D_1) = \{n_3\}, C'(D_2)=\{n_2\}$.

A~\emph{run} of a negotiation diagram
$\Nn$ from a configuration $C_1$ is a finite or infinite sequence
of locations $w=(n_1,a_1)(n_2,a_2)\dots$ such that there are configurations
$C_2,C_3,\dots$ with 
\begin{equation*}
  C_1\act{(n_1,a_1)} C_2\act{(n_2,a_2)} C_3\dots
\end{equation*}
We denote this by $C_1\act{w}$, or $C_1\act{w}C_k$ if the sequence
is finite and finishes with $C_k$.
In the latter case we say that $C_k$ is \emph{reachable from $C_1$ on $w$}.
We simply call it \emph{reachable} if $w$ is irrelevant, and write
$C_1\act{*}C_k$. 

Negotiation diagrams come equipped with two distinguished \emph{initial} and
\emph{final} nodes $n_\init$ and $n_\fin$, in which {\em all} processes in
$\Proc$ participate. The \emph{initial and final
configurations} $\Cinit$, $\Cfin$ are given by $\Cinit(p)=\set{n_\init}$
and $\Cfin(p)=\set{n_\fin}$ for all $p\in \Proc$. 
A run is \emph{successful} if it starts in $\Cinit$ and ends in
$\Cfin$.  We assume that every node  (except  for
$n_\fin$) has at least one outcome. In Figure~\ref{fig:example1},
$n_\init = n_0$ and $n_\fin = n_7$.

A negotiation diagram $\Nn$ is \emph{sound} if every partial run 
starting at $\Cinit$ can be completed to a successful run. 
If a negotiation diagram has no infinite runs, then it is sound if{}f it has no
reachable deadlock configuration. The negotiation diagram of Figure~\ref{fig:example1}
is sound.

Process $p$ is \emph{deterministic} in a negotiation diagram $\Nn$ if for
every $n\in N$, and $a\in R$, the set $\d(n,a,p)$ of possible successor nodes
is either a singleton or the empty set.
A negotiation diagram is \emph{deterministic} if every process $p\in\Proc$ is
deterministic. The negotiation diagram of Figure~\ref{fig:example1} is deterministic.

The \emph{graph of a negotiation diagram} has $N$ as set of vertices, 
and there is an edge $n\act{p,a} n'$ if{}f $n'\in \d(n,a,p)$. Observe
that $p\in \dom(n)\cap\dom(n')$.

A negotiation diagram is \emph{acyclic} if its graph is
acyclic. Acyclic negotiation diagrams
cannot have infinite runs, so as mentioned above,
soundness is equivalent to deadlock-freedom. 



\subsection{Static analysis frameworks}
\label{subsec:frameworks}

Let $(D, \sqcup, \sqcap, \sqsubseteq, \bot, \top)$ be a complete lattice.
A function $f \colon D \rightarrow D$ is {\em monotonic} if $d \sqsubseteq d'$ implies $f(d) \sqsubseteq f(d')$, and distributive if $f(\bigsqcap D') = \bigsqcap \{ f(d) \mid d \in D'\}$. 

An {\em analysis framework}\footnote{In \cite{nielson} this is called
  a monotone and distributive framework.} of a negotiation diagram is
a lattice together with a mapping $\sem{\_}$ that assigns to each
outcome $\ell$ a monotonic and distributive function $\sem{\ell}$ in
the lattice. Abusing language, we use $\sem{\_}$ to denote a framework.

A negotiation diagram has two kinds of nondeterminism, one that picks a node among the ones enabled at a configuration, and a second kind which picks an outcome. We distinguish the two by letting a scheduler to decide the first kind. This is an important design choice, motivated
by modeling issues: In distributed systems, one often has information about how the outcomes are picked, but not about the way nondeterminism due to concurrency is resolved. In particular, one may have probabilistic information about the former, but not about the latter. This point has been discussed in detail by Katoen {\it et al.\!} \cite{DBLP:conf/acsd/Katoen12},~\cite{DBLP:conf/apn/EisentrautHK013}, who also advocate the separation of the two kinds of nondeterminism.

A {\em scheduler} of $\Nn$ is a partial 
function $S$ that assigns to every run $\Cinit \act{w} C$ a node $S(w)$ enabled at $C$, if it exists. 
A finite initial run $w= \ell_1 \cdots \ell_k$, where $\ell_i = (n_i, a_i)$,
is {\em compatible} with $S$ if $S(\ell_1 \cdots \ell_i)=
n_{i+1}$ for every $1 \leq i \leq k-1$.

For example, a scheduler for the negotiation diagram in Figure~\ref{fig:example1} can give preference to $n_1$ and $n_3$ over $n_2$. The successful runs compatible with this scheduler are given by the regular expression (omitting the nodes of the locations)
$\mathit{reg} \, \mathit{send} \left(\mathit{tout}|\mathit{rec}\right) \mathit{eval} 
\left(\mathit{npr}|\mathit{pr} (\mathit{done} \, \mathit{nOK})^*\mathit{done} \mathit{OK}\right)$.

The abstract semantics of a finite  run $w = \ell_1 \cdots \ell_k$ is 
the function $\sem{w}:= \sem{\ell_k} \circ \sem{\ell_{k-1}} \circ  \cdots \circ \sem{\ell_1}$. 
The abstract semantics of $\Nn$ with respect to a scheduler $S$ is the function $\sem{\Nn,S}$ defined by
\begin{eqnarray*}
  \sem{\Nn,S} &=& \bigsqcup \; \big\{\; \sem{w} \mid \text{$w$ is a
  successful run}\\
&& \text{ of $\Nn$ compatible with $S$} \big\}
\end{eqnarray*}
where the extension of $\sqcup$ to functions is defined pointwise.

The \emph{abstract semantics} $\sem{\Nn}$ of $\Nn$ is defined as
either $\bigsqcap  \big\{ \, \sem{\Nn,S} \mid \mbox{$S$ is a scheduler
  of $\Nn$} \, \big\}$, or as $\bigsqcup \big\{ \, \sem{\Nn,S} \mid
\mbox{$S$ is a scheduler of $\Nn$} \, \big\}$, depending on the application.

In classical static analysis, analysis frameworks are over
flow-graphs, instead of negotiation diagrams \cite{nielson}. Flow-graphs
describe sequential programs. Loosely speaking, a flow-graph is a
graph whose nodes are labeled with program points, and whose edges are
labeled with program instructions (assignments or guards). The mapping
$\sem{\_}$ assigns to an edge the relation describing the effect of
the assignment or guard on the program variables.  We can see a
flow-graph as a degenerate negotiation diagram in which all nodes have one single process. In this case every reachable
configuration enables at most one node, and so there is a unique
scheduler. 
So, in this case, the abstract semantics of a flow-graph is the
standard ``Merge Over all Paths'' (MOP), defined by $\sem{\Nn} = \bigsqcup
\big\{\, \sem{w} \mid \mbox{ $w$ is a successful run}\, \}
$.\footnote{Some classical
literature uses $\bigsqcap$ instead of $\bigsqcup$ and speaks of the
``Meet Over all Paths'', but other standard texts, e.g. \cite{nielson}, use $\bigsqcup$. }

Several interesting
analyses are instances of our framework. \label{sec:examples}

\subsubsection{Input/output semantics} Let $V$ be a set of variables
and $Z$ the set of values.
A {\em valuation} is a function $V\rightarrow Z$, and ${\it Val}$ denotes the 
set of all valuations. An element of $D$ is a set $d \subseteq {\it Val}$. The join and meet lattice operations are set union and intersection. 
For each location $\ell=(n,a)$, the function $\sem{\ell}$ describes for each input valuation 
$v \in {\it Val}$ the set of output valuations $\sem{\ell}(\{v\})$ that are possible if $n$ ends with 
result $a$. For any 
set of valuations the function is defined by $\sem{\ell}(V) = \bigcup_{v \in V} \sem{\ell}(\{v\})$.
The semantics $\sem{\Nn}$ is the relation that assigns to every initial valuation
the possible final valuations after a successful run. 

\subsubsection{Detection of anti-patterns}
Actions in business processes generate, use, modify, and delete resources
(for example, a document can be created by a first department,
read and used by a second, and classified as confidential by a third).
Anti-patterns are used to describe runs that do not correctly access resources; for example,
a resource is used before it is created, or a resource is created and then never used.
Examples of anti-patterns can be found in \cite{DBLP:conf/caise/TrckaAS09}. They can be easily formalized as analyses frameworks. Consider for example two locations $\ell_1$ and $\ell_2$ that
generate a resource, and a set $K$ of locations that delete it. We
wish to know if a given deterministic sound negotiation diagram has a
successful run that belongs to  
\begin{equation*}
\label{eq:eq1}
L=\Ll^*\ell_1(\bK)^*\ell_2\Ll^*  
\end{equation*}
\noindent where $\bK$ denotes the set of locations not in $K$. 
In other words, is there a scenario where a resource is generated twice
without deleting it in between.

To encode this problem in our static analysis framework, 
we take $D=\set{0,1,2}$ with the natural order together with $\min,\max$ as
$\sqcap$ and $\sqcup$, respectively. 
Intuitively, $0$ says that the sequence does not have a suffix of the
form $\ell_1(\bK)^*$, $1$ says that it has such a suffix, and $2$ that
it has a subword $\ell_1(\bK)^*\ell_2$.
The semantics of a location is a monotone and distributive function
from $D$ to $D$ reflecting this intuition:

\begin{align*}
  \sem{\ell_1}(x)=&\begin{cases}
    2 & \text{if $x=2$}\\
    1 & \text{otherwise}
  \end{cases}
  \hspace{0.3cm}
  \sem{\ell_2}(x)=\hspace{-0.3cm}&
                  \begin{cases}
                    0 & \text{if $x=0$}\\
                    2 & \text{otherwise}
                  \end{cases}
\end{align*}
\begin{align*}
  \sem{\ell}(x)=&\begin{cases}
    x & \text{if $\ell\in \bK$}\\
    2 & \text{if $\ell\in K$ and $x=2$}\\
    0 & \text{if $\ell\in K$ and $x=0,1$}
  \end{cases}
\end{align*}

\subsubsection{Minimal/maximal  expected cost} We let $D = \{ (p, c) \mid p \in \mathbb{R}^+_0, c \in \mathbb{R} \}$, 
where we interpret $p$ as a probability and $c$ as a cost. We take
$(p_1, c_1) \sqcup (p_2, c_2) = (p_1+p_2, c_1 p_1 + c_2
p_2)$ and $(p_1, c_1) \sqsubseteq (p_2, c_2)$ if $p_1 \leq p_2$ and $c_1 \leq c_2$.

We define a function ${\it Prob} \colon N \times R \rightarrow [0,1]$ such that ${\it Prob}(n, a)=0$ if
$a \notin {\it out}(n)$, and $\sum_{a \in R} {\it Prob}(n, a) =1$ for every $n \in N$. Intuitively, 
${\it Prob}(n, a)$ is the probability that node $n$ yields the outcome $a$. We also define
a cost function ${\it Cost} \colon N \times R \rightarrow \mathbb{R}$ that assigns to each result a cost.

Let $\sem{\ell}( (p, c) ) = \big( p \cdot {\it Prob}(\ell), c + {\it Cost}(\ell) \big)$. Then $\sem{\Nn,S}(1, 0)$ gives the expected cost of $\Nn$ under the scheduler $S$ (which may be infinite)
and $\sem{\Nn}(1,0)$ is the minimal/maximal expected cost. 

\subsubsection{Best/worst-case execution time} 
Let $\mathbb{R}^+_0$ denote the nonnegative reals. A {\em time valuation} is a function 
$v \colon {\it Proc} \rightarrow \mathbb{R}^+_0 \cup \{ \infty \}$ that assigns
 to each process $p$ a time $v(p)$, intuitively corresponding to the time that the 
process has needed so far. The elements of $D$ are time valuations, 
with $(v \sqcup v')(p) = \max\{ v(p),v'(p) \}$ for 
every process $p$, and $v \sqsubseteq v'$ if $v(p) \leq v'(p)$ for every process $p$, 

We assign to each outcome $\ell=(n,a)$ and to each process $p \in \dom(n)$  the time $t_{\ell,p}$ that 
$p$ needs to execute $a$. The semantic function $\sem{\ell}$ is given by $\sem{\ell}(v) = v'$, where 
$$v'(p) =  \left\{ 
\begin{array}{cl}
v(p) &  \mbox{ if $p \notin \dom(n)$ } \\ 
\displaystyle\max_{p' \in \dom(n)} v(p') + t_{\ell,p} &  \mbox{ if $p \in \dom(n)$ }  
\end{array} 
\right.$$

This definition reflects that all processes in $\dom(n)$ must wait until all of them are ready, and  
then we add to them the time they need to execute $\ell$. Since the initial and final atoms involve all processes,  
the abstract semantics $\sem{w}$ of a successful run has the form  
$\sem{w}(v) = (\max_{p \in \Proc} v(p)  + t_w(p))_{p \in\Proc}$, where $t_w(p)$ is  
the time process $p$ needs to execute $w$. In particular,
we have $\sem{w}(0) = t_w$. 
Then $\sem{\Nn,S}(0)$ gives the best-case execution time for a scheduler $S$, and $\sem{\Nn}(0)$
the infimum/ supremum over the times for each scheduler.

\subsection{Maximal fixed point of an analysis framework}

It is well-known that for sequential flow-graphs the MOP of an analysis framework coincides with the \emph{Maximal Fixed Point} of the framework, or \emph{MFP}. The MOP is the least fixed point of a set of linear 
equations over the lattice, having one equation for each node of the flow-graph\footnote{Again, the name ``maximal'' has historical reasons.}. The least fixed point can 
be approximated by means of Kleene's theorem, and computed exactly in a number of cases, including 
the case of lattices satisfying the ascending chain condition, but also others. For example, the lattice
for the expected cost of a flow-graph does not satisfy the ascending chain condition, but yields a set
of linear fixed point equations over the rational numbers, which can be solved using standard techniques.

In the concurrent case, the correspondence between MOP and MFP is more
delicate. Given a scheduler $S$, we can construct the reachability
graph of the negotiation diagram, corresponding to the runs compatible with $S$. If the graph is finite (for instance, this is always the case if the scheduler is memoryless, i.e., the node selected by the scheduler 
to extend a run depends only on the configuration reached by the run), then $\sem{\Nn,S}$ can be computed as the MFP of this graph, seen as a sequential flow-graph. The corresponding set of linear fixed point equations has one equation for each configuration of the graph. However, this approach has two problems: 
\begin{itemize}
\item[(a)] The number of schedulers is infinite, and non-memoryless schedulers may generate an infinite reachability graph;
so we do not obtain an algorithm for computing $\sem{\Nn}$.
\item[(b)] Even for memoryless schedulers, the size of the reachability graph may grow exponentially in the size of the negotiation diagram. So the algorithm for computing $\sem{\Nn,S}$ needs exponential time, also for lattices with only two elements.
\end{itemize}

In the remaining of this section we introduce a 
\emph{Mazurkiewicz-invariant analysis framework},
and show that for this framework and for the class of sound
deterministic negotiation diagrams
we can overcome these two obstacles. In Section \ref{subsec:mazur} we solve problem (a): We show that 
$\sem{\Nn} = \sem{\Nn,S}$ for every scheduler $S$ (Theorem \ref{thm:mazur} below), and so that it 
suffices to compute $\sem{\Nn,S}$ for a scheduler 
$S$ of our choice. 
In the rest of the paper we solve problem (b): We give a 
procedure that computes $\sem{\Nn}$ without ever constructing the
reachability graph of the negotiation diagram. The procedure reduces the
problem of computing the MOP to computing the MFP of a polynomial
number of (sequential) flow-graphs, each of them of size at most
linear in the size of the negotiation diagram. This shows that the MOP
can be computed in polynomial time for a sound deterministic
negotiation diagram if{}f it can be computed in polynomial time for a sequential flow-graph.

If we remove any of the three conditions of our setting
(Mazurkiewicz-invariance, soundness, determinism), then there exist
frameworks with the following property: deciding if the MOP has a
given value is polynomial in the sequential case (i.e., for flow
graphs of sequential programs), but  at least NP-hard for
negotiations. 

We sketch the NP-hardness proof for deterministic, sound
negotiations where the framework is not
Mazurkiewicz-invariant. Consider the NP-hard problem 1-in-3-SAT, where
it is asked if for a CNF formula with $k$ variables and $m$ clauses
there is an assignment that sets exactly one literal true in each
clause. We have $k$ processes $p_1,\ldots,p_k$, one for each variable
$x_i$. We describe the (acyclic) deterministic, sound negotiation
$\Nn$. 
The initial node of $\Nn$ has a single outcome, that leads process $p_i$
to a node with domain 
$\set{p_i}$. From there $p_i$ branches for the two possible
values for $x_i$. The ``true'' branch is a line with outcomes
corresponding to clauses that become ``true'' when $x_i$ is true, and
analogously for the ``false'' branch - in both cases respecting the
order of clauses. Let us denote by $C_j$ an outcome corresponding to
the $j$-th clause. The lattice $D$ has elements $\bot < 1,  \dots,
(m+1) < \top$; so there are $m+1$ pairwise incomparable elements
together with $\bot$ and $\top$. 
For every node $n$ and clause $C_j$ we set $\sem{(n,C_j)}(j)=j+1$,
$\sem{(n,C_j)}(\top)=\top$ and 
$\sem{(n,C_j)}(d)=\bot$ otherwise. 
Moreover $\sem{\ell}$ is the identity function for all other locations $\ell$.
This framework is monotonic and distributive.
For a run $w$ of $\Nn$ we have  
$\sem{w}(1)=m+1$ if the subsequence of clauses appearing in $w$ is exactly
$C_1\dots C_m$; otherwise $\sem{w}(1)=\bot$.
Since $\sem{\Nn}$ is the $\bigsqcup$ over all runs, we get that
the 1-in-3-SAT instance is positive iff $\sem{\Nn}(1)=m+1$.
In the sequential case, the analysis can be done in polynomial time,
since the lattice $D\to D$ has the height $\Oo(m)$.

The proofs of the other two cases (where determinism or soundness are removed)
follow easily from a simple construction shown in Theorem 1 of
\cite{negI}: Given a deterministic linearly 
bounded automaton $A$ and a word $w$, one can construct in polynomial
time a negotiation 
$\Nn_A$ having one single run that simulates the execution of $A$ on
the input $w$. 
This gives PSPACE-hardness for essentially all non-trivial frameworks,
Mazurkiewicz invariant or not.


\subsection{Mazurkiewicz-invariant analysis frameworks}
\label{subsec:mazur}

We introduce the notion of Mazurkiewicz equivalence between runs (also
called trace equivalence in the literature~\cite{book-of-traces}). 
Two equivalent runs started in the same
configuration will end up in the same configuration. 
We call an analysis framework Mazurkiewicz-invariant if the values of
equivalent runs are the same. 
We then show that the MOP of a Mazurkiewicz-invariant analysis is independent of the
scheduler.

\begin{definition}
\label{def:mazur}
Two nodes $n$, $m$ of a negotiation diagram are \emph{independent} if $\dom(n) \cap \dom(m) = \emptyset$.
Two \emph{locations are independent} if their nodes are independent.
Given two finite sequences of locations $w_1, w_2$, we write $w_1 \sim w_2$ if 
$w_1 = w \ell_1 \ell_2 w'$ and $w_2 = w \ell_2 \ell_1 w'$ for independent 
locations $\ell_1, \ell_2$. \emph{Mazurkiewicz equivalence}, denoted
by $\equiv$, is the reflexive-transitive closure of $\sim$. 
\end{definition}

The next lemma says that Mazurkiewicz equivalent runs have the same behaviors.

\begin{restatable}{lemma}{lemmaMazur}
\label{lem:mazur}
If $C_1\act{w}C_2$ and $v \equiv w$, then $C_1\act{v}C_2$. In
particular, if $w$ is a (successful) run, then $v$ is.
\end{restatable}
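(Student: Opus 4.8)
The plan is to reduce the statement to a single commutation step and then lift it inductively. Since $\equiv$ is by definition the reflexive-transitive closure of the one-swap relation $\sim$, it suffices to treat the case $v \sim w$, say $w = u\,\ell_1\ell_2\,u'$ and $v = u\,\ell_2\ell_1\,u'$ with $\ell_1 = (n_1,a_1)$ and $\ell_2 = (n_2,a_2)$ independent, i.e.\ $\dom(n_1)\cap\dom(n_2)=\emptyset$; the general statement then follows by an easy induction on the number of swaps witnessing $v \equiv w$. For the single-swap case I would factor the run $C_1 \act{w} C_2$ through its common prefix and suffix: since the intermediate configurations of a run are uniquely determined, take the configurations with $C_1 \act{u} D$, then $D \act{\ell_1} D_1 \act{\ell_2} E$, and finally $E \act{u'} C_2$. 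Everything then reduces to establishing that $D \act{\ell_2} D_2 \act{\ell_1} E$ holds with the \emph{same} target $E$, since this yields $C_1 \act{u} D \act{\ell_2} D_2 \act{\ell_1} E \act{u'} C_2$, i.e.\ $C_1 \act{v} C_2$.

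The heart of the argument is a local commutation driven entirely by domain-disjointness, and it has two parts: enabledness in the swapped order, and equality of the reached configuration. For enabledness, I would first check that $\ell_2$ is enabled already at $D$: every $p \in \dom(n_2)$ satisfies $p \notin \dom(n_1)$, so executing $\ell_1$ leaves the corresponding component untouched, giving $n_2 \in D_1(p) = D(p)$; as $\ell_2$ is enabled at $D_1$ this shows it is enabled at $D$. The symmetric observation shows $\ell_1$ is enabled at $D_2$. For the reached configuration, I would verify $D_2 \act{\ell_1} E$ by a case split on each process $p$: if $p \in \dom(n_1)$ then $p \notin \dom(n_2)$ and both orders produce $\d(n_1,a_1,p)$; if $p \in \dom(n_2)$ both produce $\d(n_2,a_2,p)$; and if $p$ lies in neither domain both leave $D(p)$ unchanged. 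Hence the two orders reach the identical configuration $E$.

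The ``in particular'' clause is then immediate by instantiating $C_1 = \Cinit$ and $C_2 = \Cfin$. I do not anticipate a serious obstacle; the only subtlety is to argue enabledness in the swapped order rather than merely equality of final configurations, since a priori $\ell_2$ need not be enabled at $D$. Disjointness of $\dom(n_1)$ and $\dom(n_2)$ is exactly what rules this out, because neither execution can modify or read the components of the configuration on which the other depends.
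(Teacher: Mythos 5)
Your proof is correct and follows essentially the same route as the paper's: the paper's own proof simply declares the single-swap commutation ($C \act{\ell_1\ell_2} C'$ implies $C \act{\ell_2\ell_1} C'$ for independent locations) to be ``easy to see'' and then lifts it to $\equiv$ by transitivity, exactly as you do. Your write-up just fills in the details the paper omits, namely the enabledness argument and the process-by-process case split showing both orders reach the same configuration.
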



Interestingly Mazurkiewicz equivalence behaves very well with respect
to schedulers.

\begin{restatable}{lemma}{lemmaMazurii}
\label{lem:mazur2}
Let $\Nn$ be a deterministic negotiation diagram and let $S$ be a scheduler of
$\Nn$. For every successful run $w$ there is exactly one successful
run $v\equiv w$ that is compatible with $S$.
\end{restatable}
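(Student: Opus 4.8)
The plan is to regard every run equivalent to $w$ as a linearization of one partially ordered multiset---the Mazurkiewicz trace $[w]$---and to show that the scheduler $S$ singles out exactly one of these linearizations. The starting observation, which underlies everything, is a determinism property: along any prefix of a successful run begun at $\Cinit$, each process $p$ is ready for exactly one node. Indeed $\Cinit(p)=\set{n_\init}$ is a singleton, and firing a location $(n,a)$ replaces $C(p)$ by $\d(n,a,p)$, which is a singleton or empty; an empty value would trap $p$ forever and prevent reaching $\Cfin$, so on a successful run $C(p)$ remains a singleton $\set{\nu_C(p)}$. Hence a node $n$ is enabled at $C$ iff $\nu_C(p)=n$ for all $p\in\dom(n)$. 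I would also record two trace facts: by Lemma~\ref{lem:mazur} every run $\equiv w$ starts at $\Cinit$ and ends at $\Cfin$, and such runs are precisely the linearizations of $[w]$ \cite{book-of-traces}; and two occurrences at the same node $n$ share the nonempty domain $\dom(n)$, so they are dependent and therefore linearly ordered in $[w]$.

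The technical heart is the following claim. Let $T$ be the set of occurrences of $[w]$ not yet fired after a prefix reaching configuration $C$, with $T\neq\emptyset$, and let $n$ be enabled at $C$; then $T$ contains a unique minimal occurrence at node $n$. To prove it I would fix $p\in\dom(n)$: since $C\neq\Cfin$ the process $p$ must still reach $n_\fin$, so $T$ contains a first occurrence $e_p$ involving $p$; as the occurrences involving $p$ form a chain and the fired ones are an initial segment of it, $p$ still sits at $n$ when $e_p$ is fired, which forces $e_p$ to be at node $n$. Because occurrences at $n$ are linearly ordered, all the $e_p$ coincide with the $\preceq$-least occurrence $e^\ast$ at node $n$ in $T$, and $e^\ast$ is minimal in $T$: any $e'\prec e^\ast$ in $T$ would share some $q\in\dom(n)$ with $e^\ast$ and precede $e_q=e^\ast$, a contradiction.

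Existence then follows by a greedy construction. Starting from $\Cinit$ with $T=[w]$, I repeatedly set $n:=S(\text{prefix built so far})$---which is defined and enabled while $T\neq\emptyset$, since the node of any minimal occurrence of $T$ is enabled at $C$ by the reasoning above---fire the unique minimal occurrence of $T$ at node $n$, and delete it from $T$. Each step respects the trace order, so the resulting $v$ is a linearization of $[w]$, hence $v\equiv w$ is a successful run, and it is compatible with $S$ by construction. For uniqueness I would show by induction that any two successful runs $v,v'\equiv w$ compatible with $S$ agree position by position. The first location is forced independently of $S$: the first occurrence of $w$ is at $n_\init$, the only node enabled at $\Cinit$, and since its domain is all of $\Proc$ it lies below every other occurrence of $[w]$; so it is the unique minimal element and every linearization must begin with it. If $v,v'$ agree on their first $i\ge1$ locations, they reach the same $C$ and leave the same remaining trace $T$; compatibility forces their $(i{+}1)$-st node to the common value $S(\text{prefix})$, and the claim forces the occurrence to be the unique minimal one at that node. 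Therefore $v=v'$.

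The step I expect to be the main obstacle is the technical claim. The difficulty is that $S$ chooses an arbitrary enabled node with no reference to the trace, and one must show that this choice always has a firable remaining occurrence in $[w]$ \emph{and} that the occurrence is uniquely determined. This is exactly where both determinism (so that process positions are singletons and ``the node where $p$ sits'' is well defined) and the fact that co-located occurrences are linearly ordered are indispensable.
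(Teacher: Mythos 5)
Your proof is correct and matches the paper's argument in substance: your key claim---that determinism pins every process of $\dom(n)$ to the scheduled enabled node $n$, so the unfired part of the trace contains a unique minimal occurrence at $n$ which can be fired next---is exactly the mechanism behind the paper's existence proof, which factors the remaining suffix as $u\,\ell\,u'$ with $\ell$ at node $n$ and no location of $u$ using a process of $\dom(n)$, and then commutes $\ell$ to the front. The only presentational difference is in the uniqueness half, where the paper argues more briefly that two distinct compatible runs equivalent to $w$ would first diverge at independent locations, which lie at distinct nodes, so the scheduler cannot be compatible with both; your position-by-position induction reaches the same conclusion from the same minimal-occurrence observation.
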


We observe that Lemma~\ref{lem:mazur2} may not hold for runs that are
not successful nor for non-deterministic negotiation  diagrams.


We can now define Mazurkiewicz invariant analysis frameworks, and
prove that they are independent of schedulers.

\begin{definition}
An analysis framework is \emph{Mazurkiewicz invariant} if 
$\sem{\ell_1} \circ \sem{\ell_2} = \sem{\ell_2} \circ \sem{\ell_1}$
for every two independent outcomes $\ell_1$, $\ell_2$.
\end{definition}

\begin{restatable}{theorem}{thmMazur}
\label{thm:mazur}
Let $\Nn$ be a negotiation diagram, and let $\sem{\_}$ be an analysis framework for 
$\Nn$. If $\Nn$ is deterministic and $\sem{\_}$ is Mazurkiewicz invariant, then 
$\sem{\Nn,S}=\sem{\Nn,S'}$ for every two schedulers $S, S'$, and so 
$\sem{\Nn} = \sem{\Nn,S}$ for every scheduler $S$.
\end{restatable}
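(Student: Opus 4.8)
The plan is to reduce the theorem to the combinatorial Lemma~\ref{lem:mazur2} together with one new ingredient: that a Mazurkiewicz-invariant framework assigns equal transformers to Mazurkiewicz-equivalent runs. First I would establish this ingredient as a claim: if $w$ and $v$ are runs with $w \equiv v$, then $\sem{w} = \sem{v}$. Since $\equiv$ is the reflexive-transitive closure of the single-swap relation $\sim$, and since equality of transformers is transitive, it suffices to treat the case $w \sim v$, say $w = u\,\ell_1\ell_2\,u'$ and $v = u\,\ell_2\ell_1\,u'$ with $\ell_1,\ell_2$ independent. Writing $g = \sem{u'}$ and $h = \sem{u}$ and unfolding the run semantics as a right-to-left composition, we get $\sem{w} = g \circ \sem{\ell_2} \circ \sem{\ell_1} \circ h$ and $\sem{v} = g \circ \sem{\ell_1} \circ \sem{\ell_2} \circ h$. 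These agree because independence of $\ell_1,\ell_2$ together with Mazurkiewicz invariance gives $\sem{\ell_1}\circ\sem{\ell_2} = \sem{\ell_2}\circ\sem{\ell_1}$, which proves the claim.

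Next I would compare the two joins defining $\sem{\Nn,S}$ and $\sem{\Nn,S'}$. Let $W_S$ and $W_{S'}$ denote the sets of successful runs compatible with $S$ and with $S'$, respectively. Applying Lemma~\ref{lem:mazur2} to each scheduler, every $\equiv$-class of successful runs contains exactly one element of $W_S$ and exactly one element of $W_{S'}$; mapping the former to the latter yields a bijection $\phi\colon W_S \to W_{S'}$ with $\phi(w) \equiv w$ for every $w$. By the claim, $\sem{\phi(w)} = \sem{w}$, so the two families $\{\sem{w} \mid w \in W_S\}$ and $\{\sem{v} \mid v \in W_{S'}\}$ consist of exactly the same transformers. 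Taking the join of each side gives $\sem{\Nn,S} = \sem{\Nn,S'}$.

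The last assertion is then immediate: $\sem{\Nn}$ is by definition either $\bigsqcap$ or $\bigsqcup$ of the family $\{\sem{\Nn,S} \mid S \text{ a scheduler}\}$ over the nonempty set of schedulers, and we have just shown this family is constant, so both its meet and its join equal the common value $\sem{\Nn,S}$. The genuine content beyond the cited lemmas lies in the first paragraph -- lifting commutation from single independent pairs to whole runs -- and in noticing that, because $\sem{\Nn,S}$ is a join, only the equality of the two value-sets $\{\sem{w}\}$ matters; the existence-and-uniqueness clause of Lemma~\ref{lem:mazur2} is exactly what secures this, and is the step I expect to carry the weight of the proof.
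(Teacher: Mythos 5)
Your proof is correct and follows essentially the same route as the paper: a bijection $\phi$ between $S$-compatible and $S'$-compatible successful runs obtained from Lemma~\ref{lem:mazur2}, combined with the fact that Mazurkiewicz invariance makes $\sem{\_}$ constant on $\equiv$-classes, so the two joins coincide. The only difference is that you spell out the lifting of commutation from single independent swaps to whole equivalent runs, a step the paper's proof treats as immediate.
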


It turns out that many interesting analysis frameworks are Mazurkiewicz
invariant, or Mazurkiewicz invariant under natural conditions. 
Let us look at the examples from Section~\ref{sec:examples}.

\medskip

\noindent {\bf The input/output framework} is Mazurkiewicz invariant if 
$\sem{\ell_1}(\sem{\ell_2}(\{v\})) = \sem{\ell_2}(\sem{\ell_1}(\{v\}))$ holds
whenever $\ell_1$ and $\ell_2$ are independent. This is not always the case, but holds when all
variables of $V$ are local variables. Formally, the set $V$ of variables is partitioned into sets $V_p$ of local variables for each process $p$. Further, 
$\sem{\ell}$ involves only the local variables of the processes involved in $\ell$: 
Letting $(v_\ell, v)$ denote a valuation of $V$, split into a valuation $v_\ell$ of the variables of the processes of $\ell$ and a valuation $v$ of  the rest, we have $\sem{\ell}(v_\ell, v) = (v'_\ell, v)$,
and $\sem{\ell}(v_\ell, v) = \sem{\ell}(v_\ell, v')$ for every $v_\ell, v, v'$.
\medskip

\noindent {\bf The anti-pattern framework} is not Mazurkiewicz
invariant. For example if we take some $\ell\in K$ independent of
$\ell_1$ then $\sem{\ell_1\ell\ell_2}\not=\sem{\ell\ell_1\ell_2}$. However, in 
Section \ref{sec:genkill} we will show that there is a Mazurkiewicz-invariant
framework for anti-patterns.\medskip

\noindent {\bf The minimal/maximal expected cost framework} is Mazurkiewicz invariant.
Indeed, it satisfies
$\sem{\ell_1} \circ \sem{\ell_2} =  \sem{\ell_2} \circ \sem{\ell_1}$ for all outcomes $\ell_1, \ell_2$,
independent or not. Further, by Theorem \ref{thm:mazur} the expected
cost is the same for every scheduler, and so the result of the
analysis is \emph{the} expected cost of the negotiation diagram.

\medskip

\noindent {\bf The best/worse case execution framework} is Mazurkiewicz invariant. 
Intuitively, the scheduler introduces an artificial linearization of
the nodes, which are 
however being executed in parallel. As in the previous case, the result of the analysis is the 
best-case/worst-case execution time of the negotiation diagram (if the
negotiation diagram is cyclic and the cycle non-zero time, then the worst-case execution time is infinite). 

\medskip

Our next goal is a generic algorithm for computing the MOP of
Mazurkiewicz-invariant frameworks for sound deterministic
negotiation diagrams. This will be done in Section~\ref{sec:compMOP}, but
before we will need some results on decomposing negotiation diagrams.


\section{Decomposing Sound Negotiation Diagrams}
\label{sec:decomp}

We associate with every node $n$ and every location $\ell$ of a sound
deterministic negotiation diagram $\Nn$ a ``subnegotiation''
$\Nnn$ and $\Nnl$, and prove that it is also sound. In Section \ref{sec:compMOP}
we use these subnegotiations to define an
analysis algorithm sound deterministic negotiation diagrams. We
illustrate the results of this section on the example of
Figure \ref{fig:decomp}. 

\begin{figure}[ht]
\centerline{\scalebox{0.70}{\def\hs{2.5}
\def\vs{2.0}
\begin{tikzpicture}
\vnego[ports=3,id=n0,spacing=\vs]{0,0}
\node[above left = 1.2cm and -0.1 cm of n0_P0, font=\large] {$n_0$};
\node[left = 0.5cm of n0_P2, font=\large] {$p_1$};
\node[left = 0.5cm of n0_P1, font=\large] {$p_2$};
\node[left = 0.5cm of n0_P0, font=\large] {$p_3$};
\vnego[ports=1,id=n1]{\hs,2*\vs}
\node[above = 0.2cm  of n1_P0, font=\large] {$n_1$};
\vnego[ports=2,id=n2,spacing=\vs]{\hs,0}
\node[above left = 0.5cm and -0.1cm of n2_P0, font=\large] {$n_2$};
\vnego[ports=1,id=n3]{2*\hs,\vs}
\node[below left = 0cm and -0.1cm of n3_P0, font=\large] {$n_3$};
\vnego[ports=1,id=n4]{2*\hs,0.0}
\node[above left = 0cm and -0.1cm of n4_P0, font=\large] {$n_4$};
\vnego[ports=1,id=n5]{2*\hs,1.5*\vs}
\node[above left = -0.1cm and -0.1cm of n5_P0, font=\large] {$n_5$};
\vnego[ports=1,id=n6]{2*\hs,-0.5*\vs}
\node[below left = -0.1cm and -0.1cm of n6_P0, font=\large] {$n_6$};
\vnego[ports=2,id=n7,spacing=\vs]{3*\hs,0}
\node[above left = 0.5cm and -0.1cm of n7_P0, font=\large] {$n_7$};
\vnego[ports=3,id=n8,spacing=\vs]{4*\hs,0}
\node[above right = 1.0cm and -0.1 cm of n8_P0, font=\large] {$n_8$};

\pgfsetarrowsend{latex}
\draw (n0_P2) -- (n1_P0) node [above,midway]{$a$};
\draw (n0_P1) -- (n2_P1) node [above,midway]{$a$};
\draw (n0_P0) -- (n2_P0) node [above,midway]{$a$};
\draw (n1_P0) -- (n8_P2) node [above,midway]{$a$};
\draw (n2_P1) -- (n3_P0) node [above,midway]{$a$};
\draw (n2_P0) -- (n4_P0) node [above,midway]{$a$};
\draw (n3_P0) -- (n7_P1) node [above,midway]{$a$};
\draw (n4_P0) -- (n7_P0) node [above,midway]{$a$};
\draw (n3_P0) to [bend right = 40] node [right,midway]{$b$} (n5_P0);
\draw (n4_P0) to [bend left = 40] node [right,midway]{$b$} (n6_P0);
\draw (n5_P0) to [bend right = 40] node [left,midway]{$a$} (n3_P0);
\draw (n6_P0) to [bend left = 40] node [left,midway]{$a$} (n4_P0);
\draw (n7_P1) -- (n8_P1) node [above,midway]{$a$};
\draw (n7_P0) -- (n8_P0) node [above,midway]{$a$};
\draw (n7_P1) to [bend right = 85] node [above,near start]{$b$} (n2_P1);
\draw (n7_P0) to [bend left = 85] node [above,near start]{$b$} (n2_P0);

\end{tikzpicture}}}
\caption{\small A negotiation diagram with three processes.}
\label{fig:decomp}
\end{figure}
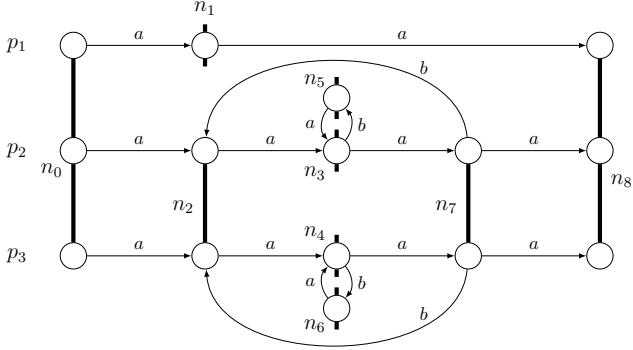

Intuitively, $\Nnn$ contains the nodes $n'$ such that $\dom(n') \subseteq \dom(n)$, 
with transitions inherited from $\N$, and $n$ as initial node. The non-trivial part is to define the final node
and show that $\Nnn$ is sound. Given a location $\ell = (n, a)$, the
negotiation $\Nnl$ contains the part of $\Nnn$ reachable by executions that start with $\ell$
and afterwards only use nodes with domains \emph{strictly} included in
$\dom(n)$. Figure \ref{fig:subnegn3} shows some of the subnegotiations we will obtain for 
some nodes and locations of Figure \ref{fig:decomp}. 

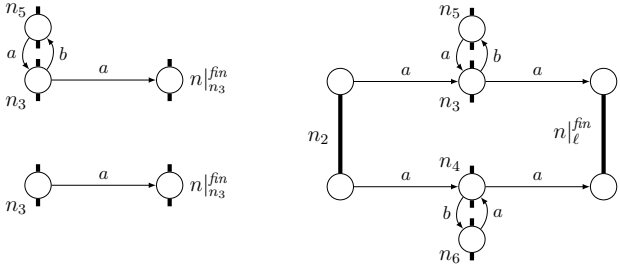
\begin{figure}[ht]
\raisebox{0.65cm}{\scalebox{0.70}{\def\hs{2.5}
\def\vs{2.0}
\begin{tikzpicture}
\begin{scope}
\vnego[ports=1,id=n3]{2*\hs,\vs}
\node[below left = 0cm and -0.1cm of n3_P0, font=\large] {$n_3$};
\vnego[ports=1,id=n5]{2*\hs,1.5*\vs}
\node[above left = -0.1cm and -0.1cm of n5_P0, font=\large] {$n_5$};
\vnego[ports=1,id=nf,spacing=\vs]{3*\hs,\vs}
\node[right = 0.0cm of nf_P0, font=\large] {$n|^\fin_{n_3}$};

\pgfsetarrowsend{latex}
\draw (n3_P0) -- (nf_P0) node [above,midway]{$a$};
\draw (n3_P0) to [bend right = 40] node [right,midway]{$b$} (n5_P0);
\draw (n5_P0) to [bend right = 40] node [left,midway]{$a$} (n3_P0);
\end{scope}

\begin{scope}[shift={(0,-2)}]
\vnego[ports=1,id=n3]{2*\hs,\vs}
\node[below left = 0cm and -0.1cm of n3_P0, font=\large] {$n_3$};
\vnego[ports=1,id=nf,spacing=\vs]{3*\hs,\vs}
\node[right = 0.0cm of nf_P0, font=\large] {$n|^\fin_{n_3}$};

\pgfsetarrowsend{latex}
\draw (n3_P0) -- (nf_P0) node [above,midway]{$a$};
\end{scope}

\end{tikzpicture}}} \hspace{0.5cm}\scalebox{0.70}{\def\hs{2.5}
\def\vs{2.0}
\begin{tikzpicture}
\vnego[ports=2,id=n2,spacing=\vs]{\hs,0}
\node[above left = 0.5cm and -0.1cm of n2_P0, font=\large] {$n_2$};
\vnego[ports=1,id=n3]{2*\hs,\vs}
\node[below left = 0cm and -0.1cm of n3_P0, font=\large] {$n_3$};
\vnego[ports=1,id=n4]{2*\hs,0.0}
\node[above left = 0cm and -0.1cm of n4_P0, font=\large] {$n_4$};
\vnego[ports=1,id=n5]{2*\hs,1.5*\vs}
\node[above left = -0.1cm and -0.1cm of n5_P0, font=\large] {$n_5$};
\vnego[ports=1,id=n6]{2*\hs,-0.5*\vs}
\node[below left = -0.1cm and -0.1cm of n6_P0, font=\large] {$n_6$};
\vnego[ports=2,id=n7,spacing=\vs]{3*\hs,0}
\node[above left = 0.5cm and -0.1cm of n7_P0, font=\large] {$n|^\fin_\ell$};

\pgfsetarrowsend{latex}
\draw (n2_P1) -- (n3_P0) node [above,midway]{$a$};
\draw (n2_P0) -- (n4_P0) node [above,midway]{$a$};
\draw (n3_P0) -- (n7_P1) node [above,midway]{$a$};
\draw (n4_P0) -- (n7_P0) node [above,midway]{$a$};
\draw (n3_P0) to [bend right = 40] node [right,midway]{$b$} (n5_P0);
\draw (n4_P0) to [bend right = 40] node [left,midway]{$b$} (n6_P0);
\draw (n5_P0) to [bend right = 40] node [left,midway]{$a$} (n3_P0);
\draw (n6_P0) to [bend right = 40] node [right,midway]{$a$} (n4_P0);
\end{tikzpicture}}
\caption{\small Subnegotiations $\Nn|_{n_3}$ (top left), $\Nn|_{(n_3,a)}$ (bottom left),  and $\Nn|_{(n_2,a)}$ (right) 
of the negotiation diagram of Figure \ref{fig:decomp}.
Nodes unreachable from the initial node are not shown.}
\label{fig:subnegn3}
\end{figure}

The rest of the section first presents a theorem 
showing the existence and uniqueness of some special configurations (Section~\ref{subsec:unique}), and then uses
it to define $\Nnn$ and $\Nnl$, and prove their soundness (Section \ref{subsec:subneg}).

\subsection{Unique maximal configurations}
\label{subsec:unique}

Given a node $m$ of a sound and deterministic negotiation, we prove the existence 
of a unique reachable configuration $I(m)$ enabling $m$ and only $m$. Then we show the following:
if we start from $I(m)$ as initial configuration, ``freeze'' the processes 
of $\Proc \setminus \dom(n)$, and let the processes of $\dom(n)$ execute maximally 
(i.e., until they cannot execute any node without the help of  
processes of $\Proc \setminus \dom(n)$), then we \emph{always} reach the same ``final''
configuration $F(m)$.  Additionally, given a location $\ell=(m, a)$ we show:
If from $I(m)$ we execute $\ell$ and let the processes of $\dom(n)$ 
proceed until no enabled node $n$ satisfies $\dom(n) \subset \dom(m)$, 
then again we always reach the same ``final'' configuration $F(\ell)$.

Let $X \subseteq \Proc$ be a set of processes. A sequence of locations $\ell_1,\dots,\ell_k$ is an \emph{$X$-sequence}
if the domains of all $\ell_i$ are included in $X$; it is a \emph{strict $X$-sequence} if 
moreover the domains of all $\ell_i$, but possibly $\ell_1$, are strictly included in $X$. We write 
(strict) $n$-sequence for  (strict) $\dom(n)$-sequence.

We write $n' \domleq n$ if $\dom(n') \subseteq
\dom(n)$, and $n' \domle n$ if $\dom(n') \subsetneq
\dom(n)$ (and similarly for $\ell' \domleq \ell, \ell' \domle
\ell$, $\ell \domleq n$ etc). 
Our goal is to prove:

\begin{restatable}{theorem}{thmUnique}
\label{thm:unique}
Let $m$ be a reachable node of a sound deterministic negotiation
diagram $\Nn$.
\begin{itemize}
\item[(i)] There is a unique reachable configuration $\first{m}$ of $\Nn$ that enables $m$, and no other node. 
\item[(ii)] There is a unique configuration $\last{m}$ such that
\begin{itemize}
\item $\last{m}$ is reachable from $\first{m}$ by means of an $m$-sequence, and
\item for every node $n$ enabled at $\last{m}$, $\dom(n)$ is not included in $\dom(m)$.
\end{itemize}
\item[(iii)] For every location $\ell$ of $m$ there is a unique configuration $\last{\ell}$ such that
\begin{itemize}
\item $\last{\ell}$ is reachable from $\first{m}$ by means of a strict
  $m$-sequence starting with $\ell$, and 
\item for every node $n$ enabled at $\last{\ell}$, $\dom(n)$ is not strictly included in $\dom(m)$.
\end{itemize}
\end{itemize} 
\end{restatable}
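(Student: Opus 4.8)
The plan is to prove the three parts together by induction on $|\dom(m)|$, with soundness and determinism as the workhorses and Lemma~\ref{lem:mazur} supplying confluence for independent steps. The basic engine is a \emph{diamond property}: if two distinct locations $\ell_1,\ell_2$ are enabled at a reachable configuration $C$, then either they are independent---in which case $C\act{\ell_1}C_1$, $C\act{\ell_2}C_2$ and, by Lemma~\ref{lem:mazur}, both reach a common $C_3$---or they share a participating process, which by determinism (that process occupies a single node) forces $\ell_1,\ell_2$ to be two outcomes of the \emph{same} node. Thus the only enabled steps that fail to commute are genuine outcome choices at one node.

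First I would treat existence. For (i), since $m$ is reachable there is a run $\Cinit\act{w}C$ with $m$ enabled at $C$; one then isolates $m$ by letting exactly the processes of $\dom(m)$ proceed from the appropriate earlier configuration. For (ii) and (iii), existence of a \emph{finite} maximal (strict) $m$-sequence is where soundness enters: starting from $\first{m}$, soundness completes any partial run to a successful one ending in $\Cfin$, and since $n_\fin$ has domain $\Proc$, unless $\dom(m)=\Proc$ the completion must eventually fire a node whose domain is not (strictly) included in $\dom(m)$. The prefix up to the first such synchronization is a finite $m$-sequence after which no node with domain $\subseteq\dom(m)$ (resp.\ $\subsetneq\dom(m)$) is enabled, giving a candidate for $\last{m}$ (resp.\ $\last{\ell}$); the case $\dom(m)=\Proc$ degenerates to $\last{m}=\Cfin$.

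The heart of the proof, and the main obstacle, is \emph{uniqueness}: that all maximal $m$-sequences from $\first{m}$ reach the \emph{same} configuration, despite outcome choices and internal cycles (as in the $n_3\!\leftrightarrow\!n_5$ loop of Figure~\ref{fig:decomp}). Confluence via Lemma~\ref{lem:mazur} reconciles independent interleavings, but it does \emph{not} by itself reconcile two runs that resolve an outcome choice differently. To close this gap I would argue by contradiction on a minimal counterexample: if two maximal $m$-sequences $u,v$ from $\first{m}$ reached distinct configurations $C_u\neq C_v$, then some process $p\in\dom(m)$ would be waiting at two different boundary nodes $n_u\neq n_v$ (the processes outside $\dom(m)$ are frozen and hence agree in $C_u$ and $C_v$). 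Completing both to successful runs by soundness, and applying the induction hypothesis to the strictly smaller domains met along $u$ and $v$, I would align the two executions at their first point of divergence and derive, using determinism together with Lemma~\ref{lem:mazur}, that the two putative waiting nodes must in fact coincide---contradicting $n_u\neq n_v$.

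Finally, the three parts feed one another, so I would interleave the inductive steps. The unique configurations $\last{\ell}$ and $\last{m}$ for nodes of smaller domain are precisely what let one define and isolate $\first{m'}$ for a subsequent node $m'$; hence I would obtain (i) for $m$ from the uniqueness of the final configurations of the locations that enable $m$, and then (ii) and (iii) for $m$ from (i) together with the confluence-plus-soundness argument above. I expect the uniqueness-under-choice step to demand the most care, as it is the only place where the model's genuine nondeterminism interacts with soundness.
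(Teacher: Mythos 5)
Your proposal correctly isolates the crux of the theorem---that confluence via Lemma~\ref{lem:mazur} reconciles only independent interleavings, not two runs that resolve an outcome choice at the same node differently---but it does not actually supply an argument for that crux, and the fix you sketch would fail. Your induction hypothesis on strictly smaller domains gives, for each \emph{fixed} location $\ell'=(n,a)$ of a smaller-domain node, the uniqueness of $\last{\ell'}$; it says nothing about reconciling a run that fires $(n,a)$ with one that fires $(n,b)$, $b\neq a$ (indeed $\last{(n,a)}\neq\last{(n,b)}$ in general). At the "first point of divergence" of a minimal counterexample, the two runs fire precisely two different outcomes of one node, which is exactly the case you conceded determinism and Lemma~\ref{lem:mazur} cannot handle; so the contradiction you promise is never derived. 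The paper closes this gap with a global structural tool, absent from your proposal: the Domination Lemma (Lemma~\ref{lem:domin}: every reachable local circuit of a sound deterministic diagram contains a node whose domain includes the domains of all nodes on the circuit), itself proved from realizability of local paths (Lemma~\ref{lem:exec}) and the loop-synchronization result (Lemma~\ref{lem:sync}) of prior work. From it the paper derives the Unique Configuration Lemma (Lemma~\ref{lemma:main-technical}): two reachable configurations that agree on a set $X$ of processes, and all of whose enabled nodes need a process of $X$, are equal---the proof assumes two distinct such configurations and builds a local circuit with no dominant node. Parts (i) and (ii) then follow immediately by taking $X=\dom(m)$ and $X=\Proc\setminus\dom(m)$ respectively, and (iii) by a short induction over sequence length; no induction on $|\dom(m)|$ occurs anywhere.

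Two secondary problems. First, your bootstrapping of (i) from smaller domains is structurally broken: the locations whose execution produces $\first{m}$ may have domains \emph{larger} than $\dom(m)$ (in Figure~\ref{fig:decomp}, $\first{n_3}$ arises by firing $(n_2,a)$ with $\dom(n_2)\supsetneq\dom(n_3)$), so an induction hypothesis restricted to nodes below $m$ never applies to them. Second, your existence argument for (ii)/(iii) has a gap: cutting a successful run from $\first{m}$ at the first location whose domain is not (strictly) included in $\dom(m)$ does not yield a configuration satisfying the maximality condition, because other small-domain nodes may still be enabled there---the run merely deferred or abandoned them. Repairing this requires a trace-theoretic rearrangement (take the causally downward-closed set of events all of whose predecessors have small domains, and permute it to the front using Lemma~\ref{lem:mazur}), not a cut at the first large synchronization.
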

\noindent E.g., in Figure \ref{fig:decomp} we have $I(n_1) = (n_1,n_8,n_8)$ (an abbreviation
for $I(n_1)(p_1) = \{n_1\},
I(n_1)(p_2) = \{n_8\}, I(n_1)(p_3) = \{n_8\}$); $I(n_2) = (n_8,n_2,n_2)$; and $I(n_3) = (n_8, n_3, n_7)$.
Moreover, $F(n_1) = F(n_2) = (n_8, n_8, n_8)$; and $F(n_3) = (n_8, n_7, n_7)$. Further, we get
$F(n_7,a)=(n_8,n_7,n_7)$ and $F(n_8,b)=(n_8,n_2,n_2)$.

The proof of Theorem \ref{thm:unique} is quite involved. The theorem is a
consequence of the Unique Configuration lemma (Lemma~\ref{lemma:main-technical} below),
which relies on the Domination lemma (Lemma~\ref{lem:domin} below), 
which in turn is based on results of \cite{DBLP:conf/concur/EsparzaKMW16,negII}.


 A~\emph{local path} of a negotiation diagram $\Nn$ is a path
$n_0\act{p_0,a_0}n_1\act{p_1,a_1}\dots\act{p_{k-1},a_{k-1}} n_k$ in
the graph of $\Nn$. A local path is a \emph{local circuit} if  $k>0$
and $n_0 = n_k$. 
A local path is \emph{reachable} if some node in the path is reachable.
The domination lemma says that every local circuit has a dominant action.

\begin{restatable}{lemma}{lemmaDomin}(\textbf{Domination Lemma})\ 
\label{lem:domin}
Let $\Nn$ be a deterministic sound negotiation diagram.
Every reachable local circuit of $\Nn$ contains a
dominant node, i.e. a node $n$ such that $m\domleq n$,
for every node $m$ of the circuit. 
\end{restatable}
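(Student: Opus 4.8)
The plan is to argue by contradiction, reducing the statement to a synchronization phenomenon that soundness forbids. Since dominance is invariant under cyclically rotating the circuit, and the circuit $n_0\act{p_0,a_0}n_1\act{p_1,a_1}\cdots\act{p_{k-1},a_{k-1}}n_k=n_0$ is reachable, I would first relabel so that $n_0$ is reachable, fixing a run $\Cinit\act{u}C$ with $n_0$ enabled at $C$. The crucial reformulation is that a dominant node is exactly a circuit node whose domain equals $\bigcup_i\dom(n_i)$. Hence, assuming no dominant node exists and taking a node of $\domleq$-maximal domain, I obtain two circuit nodes $m_1,m_2$ with \emph{incomparable} domains, together with private processes $q_1\in\dom(m_1)\setminus\dom(m_2)$ and $q_2\in\dom(m_2)\setminus\dom(m_1)$. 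The whole proof then amounts to showing that two such incomparable maximal nodes cannot coexist on a reachable circuit of a sound deterministic negotiation.

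To see why this should fail, I would translate the static local circuit into genuine executions. A local edge $n_i\act{p_i,a_i}n_{i+1}$ only records the move of $p_i$, but firing the location $(n_i,a_i)$ in an actual run moves \emph{all} of $\dom(n_i)$; so to realize one turn of the circuit as a run I must also re-supply to $m_1$ the process $q_1$ and to $m_2$ the process $q_2$ through the remaining parts of $\Nn$, in lockstep with the processes that are cycling. The plan is to use soundness, in its full ``every partial run extends to a successful run'' form, to schedule executions that keep $q_1$ committed to the region around $m_1$ and $q_2$ committed to the region around $m_2$; because these regions are not ordered by $\domleq$, neither private process ever forces the other to re-synchronize, and a pumping argument on the circuit produces a reachable partial run from which $\Cfin$ (where all processes of $\Proc$ meet at $n_\fin$) is no longer reachable. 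This contradicts soundness. The structural engine for this step is exactly the loop-governance results of \cite{DBLP:conf/concur/EsparzaKMW16,negII}, which state that the strongly connected (cyclic) behaviour of a sound deterministic negotiation is controlled by a single ``head'' node through which all participating processes must repeatedly synchronize; I would invoke these to turn the informal desynchronization into a genuine deadlock.

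The main obstacle, and the part requiring the most care, is precisely this bridge between the combinatorial object (a closed walk in the graph of $\Nn$, where at each node the walk hands off to \emph{some} participating process) and real executions, where firing a node relocates every participant at once and where a given process may enter and leave the circuit several times rather than contiguously. Controlling these re-entries, and ensuring that the scheduling needed to re-supply $q_1$ and $q_2$ does not itself accidentally force the desired synchronization at $m_1$ or $m_2$, is the technical heart; this is where the cited lemmas do the heavy lifting and where I would spend most of the argument. Once the contradiction is in place, the existence of a dominant node $n$ with $m\domleq n$ for every circuit node $m$ follows immediately from the extremal choice.
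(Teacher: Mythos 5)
Your proposal has a genuine gap at its center: the step from ``two $\domleq$-incomparable maximal circuit nodes $m_1,m_2$ exist'' to ``soundness is violated'' is never constructed, only asserted. You defer it to ``loop-governance results'' of \cite{DBLP:conf/concur/EsparzaKMW16,negII}, but no such result exists in the form you invoke. What those papers actually supply (and what the paper restates as Lemma~\ref{lem:exec} and Lemma~\ref{lem:sync}) is much weaker and of a different shape: (a) every local path starting at a node enabled in a reachable configuration is \emph{realizable} as a run, and (b) every configuration loop $C\act{w}C$ contains a \emph{location} whose domain includes the domains of all locations of $w$. Point (b) is a statement about runs that return to the same configuration, not about local circuits in the graph of $\Nn$; transferring it from runs to circuits is precisely the content of the Domination Lemma, so citing it as a black box that ``turns the informal desynchronization into a genuine deadlock'' is close to assuming what you must prove. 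Your own text concedes that this bridge is ``the technical heart'' and that you ``would spend most of the argument'' there --- but that argument is exactly what is missing, and it is the entire difficulty of the lemma.

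The paper's proof is not by contradiction and needs no deadlock construction at all. It realizes sufficiently many iterations $\pi^l$ of the circuit from a reachable configuration (Lemma~\ref{lem:exec}), applies a pigeonhole argument to extract a configuration loop $C\act{v}C$ inside that realization, applies Lemma~\ref{lem:sync} to obtain a location $\ell=(n,a)$ of $v$ dominating all locations of $v$, and then uses the realizability guarantee (process $p_i$ is untouched by the interleaved segment $w_{i+1}$) to show that $n$ must coincide with one of the circuit nodes $n_{i+1}$ --- hence the circuit has a dominant node. None of these three steps (realizability, pigeonhole, identification of the dominant location with a circuit node) appears in your proposal. Note also that your contradiction route faces a structural warning sign: in the paper, the implication ``cycle without dominant node $\Rightarrow$ contradiction with soundness'' is what the \emph{Unique Configuration Lemma} derives \emph{from} the Domination Lemma; building your proof of Domination on that implication risks circularity unless you produce, concretely, a reachable configuration from which $\Cfin$ is unreachable --- and your scheduling sketch (``keep $q_1$ committed to the region around $m_1$'') is not a defined construction, since firing any node relocates all of its participants and nothing prevents the run re-supplying $q_1$ from passing through nodes involving $q_2$.
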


The unique configuration lemma says that if two enabled configurations
agree on a set of processes $X$ and every enabled action in one of the
two configurations needs a process from $X$, then the two
configurations are actually the same.
\begin{restatable}{lemma}{lemmaMainTechnical}(\textbf{Unique Configuration Lemma})\  
\label{lemma:main-technical}
  Let $X\incl \Proc$ be a set of processes. Let $C_1,C_2$ be reachable
configurations such that (1) $C_1(p)=C_2(p)$ for every $p \in X$, and
(2) every node $n$ enabled at $C_1$ or $C_2$ satisfies $\dom(n) \cap X \not=\es$. Then $C_1=C_2$.
\end{restatable}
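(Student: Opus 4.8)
The plan is to prove the statement by induction on $|\Proc \setminus X|$. First I note that, since $\Nn$ is deterministic, every reachable configuration assigns to each process a single node, so I may regard $C_1,C_2$ as maps $\Proc \to N$; by hypothesis (1) they coincide on $X$, and it remains to show that they coincide on $Y := \Proc\setminus X$. The base case $Y=\emptyset$ is immediate from (1). For the inductive step it suffices to exhibit a \emph{single} process $q\in Y$ with $C_1(q)=C_2(q)$: then $X':=X\cup\{q\}$ again satisfies (1) (agreement on $X$ together with the new process $q$) and (2) (every node enabled at $C_1$ or $C_2$ already meets $X\subseteq X'$), while $|\Proc\setminus X'|<|Y|$, so the induction hypothesis yields $C_1=C_2$. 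Thus the whole lemma reduces to the following \emph{Main Claim}: if $Y\neq\emptyset$, then $C_1(q)=C_2(q)$ for some $q\in Y$.

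To prove the Main Claim I would exploit soundness together with a lockstep argument on the nodes that are \emph{internal to $X$}, i.e.\ the nodes $n$ with $\dom(n)\subseteq X$. The key observation is that whether such a node is enabled, and the effect of firing it, depend only on the $X$-components of a configuration; since $C_1$ and $C_2$ agree on $X$, I can fire the same $X$-internal locations from both, keeping the two computations synchronized on $X$ and frozen on $Y$. Moreover (2) is preserved along the way: a node that becomes newly enabled must contain a process that has just moved, and that process lies in $\dom(n)\subseteq X$. Using soundness I fix successful continuations of $C_1$ and of $C_2$ sharing a common $X$-internal prefix $\sigma$; after $\sigma$ the reached configurations $\hat C_1,\hat C_2$ still agree on $X$ and are unchanged on $Y$. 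If I can arrange that some node $m$ with $\dom(m)\cap Y\neq\emptyset$ is enabled in \emph{both} $\hat C_1$ and $\hat C_2$, then all participants of $m$, in particular those in $\dom(m)\cap Y$, sit at $m$ in both configurations, and since $Y$ was frozen this gives $C_1(q)=m=C_2(q)$ for every $q\in\dom(m)\cap Y$, proving the claim.

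The main obstacle is precisely to force such a commonly enabled $Y$-touching node, and this is where the Domination Lemma is indispensable. After all $X$-internal nodes have been fired, every remaining enabled node meets both $X$ and $Y$, and its $X$-participants are present on both sides by agreement on $X$; the bad case is that for each such node $m$ its $Y$-participants are located differently in $C_1$ and $C_2$, so that no $Y$-touching node is ever enabled simultaneously. To rule this out I would argue, by soundness, that the $X$-participants waiting at such an $m$ can leave it only by firing $m$ itself, so every successful continuation must eventually gather the missing $Y$-participants at $m$; tracking the local path along which a process $q$ is displaced between its $C_1$-position and its $C_2$-position, the Domination Lemma forces this path to traverse a dominant node whose domain contains $\dom(m)$, hence a process of $X$, triggering a full resynchronization that contradicts the assumed disagreement. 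In the acyclic case no such displacement is possible and the positions of the $Y$-processes are already pinned down by the ordering of the nodes; the circuit analysis, together with the accompanying need to control (non)termination of the $X$-internal lockstep, are the genuinely delicate points, and exactly the ones handled by the Domination Lemma in combination with soundness.
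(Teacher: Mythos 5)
Your reduction and setup are fine as far as they go: the induction that reduces the lemma to finding a single agreeing process in $\Proc\setminus X$ is valid (both hypotheses are preserved when $X$ is enlarged), and the lockstep observation---nodes with domain inside $X$ are enabled in $C_1$ if{}f in $C_2$, fire identically on both sides, and preserve hypothesis (2)---is correct. Indeed, if your lockstep phase terminates, the ``bad case'' you arrive at is essentially the situation to which the paper's proof also reduces (it gets there differently, by taking a finite successful run from $C_1$ and its longest prefix executable from $C_2$, and then repeating this once more): a node $m_1$ enabled in one configuration but not in the other, a node $m_2$ enabled in the other but not in the first, together with anchor processes $p_1\in\dom(m_1)\cap X$ and $p_2\in\dom(m_2)\cap X$ on which the two configurations agree.

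The genuine gap is in how you discharge this bad case. The Domination Lemma speaks about reachable local \emph{circuits}, not paths; you never construct a circuit, and ``tracking the local path along which a process $q$ is displaced'' followed by a ``full resynchronization'' contradiction is not an argument---nor does the lemma ever produce ``a dominant node whose domain contains $\dom(m)$'' for a path. The missing construction, which is the heart of the paper's proof, is this: writing $\hat C_1,\hat C_2$ for the two divergent configurations, since $\hat C_2(p_1)=m_1$ but $m_1$ is not enabled at $\hat C_2$, some $q_1\in\dom(m_1)$ sits at a node $n_1\neq m_1$ in $\hat C_2$; soundness gives a successful run from $\hat C_2$ that begins by firing $m_2$ and must later fire $n_1$ and then $m_1$, and because $p_1$ stays parked at $m_1$ until $m_1$ fires, this run yields a local path from $m_2$ to $m_1$ whose nodes (other than $m_1$) do not use $p_1$. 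Symmetrically one obtains a local path from $m_1$ to $m_2$ avoiding $p_2$. Concatenating the two gives a reachable local circuit through $m_1$ and $m_2$; a dominant node of this circuit would need both $p_1$ and $p_2$ in its domain, yet every node of the circuit misses $p_1$ or $p_2$ (in particular $p_1\notin\dom(m_2)$, since $m_2$ is enabled at $\hat C_2$ while $\hat C_2(p_1)=m_1$, and symmetrically $p_2\notin\dom(m_1)$)---contradicting the Domination Lemma. Without this two-path circuit your appeal to domination does not go through. There is also a secondary hole: you assume the $X$-internal lockstep can be run until no $X$-internal node is enabled, but sound negotiations admit infinite $X$-internal runs (local cycles, as in Figure~\ref{fig:example1}), so termination of that phase itself requires an argument; the paper sidesteps this entirely by anchoring its reduction on a finite successful run rather than on exhausting $X$-internal nodes.
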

The above lemma gives Theorem~\ref{thm:unique} rather directly. 
For (i), take $X=\dom(m)$. Suppose 
that there are two configurations $I_1$ and $I_2$ as in (i). The hypotheses of Lemma~\ref{lemma:main-technical} are
satisfied, and so $I_1=I_2$. The case (ii) is equally easy, while
(iii) is only a bit more involved.


\subsection{Subnegotiations for nodes and locations}
\label{subsec:subneg}

We use Theorem \ref{thm:unique} to define the subnegotiations
$\Nn|_n$ and $\Nn|_\ell$ for each node $n$ and location $\ell$ of a sound deterministic 
negotiation diagram $\Nn$, and prove that they are sound.

\begin{definition}\label{def:n-restriction}
Let $\Nn$ be a sound deterministic negotiation diagram and let $n$ be a reachable node
of $\Nn$. The negotiation diagram $\Nn|_n$ contains all the nodes and locations that 
appear in the $n$-sequences $u$ such that $\first{n} \act{u} \last{n}$, plus a new
final node $n|^\fin_n$ with $\dom(n|^\fin_n)=\dom(n)$. The initial node is $n$, 
and the transition function $\d|_n$ is defined as follows. For given $m,a,p$, we set: 
\begin{equation*}
\d|_n(m,a,p)=
  \begin{cases}
    \d(m,a,p) & \text{ if $\d(m,a,p) \neq \last{n}(p)$ } \\
    n|^\fin_n & \text{ if $\d(m,a,p) = \last{n}(p)$}
  \end{cases}
\end{equation*}
\end{definition}

An example of $\Nnn$ is given on the left of Figure~\ref{fig:subnegn3}.

\begin{restatable}{lemma}{lemmansound}
\label{lem:nsound}
 If $\Nn$ is a sound deterministic negotiation diagram then so is $\Nn|_n$.
\end{restatable}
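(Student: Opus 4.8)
The plan is to reduce the soundness of $\Nnn$ to a reachability statement about $n$-sequences of $\Nn$, through an exact step-by-step correspondence between the runs of $\Nnn$ and the $n$-sequences of $\Nn$ issued from $\first{n}$. Since $\Nn$ is deterministic, every reachable configuration assigns a single node to each process, so I treat configurations of $\Nnn$ as maps $\dom(n)\to N\cup\{n|^\fin_n\}$. To each reachable configuration $C$ of $\Nnn$ I associate a configuration $\hat{C}$ of $\Nn$ by setting $\hat{C}(p)=\last{n}(p)$ when $C(p)=n|^\fin_n$, $\hat{C}(p)=C(p)$ for the remaining $p\in\dom(n)$, and $\hat{C}(p)=\first{n}(p)$ for $p\notin\dom(n)$ (these last values are unambiguous because $\first{n}(p)=\last{n}(p)$ on the frozen processes, $\last{n}$ being reached from $\first{n}$ by an $n$-sequence). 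First I would verify, directly from Definition \ref{def:n-restriction}, that this lift is a simulation in both directions: a location $(m,a)$ with $m\neq n|^\fin_n$ is executable at $C$ in $\Nnn$ exactly when it is executable at $\hat{C}$ in $\Nn$, and the successors again correspond. Under the lift the initial configuration of $\Nnn$ (all processes at $n$) maps to $\first{n}$ and its final configuration (all processes at $n|^\fin_n$) maps to $\last{n}$; hence the runs of $\Nnn$ from its initial configuration are in bijection with the $n$-sequences of $\Nn$ from $\first{n}$, and a run is successful iff the corresponding $n$-sequence ends at $\last{n}$.

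With this correspondence in place, soundness of $\Nnn$ follows from one claim: (\textstyle\star) from every configuration $\hat{C}$ reachable from $\first{n}$ by an $n$-sequence, $\last{n}$ is again reachable by an $n$-sequence; in particular every $\domleq n$-node enabled at such a $\hat{C}$ lies on some $n$-sequence $\first{n}\to\last{n}$, hence is a node of $\Nnn$. Here the uniqueness half of Theorem \ref{thm:unique}(ii) — equivalently the Unique Configuration Lemma applied with $X=\dom(n)$ — does the decisive bookkeeping: it guarantees that $\last{n}$ is the \emph{only} $n$-sequence-reachable configuration at which no $\domleq n$-node is enabled. Consequently, when $\Nn$ is acyclic every maximal $n$-sequence from $\hat{C}$ is finite and must terminate at $\last{n}$, which both yields (\textstyle\star) and shows that the unique reachable deadlock of $\Nnn$ is its final configuration; the acyclic case is thereby complete.

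The remaining, and genuinely hard, part is to establish (\textstyle\star) in the presence of internal cycles, and this is where soundness of $\Nn$ enters. Since $\hat{C}$ is reachable in $\Nn$ and $\Nn$ is sound, I can extend it to a successful run $\hat{C}\act{\rho}\Cfin$. I would then keep only the locations of $\rho$ whose domain is $\domleq n$, discarding those that involve some process of $\Proc\setminus\dom(n)$ (exactly the ones that ``unfreeze'' an outside process), and argue that this projection is executable from $\hat{C}$ and drives the processes of $\dom(n)$ precisely to their positions in $\last{n}$. The obstacle is that a discarded location may also move a process of $\dom(n)$, so the projection is not literally executable as it stands; to repair this I must show that along $\rho$ each process of $\dom(n)$ performs all of its $\domleq n$-locations before its first location of strictly larger domain, so that no internal $\domle n$-loop has to be re-entered after an outside synchronisation. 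I expect to prove this using the hierarchical/domination structure (Lemma \ref{lem:domin}) together with soundness of $\Nn$, which ensures that every internal loop reachable along $\rho$ can be \emph{exited} toward $\last{n}$ rather than traversed forever. Once the projection is shown to be a valid $n$-sequence, its endpoint is an $n$-sequence-reachable configuration enabling no $\domleq n$-node, hence equals $\last{n}$ by the uniqueness invoked above, establishing (\textstyle\star) and completing the proof.
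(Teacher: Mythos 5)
Your overall architecture is the same as the paper's (lift the run to $\Nn$ through $\first{n}$, complete it using soundness of $\Nn$, extract an ``inside'' part, and identify the configuration it reaches with $\last{n}$ via the uniqueness statement of Theorem~\ref{thm:unique}), but the two steps you treat as routine are exactly where the difficulty lies, and one of them is stated in a form that is false. First, the claim that your lift is a simulation ``in both directions, directly from Definition~\ref{def:n-restriction}'' does not hold. The direction from $\Nn|_n$ to $\Nn$ is indeed immediate, but the converse is not: a location $(m,b)$ occurring in an $n$-sequence of $\Nn$ may require a process $p$ that has already reached $\last{n}(p)=m$; in $\Nn|_n$ that process has been redirected to the final node $n|^\fin_n$, which has no outcomes, so the location cannot be replayed and the correspondence breaks. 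Ruling this situation out is precisely Lemma~\ref{lemma:m-not-inside} in the appendix (no node of the form $\last{n}(p)$ is ever enabled at a reachable configuration of $\Nn|_n$), whose proof is a nontrivial argument by contradiction combining soundness of $\Nn$ with the Domination Lemma (Lemma~\ref{lem:domin}). This lemma is the technical heart of the paper's proof, and your proposal assumes it silently; without it you cannot transport the $n$-sequence produced by your claim $(\star)$ back into a run of $\Nn|_n$.

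Second, the statement you propose to prove in order to repair the projection of $\rho$ --- that along $\rho$ each process of $\dom(n)$ performs all of its $\domleq n$-locations before its first location whose domain is not contained in $\dom(n)$ --- is false, so no combination of Lemma~\ref{lem:domin} and soundness will establish it. In Figure~\ref{fig:decomp} take $n=n_3$ and $\hat{C}=\first{n_3}=(n_8,n_3,n_7)$; the successful run $(n_3,a)\,(n_7,b)\,(n_2,a)\,(n_3,b)\,(n_5,a)\,(n_3,a)\,(n_4,a)\,(n_7,a)$ lets $p_2$ execute the inside location $(n_3,a)$, then the outside locations $(n_7,b)(n_2,a)$, and then inside locations again. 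Accordingly, the full projection of $\rho$ onto inside locations is in general not executable from $\hat{C}$, and executing all of it is also not what is needed. The paper's proof instead reorders the continuation by Mazurkiewicz swaps so as to pull forward a maximal initial block $w_1$ of inside locations, executes only $w_1$, argues that at the resulting configuration either the run is complete or every enabled node needs a process outside $\dom(n)$, and only then invokes the uniqueness of Theorem~\ref{thm:unique}(ii) to conclude that this configuration is $\last{n}$; the inside locations occurring after the first unavoidable outside synchronization are simply discarded. Your final uniqueness step is the right one, but it must be reached by this reorder-and-truncate argument (together with Lemma~\ref{lemma:m-not-inside} to pull the truncated block back into $\Nn|_n$), not by executing the whole projection.
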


The subnegotiation $\Nn|_\ell$ induced by a location $\ell=(n,a)$ is defined analogously to $\N|_n$, 
with two differences. First, in $\N|_\ell$ the node $n$ has $a$ as the unique outcome. 
Second, the domain of every node of  $\Nn|_\ell$, except the node $n$ itself, 
is \emph{strictly} included in $\dom(n)$. 

\begin{definition}\label{def:l-restriction}
Let $\Nn$ be a deterministic sound negotiation diagram, let $n$ be a reachable node
of $\Nn$, and let $\ell=(n,a)$ for some outcme $a$ of $n$. The negotiation diagram $\Nn|_\ell$ contains all 
the nodes and locations that appear in the strict $n$-sequences $u$
such that  
$\first{n} \act{\ell \, u} \last{\ell}$, plus a new
final node $n|^\fin_\ell$. The initial node is $n$, it has the unique
outcome $a$, and the transition function 
$\d|_\ell$ is defined as follows. For given $m,b,p$  we set: 
\begin{equation*}
\d|_\ell(m,b,p)=
  \begin{cases}
    \d(m,b,p) & \text{ if $\d(m,b,p) \neq \last{\ell}(p)$ } \\
    n|^\fin_\ell & \text{otherwise}
  \end{cases}
\end{equation*}
\end{definition}
Figure \ref{fig:subnegn3} shows (on the right) $\Nn|_{\ell}$ for the location $\ell=(n_2, a)$ of the negotiation diagram of Figure \ref{fig:decomp}.


\begin{restatable}{lemma}{lemlsound}
\label{lem:lsound}
 If $\Nn$ is a sound deterministic negotiation diagram, then so is $\Nn|_\ell$.
\end{restatable}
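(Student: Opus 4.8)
The plan is to follow the proof of Lemma~\ref{lem:nsound} essentially verbatim, replacing throughout the relation $\domleq$ by $\domle$, ordinary $n$-sequences by strict $n$-sequences, and the appeal to Theorem~\ref{thm:unique}(ii) by one to Theorem~\ref{thm:unique}(iii). First I would pin down the correspondence between computations of $\Nnl$ and of $\Nn$: a word $w$ is a run of $\Nnl$ from its initial configuration (all processes of $\dom(n)$ at $n$) if{}f $w = \ell u$ where $\ell u$ is a strict $n$-sequence with $\first{n} \act{\ell u} C'$ in $\Nn$, and the configuration reached in $\Nnl$ is the image $\hat C'$ of $C'$ that sends a process $p$ to $n|^\fin_\ell$ exactly when $C'(p) = \{\last{\ell}(p)\}$ and leaves $p$ at $C'(p)$ otherwise. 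Note that, after the initial $\ell$, every node of $\Nnl$ other than the final one has domain $\domle n$, so every transition of $\d|_\ell$ either fires a node $\domle n$ or lands in the sink $n|^\fin_\ell$; under the correspondence the unique final configuration (all processes at $n|^\fin_\ell$) is exactly the image of $\last{\ell}$.

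With the correspondence in hand, soundness of $\Nnl$ reduces to the single claim: for every configuration $C'$ of $\Nn$ reachable from $\first{n}$ by a strict $n$-sequence starting with $\ell$, there is a strict $n$-sequence using only nodes $\domle n$ that leads from $C'$ to $\last{\ell}$. Indeed, a partial run of $\Nnl$ is, by the correspondence, just such a $C'$, and a continuation realising the claim maps back to a completion of the partial run that reaches the final configuration. I would also observe that this same claim rules out the only dangerous form of spurious deadlock, namely one in which a process that has already moved to $n|^\fin_\ell$ (i.e. reached $\last{\ell}(p)$) is still demanded by some enabled node $\domle n$: by the defining property of $\last{\ell}$ no node $\domle n$ is enabled at $\last{\ell}$, and the uniqueness granted by Theorem~\ref{thm:unique}(iii), together with the Unique Configuration Lemma (Lemma~\ref{lemma:main-technical}), excludes a reachable configuration that agrees with $\last{\ell}$ on the finished processes yet still requires one of them.

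To establish the reachability claim I would use soundness of $\Nn$ together with the Domination Lemma (Lemma~\ref{lem:domin}). Since $\Nn$ is sound, from $C'$ there is a run $\rho$ to $\Cfin$; as $\Cfin$ engages $n_\fin$, whose domain is $\Proc \not\domle n$, the run $\rho$ must contain a location with domain $\not\domle n$, and the strict $n$-sequence of locations $\domle n$ that $\Nn$ can perform from $C'$ is therefore finite in the sense of eventually reaching a configuration $C''$ that enables no node $\domle n$; being reachable from $\first{n}$ by a strict $n$-sequence starting with $\ell$, such a $C''$ satisfies the defining conditions of $\last{\ell}$, so Theorem~\ref{thm:unique}(iii) forces $C'' = \last{\ell}$. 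I expect the main obstacle to be precisely the argument that $\last{\ell}$ is reachable at all from an arbitrary intermediate $C'$: the smaller-domain part of $\Nn$ may contain local circuits (for instance $n_3 \to n_5 \to n_3$ in Figure~\ref{fig:decomp}), so naively firing $\domle n$ nodes need not terminate. The resolution is that soundness of $\Nn$ guarantees an escape from every such circuit, the Domination Lemma constrains the shape of these circuits enough to drive the processes out of them, and the uniqueness of $\last{\ell}$ guarantees that every maximal strict continuation converges to the single configuration $\last{\ell}$, independently of the scheduling choices made along the way.
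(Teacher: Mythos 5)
Your skeleton does match the paper's intent---the paper literally proves this lemma as ``analogous to the proof of Lemma~\ref{lem:nsound}, replacing Theorem~\ref{thm:unique}(ii) by Theorem~\ref{thm:unique}(iii)''---but the step you yourself single out as the main obstacle is resolved incorrectly, and this is a genuine gap. It is false that ``every maximal strict continuation converges to $\last{\ell}$, independently of the scheduling choices'': in Figure~\ref{fig:decomp}, starting from $\first{n_2}$ and executing $\ell=(n_2,a)$, the continuation $(n_3,b)(n_5,a)(n_3,b)(n_5,a)\cdots$ is an \emph{infinite} strict $n_2$-sequence that never reaches $F(n_2,a)=(n_8,n_7,n_7)$. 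The nondeterminism that hurts here is the choice of \emph{outcomes}, not of the scheduler; soundness of $\Nn$ only guarantees that an escape from such a circuit \emph{exists}, not that greedy firing of nodes $\domle n$ terminates, and the Domination Lemma constrains the shape of local circuits but does not ``drive processes out'' of them. Likewise, Theorem~\ref{thm:unique}(iii) gives uniqueness of the configuration satisfying its two conditions; it says nothing about maximal strict firings that never satisfy them because they go on forever. So your argument establishes neither termination nor convergence, and the reachability claim---which, as you say, is the heart of the lemma---remains unproven.

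The paper closes exactly this gap with an idea absent from your proposal: Mazurkiewicz reordering of a successful \emph{completion} run. One embeds the partial run of $\Nnl$ into $\Nn$ after a run $\Cinit \act{u} \first{n}$, reaching some $C'_1$; soundness of $\Nn$ completes this to a successful run $C'_1 \act{w} \Cfin$, which is a \emph{finite} object; then one permutes exhaustively in $w$ consecutive independent locations so that those with domain strictly inside $\dom(n)$ come first, $w \equiv w_1 w_2$ (Lemma~\ref{lem:mazur} keeps this a run). Now $w_1$ is by construction a finite strict continuation from $C'_1$, and exhaustiveness of the permutation guarantees that at the configuration it reaches no node with domain strictly inside $\dom(n)$ is enabled, so Theorem~\ref{thm:unique}(iii) identifies that configuration with $\last{\ell}$---no termination argument about greedy firing is ever needed. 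A secondary gap: your ``iff'' correspondence between runs of $\Nnl$ and strict $n$-sequences of $\Nn$ is asserted rather than proved; the delicate direction (mapping $\Nn$-sequences back into $\Nnl$-runs) can break when $\Nn$ executes a node $m=\last{\ell}(p)$ while $p$ already sits at $n|^\fin_\ell$ in $\Nnl$, and the paper rules this out with Lemma~\ref{lemma:m-not-inside} (itself proved via the Domination Lemma); your appeal to the Unique Configuration Lemma gestures at this issue but does not settle it.
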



\section{Computing the MOP}
\label{sec:compMOP}

We use the decomposition of Section \ref{sec:decomp} to define an
algorithm computing the MOP for Mazurkiewicz-invariant frameworks and
arbitrary sound deterministic 
negotiation diagram. 
The idea is to repeatedly reduce parts of the negotiation diagram without
changing the meaning of the whole negotiation.
When reduction will be no longer possible, the negotiation diagram will have
only one location, whose value will be the value of the negotiation. 
The goal of this section is to present Algorithm~\ref{alg:red} and
Theorem~\ref{thm:main} that is our main result.

As we have seen in the previous section, for every node $n$,
the negotiation diagram $\Nnn$ is sound and the final configuration $F(n)$ is
unique. 
Thus we can safely replace all the transitions from $n$ by one transition
going directly to $F(n)$ and assign to this transition the value of
$\Nnn$. 
This requires to be able to compute $F(n)$ as well as the value of $\Nnn$.
For this we proceed by induction on the domain of $n$ starting from
nodes with the smallest domain. 
As we will see, this will require us to compute MOP only for
negotiations of two (very) special forms

\medskip

\noindent \textbf{One-trace negotiations}. These are acyclic
negotiation diagrams in which every node has one single outcome. In
this degenerate case, all the executions of the negotiation diagram
are Mazurkiewicz equivalent; moreover, by acyclicity, the trace contains every location at most once. Since the analysis framework is Mazurkiewicz-invariant, we have $\sem{\Nn} = \sem{w}$ for any 
successful run $w$. A successful run can be computed by just executing
the negotiation diagram with some arbitrary scheduler. Once  a successful run
$w$ is computed, we extract from it a flow-graph with $|w|$ nodes,
that is actually a sequence, and compute MFP. 

\medskip

\noindent \textbf{Replications}. Intuitively, a replication is a
negotiation diagram in which all processes are involved in every node,
and all processes move uniformly, that is, after they agree on an
outcome they all move to the same node. Formally, a negotiation
diagram is a \emph{replication} if 
for every reachable node $n$ and every outcome $(n,a)$ there is a node
$m$ such that $\delta(n,a, p)=m$ for every process $p$. Observe that,
in particular, all nodes of a replication have the same domain. It
follows that (the reachable part of) a replication is a flow-graph
``in disguise''. More precisely, we can assign to it a flow-graph
having one node for every reachable node, and an edge for every
location $(n,a)$, leading from $n$ to $\delta(n,a, p)$, where $p$ can
be chosen arbitrarily out of $\dom(n)$. It follows immediately that
the MOP for the negotiation diagram is equal to the MFP of this flow-graph.

\medskip


\begin{definition}
A node $n$ is \emph{reduced} if it has one single outcome, and for
this single outcome $a$ we have $\delta(n,a,p)=F(n)(p)$ for every $p
\in \dom(n)$. 

A location $\ell=(n,a)$ is \emph{reduced} if $\d(n,a,p)
=F(\ell)(p)$ for every  $p \in \dom(n)$.
\end{definition}

Now we will define an operation of reducing nodes and locations in a
negotiation diagram; this is the core operation of Algorithm~\ref{alg:red}.
For a location $\ell=(n,a)$, the operation  $\Red_{\ell}(\Nn)$ removes
the transition of $n$ on $\ell$ and adds a new transition on $(n,a_\ell)$.
Similarly $\Red_n(\Nn)$, but this time it removes all 
transitions from $n$ and adds a single new transition on $(n,a_n)$.
\begin{definition}
\label{def:n_l}
Let $\Nn=\struct{\Proc,N,\dom,R,\d}$ be a sound deterministic
negotiation diagram and let $\ell=(n,a)$ be a non-reduced outcome of $\Nn$. 
The negotiation diagram $\Redl(\Nn)$ has the same components as
$\Nn$ but for
$\out$ and $\d$ that are subject to the following changes:
\begin{itemize}
\item $\out(n):=(out(n)\setminus\set{a})\cup\set{a_\ell}$;
\item $\delta(n,a_\ell, p) := F(\ell)(p)$ for every process $p \in \dom(n)$.
\end{itemize}
The negotiation diagram $\Redn(\Nn)$ is defined similarly but now:
\begin{itemize}
\item $\out(n):=\set{a_n}$;
\item $\delta(n,a_{n}, p) := F(n)(p)$ for every process $p \in \dom(n)$.
\end{itemize}
\end{definition}

The next lemma states that these reduction
operations preserve the meaning of a negotiation diagram. 

\begin{restatable}{lemma}{lemEquivl}
\label{lem:equivl}
Let $\Nn$ and $\ell=(n,a)$ be as in Definition \ref{def:n_l}. Assign
to the new location $\ell'=(n, a_\ell)$ the mapping $\sem{\ell'} :=
\sem{\Nnl}$. Then $\sem{\Nn}=\sem{{\it Red}_\ell(\Nn)}$.
Analogously, $\sem{\Nn}=\sem{\Redn(\Nn)}$ when we assign
$\sem{(n,a_n)}=\sem{\Nnn}$.
\end{restatable}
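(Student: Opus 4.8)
The plan is to prove both equalities by exhibiting, for a suitable pair of schedulers, a value-preserving correspondence between the successful runs of $\Nn$ and those of $\Redl(\Nn)$ (respectively $\Redn(\Nn)$). The main tool is Theorem~\ref{thm:mazur}: since $\Nn$ is deterministic and $\sem{\_}$ is Mazurkiewicz invariant, $\sem{\Nn}=\sem{\Nn,S}$ for every scheduler $S$, and the same holds for $\Redl(\Nn)$ once we observe that it is again deterministic and that the extended framework, obtained by setting $\sem{(n,a_\ell)}:=\sem{\Nnl}$, is still Mazurkiewicz invariant. The latter holds because $(n,a_\ell)$ has domain $\dom(n)$ and $\sem{\Nnl}$ is composed of functions whose domains are included in $\dom(n)$, so it commutes with every $\sem{\ell''}$ that is independent of $\dom(n)$. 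It therefore suffices to compare $\sem{\Nn,S}$ and $\sem{\Redl(\Nn),S'}$ for schedulers of our choice.

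I would take $S$ to be an \emph{eager} scheduler for $\Nn$: whenever a run fires $\ell=(n,a)$, the scheduler next selects only nodes whose domain is strictly included in $\dom(n)$, driving the processes of $\dom(n)$ as far as possible while the others stay frozen, until no such node is enabled. By Theorem~\ref{thm:unique}(iii) together with the soundness of $\Nnl$ (Lemma~\ref{lem:lsound}), this phase always terminates and leaves the processes of $\dom(n)$ in $\last{\ell}$ restricted to $\dom(n)$; the locations fired in between form a successful run of $\Nnl$, which I call a \emph{block}. The configuration reached after a block agrees, on \emph{all} processes, with the one that $\Redl(\Nn)$ reaches by firing the single edge $(n,a_\ell)$, because $\delta(n,a_\ell,p)=\last{\ell}(p)$ for $p\in\dom(n)$ while the remaining processes are untouched in both computations. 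Letting $S'$ be the scheduler of $\Redl(\Nn)$ induced by $S$, the map that contracts every block to the edge $(n,a_\ell)$ sends each $S$-compatible successful run of $\Nn$ to a unique $S'$-compatible successful run of $\Redl(\Nn)$ (outcomes $b\neq a$ of $n$, and all other locations, are carried over unchanged), and the fibre over a run $w'$ consists exactly of the possible choices of one block at each occurrence of $(n,a_\ell)$ in $w'$.

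It remains to match values. Fix a run $w'$ of $\Redl(\Nn)$; its value $\sem{w'}$ is the composition of the functions $\sem{\ell''}$ read along $w'$, each occurrence of $(n,a_\ell)$ contributing $\sem{\Nnl}$. Applying Theorem~\ref{thm:mazur} to $\Nnl$ gives $\sem{\Nnl}=\bigsqcup_v\sem{v}$, the join being taken over the blocks admissible at that position. Using distributivity of the semantic functions---so that each $\sem{\ell''}$ commutes with the merge $\bigsqcup$---the join over all expansions of $w'$ of the values $\sem{w}$ equals $\sem{w'}$. Taking the join over all $w'$ then yields $\sem{\Nn,S}=\sem{\Redl(\Nn),S'}$, hence $\sem{\Nn}=\sem{\Redl(\Nn)}$. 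The argument for $\Redn(\Nn)$ is verbatim the same, with $\Nnn$, $\last{n}$, Theorem~\ref{thm:unique}(ii) and Lemma~\ref{lem:nsound} replacing their $\ell$-indexed counterparts; the only change is that a block now also fixes the outcome of $n$, since $\Redn$ collapses all outcomes of $n$ at once.

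The delicate step is the last one. Pushing the merge $\bigsqcup$ through the compositions---past blocks that may recur along a cyclic run, and through the surrounding context---while keeping the correspondence between runs and expansions exactly bijective (no run double-counted, none omitted) is where the real care is needed. In particular one must check that $\bigsqcup$ commutes with the semantic functions in precisely the direction required by the definition of the MOP, and handle repeated and nested visits to $n$; by contrast, the structural facts about blocks are immediate from Theorems~\ref{thm:unique} and~\ref{thm:mazur} and the soundness lemmas.
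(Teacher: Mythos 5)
Your proposal is correct and takes essentially the same route as the paper's proof: both fix a priority scheduler via Theorem~\ref{thm:mazur}, decompose its compatible successful runs into context pieces interleaved with blocks that are successful runs of $\Nnl$ (justified by Theorem~\ref{thm:unique} and Lemma~\ref{lem:lsound}), contract each block to an occurrence of $(n,a_\ell)$ to obtain the runs of $\Redl(\Nn)$, and use distributivity to push the join over blocks through the composition, so that each fibre contributes $\sem{w_0}\circ\sem{\Nnl}\circ\sem{w_1}\circ\cdots\circ\sem{\Nnl}\circ\sem{w_k}$. The only differences are cosmetic: the paper's scheduler gives priority to nodes \emph{outside} $\Nnl$ rather than your eager inside-first one, and you spell out a point the paper leaves implicit, namely that the extended framework on $\Redl(\Nn)$ (with $\sem{(n,a_\ell)}:=\sem{\Nnl}$) is again Mazurkiewicz invariant.
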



\begin{algorithm}[t]
\caption{\small Algorithm computing MOP for a sound deterministic negotiation
  diagram $\Nn$.}
\begin{algorithmic}[1]
\While{$\Nn$ has non-reduced nodes}
\State $m$ := $\prec$-minimal, non-reduced node
\State  $X=\dom(m)$
\For{every $\ell=(n,a)$ with $\dom(n)=X$}
\State \#\# $\Nn|_\ell$ is a one-trace negotiation \#\#
\State $\sem{n,a_\ell}:= {\it MOP}(\Nn|_\ell)$
\State $\Nn := \Redl(\Nn)$
\EndFor
\For{every node $n$ such that $\dom(n)=X$}
\State \#\# $\Nnn$ is a replication \#\#
\State $\sem{n,a_{n}} = {\it MOP}(\Nn|_{n})$
\EndFor
\For{every node $n$ such that $\dom(n)=X$}
\State $\Nn := \Redn(\Nn)$ 
\EndFor
\EndWhile
\State return $\sem{\ell}$, where $\ell$ is the unique outcome of the initial node of $\Nn$
\end{algorithmic}
\label{alg:red}
\end{algorithm}

At this point we can examine Algorithm~\ref{alg:red}. 
The algorithm repeatedly applies reduction operations to a given
negotiation diagram.
Thanks to Lemma~\ref{lem:equivl} these reductions preserve the meaning
of the negotiation diagram. 
At every reduction, the number of reachable locations in the
negotiation diagram decreases. 
So the algorithm stops, and when it stops the negotiation diagram has only one
reachable location. 
The abstract semantics of this location is equal to the abstract
semantics of the original negotiation diagram.

This argument works if indeed we can compute $MOP(\Nnl)$ and
$MOP(\Nnn)$ in lines $6$ and $11$ of the algorithm, respectively. 
For this it is enough to show that the invariants immediately
preceding these lines hold, as this would mean that we deal with
special cases we have discussed at the beginning of this section.
The following lemma implies that the invariants indeed hold.

\begin{restatable}{lemma}{invariants}
Let $\Nn$ be a sound deterministic negotiation diagram, and $n$ a node such
that all nodes $m\fle n$ are reduced in $\Nn$.
\begin{itemize}
 \item[(1)]  if $a$ is an outcome of $n$, then for $\ell=(n,a)$ the
   negotiation diagram $\Nnl$ is a one-trace negotiation.
 \item[(2)] if all locations $\ell'\fleq n$ are reduced, then $\Nnn$ is a replication.
   \end{itemize}
 \end{restatable}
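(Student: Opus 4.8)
The plan is to read off the structure of $\Nnl$ and $\Nnn$ directly from the hypothesis that all smaller nodes (resp.\ locations) are already reduced, using the soundness of the two subnegotiations (Lemmas~\ref{lem:nsound} and~\ref{lem:lsound}), the characterisation of $\first{\cdot}$ and $\last{\cdot}$ in Theorem~\ref{thm:unique}, and the Domination Lemma~\ref{lem:domin}. Throughout, recall that a reduced node carries a single outcome sending all its participants directly into $\last{\cdot}$, and that $\Nn$ is deterministic, so every process is, in every reachable configuration, ready for exactly one node.

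For (1) I would first dispose of the ``single outcome'' requirement: the initial node $n$ of $\Nnl$ has the unique outcome $a$ by Definition~\ref{def:l-restriction}, while every other node $m$ of $\Nnl$ satisfies $\dom(m) \subsetneq \dom(n)$, so $m \domle n$ and $m$ is reduced, hence has a single outcome. The substantive part is acyclicity. I would argue by contradiction: assuming the reachable part of the graph of $\Nnl$ contains a circuit, the Domination Lemma supplies a dominant node $n^*$ with $m \domleq n^*$ for every node $m$ of the circuit. The dominant node cannot be $n$ itself, since a strict $n$-sequence never revisits $n$ after its first location and so $n$ has no incoming edge in $\Nnl$; hence $n^* \domle n$ and $n^*$ is reduced. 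Because $n^*$ is reduced, all its outgoing edges in $\Nnl$ lead into $\last{n^*}$, so its successor on the circuit is a node with domain contained in $\dom(n^*)$. By Theorem~\ref{thm:unique}(ii) no node whose domain is contained in $\dom(n^*)$ is enabled at $\last{n^*}$; since every node of the circuit has domain $\subseteq \dom(n^*)$, once the processes of $\dom(n^*)$ reach $\last{n^*}$ they can only move again through a node of domain \emph{not} contained in $\dom(n^*)$, which is off the circuit. I would then close the loop by induction on $|\dom(n)|$: re-supplying the circuit would force the processes of $\dom(n^*)$ through a strictly larger subnegotiation, whose acyclicity is guaranteed by the induction hypothesis, contradicting a return to $n^*$.

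For (2) I would prove the stronger statement that every reachable configuration of $\Nnn$ is \emph{uniform}, i.e.\ it assigns to all processes of $\dom(n)$ the same singleton $\{m'\}$ for some node $m'$ with $\dom(m') = \dom(n)$; this is exactly the replication property. The argument is an induction along the runs of $\Nnn$. The base case is immediate: the initial configuration places every process at $n$. For the inductive step, start from a uniform configuration at a full-domain node $n'$ and fire an arbitrary outcome $a'$ of $n'$. Since $\dom(n') = \dom(n)$ we have $(n',a') \domleq n$, so this location is reduced and moves every process of $\dom(n)$ to $\last{(n',a')}(p)$ (or, after the redirection of Definition~\ref{def:n-restriction}, to $n|^\fin_n$). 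By Theorem~\ref{thm:unique}(iii), no node whose domain is \emph{strictly} contained in $\dom(n)$ is enabled at $\last{(n',a')}$, so the only nodes that can be enabled at the reached configuration have domain exactly $\dom(n)$. As $\Nnn$ is sound (Lemma~\ref{lem:nsound}), this configuration is not a deadlock, and therefore all processes of $\dom(n)$ must be ready for one common full-domain node $m'$ (or all sit at the final node). This preserves uniformity and shows that the outcome moves all participants to a single node, which is precisely the replication condition.

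The main obstacle is the acyclicity claim in (1). The ``single outcome'' part and all of (2) reduce cleanly to the maximality of $\last{\cdot}$ and the soundness of the subnegotiations, but ruling out circuits requires combining the Domination Lemma with the observation that reduction collapses every internal circuit of a smaller domain into a direct transition to $\last{\cdot}$; making the re-supply argument rigorous, rather than merely plausible, is where the induction on $|\dom(n)|$ and the finer consequences of Theorem~\ref{thm:unique} will do the real work.
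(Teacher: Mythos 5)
Your handling of the single-outcome requirement in (1) and your entire argument for (2) are correct and essentially the paper's own proof: for (2) the paper likewise argues along runs of $\Nnn$ that every fired location has full domain $\dom(n)$, is therefore reduced by hypothesis, and hence (by Theorem~\ref{thm:unique}(iii) and the definition of $\Nnn$) hands all processes over to one common successor; your reformulation as a ``uniformity'' invariant, with soundness (Lemma~\ref{lem:nsound}) ruling out the mixed/deadlock case, is the same argument in slightly different clothing.

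The genuine gap is the acyclicity claim in (1), exactly where you flag it. You follow the paper up to: the dominant node $n^*$ of a circuit in $\Nnl$ is not $n$ (no transition of $\Nnl$ leads back to $n$, as you correctly note), hence reduced, so its successors lie at $\last{n^*}$, where by Theorem~\ref{thm:unique}(ii) no node with domain included in $\dom(n^*)$ is enabled. But your proposed closing---induction on $|\dom(n)|$ plus ``re-supply through a strictly larger subnegotiation''---cannot work: the induction hypothesis concerns nodes with \emph{smaller} domains, whereas the would-be re-supplying nodes have domain \emph{not contained} in $\dom(n^*)$ (though still strictly inside $\dom(n)$), and those nodes are not collected into any subnegotiation whose acyclicity you could invoke. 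The correct fix is not an induction but a strengthening of what domination gives you: an \emph{execution loop} instead of a graph circuit. Since $\Nnl$ is sound and deterministic (Lemma~\ref{lem:lsound}), realize the circuit repeatedly from a reachable configuration enabling one of its nodes (Lemma~\ref{lem:exec}) until a configuration repeats between two traversals; this yields a loop $C\act{v}C$ of $\Nnl$ in which every circuit node is executed, and Lemma~\ref{lem:sync} then provides a location $(n^*,a^*)$ of $v$ dominating \emph{every} location of $v$. Thus every location in the loop uses only processes of $\dom(n^*)$, so re-supply from outside $\dom(n^*)$ is impossible inside the loop by construction. Since $n^*$ is executed in a loop, $n^*\neq n$, so $n^*$ is reduced; firing its unique outcome inside the loop puts the processes of $\dom(n^*)$ exactly at their $\last{n^*}$ positions (none can be redirected to the final node of $\Nnl$, since it would be stuck there and the loop could not close). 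The loop must continue, and by domination its next location has domain included in $\dom(n^*)$ and is enabled at a configuration agreeing with $\last{n^*}$ on $\dom(n^*)$---contradicting Theorem~\ref{thm:unique}(ii). This execution-loop argument is precisely what the paper's one-line justification (``$n'$ cannot be dominant, by the definition of $F(n')$'') compresses.
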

 
\begin{example}
Consider the negotiation diagram of Figure \ref{fig:decomp}. 
Assume that all locations have cost 1, and that the probability of a location
$\ell=(n, a)$ is $1/|{\it out}(n)|$ (so, for example, the locations
$(n_3, a)$ and $(n_3, b)$ have probability 1/2, while $(n_0, a)$ has
probability 1). We compute the expected cost of the
diagram using Algorithm \ref{alg:red}. 

The minimal non-reduced nodes w.r.t. $\preceq$ are $n_3, n_4, n_5, n_6$. 
All their locations satisfy $\sem{\Nn|_\ell}=\sem{\ell}$. The algorithm computes 
${\it MOP}(\Nn|_{n_i})$ for $i=3,4,5,6$. The subnegotiations $\Nn|_{n_3}$ and $\Nn|_{n_5}$ are shown in Figure \ref{fig:subnegn3}; the other two are similar. ${\it MOP}(\Nn|_{n_3})$ is the expected cost of reaching $n_{fin}$ from $n_3$ in $\Nn|_{n_3}$. Since 
$\Nn|_{n_3}$ is a replication (in fact, it is even a flow-graph), we can compute it as the least solution of the following fixed point equation, where we abbreviate ${\it Prob}(\ell)$ to $P(\ell)$ and
${\it Cost}(\ell)$ to $C(\ell)$:
\begin{align*}
(p, c) = & \bigg(p \cdot P(n_3,b) \cdot P(n_5, a) + P(n_3, a), \\
& P(n_3, b) \cdot P(n_5, a) \cdot (C(n_3, b) + C(n_5,a) + c ) + \\
& P(n_3, a) \cdot C(n_3, a)\bigg)
\end{align*}
\noindent which gives
\begin{align*}
(p, c) = & \bigg(\frac{1}{2}p +\frac{1}{2},  \frac{1}{2}(2+c)+ \frac{1}{2} \bigg)
\end{align*}
\noindent with least fixed point $(1, 3)$, which of course can be computed by just solving the linear equation. So ${\it MOP}(\Nn|_{n_3})=(1,3)$. We obtain $\sem{n_3, a_{n_3}} = \sem{n_4, a_{n_4}}
= 3$, $\sem{n_5, a_{n_5}} = \sem{n_6, a_{n_6}} = 4$, and the
reduced negotiation diagram at the top of Figure \ref{fig:intermediate}. Observe that nodes $n_5$ and $n_6$ are no longer reachable. 

The minimal non-reduced nodes are now $n_2$ and $n_7$. The locations of $n_7$
satisfy $\sem{\Nn|_\ell}=\sem{\ell}$. For the location $(n_2,a)$ we obtain
${\it MOP}(\Nn|_{(n_2, a)})=(1, 1+3+3)=(1,7)$; this we can easily do
because $\Nn|_{(n_2, a)}$ is a one-trace negotiation; we just pick any successful run,
for example $(n_2, a)(n_3, a_{n_3})(n_4, a_{n_4})$, and add the costs. After reducing
$\Nn|_{(n_2, a)}$ we obtain the negotiation diagram in the middle of
Figure \ref{fig:intermediate}; the non-reachable nodes $n_3, \ldots,
n_6$ are no longer displayed, and all locations have cost 1 but $(n_2, a_{(n_2, a)})$, which
has cost 7. Further, all locations $\ell$ of $n_2$ and $n_7$ satisfy
$\sem{\Nn|_\ell}=\sem{\ell}$. We compute $\Nn|_{n_2}$ and $\Nn|_{n_7}$
by a fixed point calculation analogous to the one above
(observe that both of them are replications), and after reduction
obtain the negotiation diagram at the bottom of the figure, with $\sem{n_7, a_{n_7}} = (1,9)$
and $\sem{n_2, a_{n_2}} = (1,7+9)=(1,16)$. 
Now, the minimal non-reduced node is $n_0$.We compute ${\it MOP}(\Nn|_{(n_0, a)})$ 
($\Nn|_{(n_0, a)}$ is a one-trace negotiation), and obtain 
$\sem{n_0, a}=(1, 1+1+16)=(1,18)$, which is the final result.
\end{example}

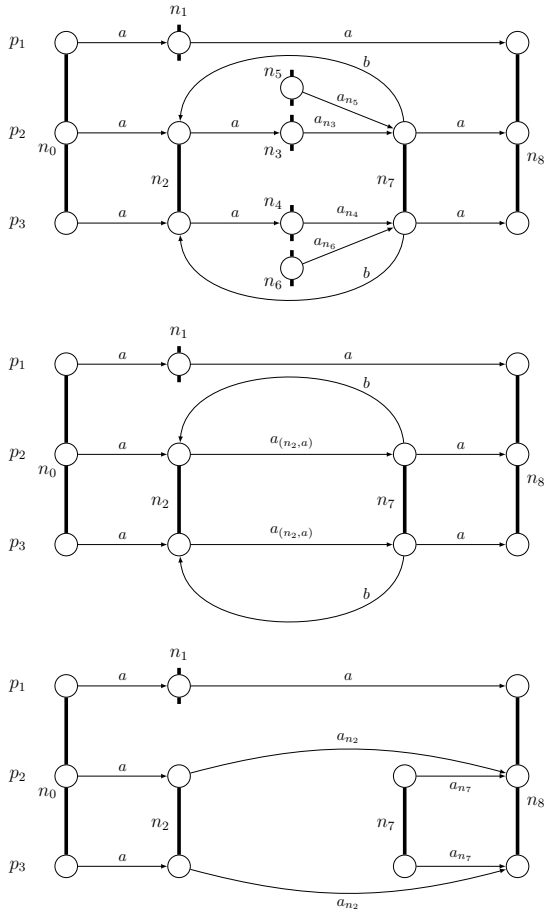
\begin{figure}[ht]
\centerline{\scalebox{0.60}{\def\hs{2.5}
\def\vs{2.0}
\begin{tikzpicture}
\vnego[ports=3,id=n0,spacing=\vs]{0,0}
\node[above left = 1.2cm and -0.1 cm of n0_P0, font=\large] {$n_0$};
\node[left = 0.5cm of n0_P2, font=\large] {$p_1$};
\node[left = 0.5cm of n0_P1, font=\large] {$p_2$};
\node[left = 0.5cm of n0_P0, font=\large] {$p_3$};
\vnego[ports=1,id=n1]{\hs,2*\vs}
\node[above = 0.2cm  of n1_P0, font=\large] {$n_1$};
\vnego[ports=2,id=n2,spacing=\vs]{\hs,0}
\node[above left = 0.5cm and -0.1cm of n2_P0, font=\large] {$n_2$};
\vnego[ports=1,id=n3]{2*\hs,\vs}
\node[below left = 0cm and -0.1cm of n3_P0, font=\large] {$n_3$};
\vnego[ports=1,id=n4]{2*\hs,0.0}
\node[above left = 0cm and -0.1cm of n4_P0, font=\large] {$n_4$};
\vnego[ports=1,id=n5]{2*\hs,1.5*\vs}
\node[above left = -0.1cm and -0.1cm of n5_P0, font=\large] {$n_5$};
\vnego[ports=1,id=n6]{2*\hs,-0.5*\vs}
\node[below left = -0.1cm and -0.1cm of n6_P0, font=\large] {$n_6$};
\vnego[ports=2,id=n7,spacing=\vs]{3*\hs,0}
\node[above left = 0.5cm and -0.1cm of n7_P0, font=\large] {$n_7$};
\vnego[ports=3,id=n8,spacing=\vs]{4*\hs,0}
\node[above right = 1.0cm and -0.1 cm of n8_P0, font=\large] {$n_8$};

\pgfsetarrowsend{latex}
\draw (n0_P2) -- (n1_P0) node [above,midway]{$a$};
\draw (n0_P1) -- (n2_P1) node [above,midway]{$a$};
\draw (n0_P0) -- (n2_P0) node [above,midway]{$a$};
\draw (n1_P0) -- (n8_P2) node [above,midway]{$a$};
\draw (n2_P1) -- (n3_P0) node [above,midway]{$a$};
\draw (n2_P0) -- (n4_P0) node [above,midway]{$a$};
\draw (n3_P0) -- (n7_P1) node [above,near start]{$a_{n_3}$};
\draw (n4_P0) -- (n7_P0) node [above,midway]{$a_{n_4}$};
\draw (n5_P0) -- (n7_P1) node [above,midway]{$a_{n_5}$};
\draw (n6_P0) -- (n7_P0) node [above,near start]{$a_{n_6}$};
\draw (n7_P1) -- (n8_P1) node [above,midway]{$a$};
\draw (n7_P0) -- (n8_P0) node [above,midway]{$a$};
\draw (n7_P1) to [bend right = 85] node [above,near start]{$b$} (n2_P1);
\draw (n7_P0) to [bend left = 85] node [above,near start]{$b$} (n2_P0);

\end{tikzpicture}}}
\centerline{\scalebox{0.60}{\def\hs{2.5}
\def\vs{2.0}
\begin{tikzpicture}
\vnego[ports=3,id=n0,spacing=\vs]{0,0}
\node[above left = 1.2cm and -0.1 cm of n0_P0, font=\large] {$n_0$};
\node[left = 0.5cm of n0_P2, font=\large] {$p_1$};
\node[left = 0.5cm of n0_P1, font=\large] {$p_2$};
\node[left = 0.5cm of n0_P0, font=\large] {$p_3$};
\vnego[ports=1,id=n1]{\hs,2*\vs}
\node[above = 0.2cm  of n1_P0, font=\large] {$n_1$};
\vnego[ports=2,id=n2,spacing=\vs]{\hs,0}
\node[above left = 0.5cm and -0.1cm of n2_P0, font=\large] {$n_2$};
\vnego[ports=2,id=n7,spacing=\vs]{3*\hs,0}
\node[above left = 0.5cm and -0.1cm of n7_P0, font=\large] {$n_7$};
\vnego[ports=3,id=n8,spacing=\vs]{4*\hs,0}
\node[above right = 1.0cm and -0.1 cm of n8_P0, font=\large] {$n_8$};

\pgfsetarrowsend{latex}
\draw (n0_P2) -- (n1_P0) node [above,midway]{$a$};
\draw (n0_P1) -- (n2_P1) node [above,midway]{$a$};
\draw (n0_P0) -- (n2_P0) node [above,midway]{$a$};
\draw (n1_P0) -- (n8_P2) node [above,midway]{$a$};
\draw (n2_P1) -- (n7_P1) node [above,midway]{$a_{(n_2,a)}$};
\draw (n2_P0) -- (n7_P0) node [above,midway]{$a_{(n_2,a)}$};
\draw (n7_P1) -- (n8_P1) node [above,midway]{$a$};
\draw (n7_P0) -- (n8_P0) node [above,midway]{$a$};
\draw (n7_P1) to [bend right = 85] node [above,near start]{$b$} (n2_P1);
\draw (n7_P0) to [bend left = 85] node [above,near start]{$b$} (n2_P0);

\end{tikzpicture}}}
\centerline{\scalebox{0.60}{\def\hs{2.5}
\def\vs{2.0}
\begin{tikzpicture}
\vnego[ports=3,id=n0,spacing=\vs]{0,0}
\node[above left = 1.2cm and -0.1 cm of n0_P0, font=\large] {$n_0$};
\node[left = 0.5cm of n0_P2, font=\large] {$p_1$};
\node[left = 0.5cm of n0_P1, font=\large] {$p_2$};
\node[left = 0.5cm of n0_P0, font=\large] {$p_3$};
\vnego[ports=1,id=n1]{\hs,2*\vs}
\node[above = 0.2cm  of n1_P0, font=\large] {$n_1$};
\vnego[ports=2,id=n2,spacing=\vs]{\hs,0}
\node[above left = 0.5cm and -0.1cm of n2_P0, font=\large] {$n_2$};
\vnego[ports=2,id=n7,spacing=\vs]{3*\hs,0}
\node[above left = 0.5cm and -0.1cm of n7_P0, font=\large] {$n_7$};
\vnego[ports=3,id=n8,spacing=\vs]{4*\hs,0}
\node[above right = 1.0cm and -0.1 cm of n8_P0, font=\large] {$n_8$};

\pgfsetarrowsend{latex}
\draw (n0_P2) -- (n1_P0) node [above,midway]{$a$};
\draw (n0_P1) -- (n2_P1) node [above,midway]{$a$};
\draw (n0_P0) -- (n2_P0) node [above,midway]{$a$};
\draw (n1_P0) -- (n8_P2) node [above,midway]{$a$};
\draw (n7_P1) -- (n8_P1) node [below,midway]{$a_{n_7}$};
\draw (n7_P0) -- (n8_P0) node [above,midway]{$a_{n_7}$};
\draw (n2_P1) to [bend left = 15] node [above,midway]{$a_{n_2}$} (n8_P1);
\draw (n2_P0) to [bend right = 15] node [below,midway]{$a_{n_2}$} (n8_P0);

\end{tikzpicture}}}
\caption{\small Three reduced negotiation diagrams constructed by Algorithm
  \ref{alg:red} started on the negotiation diagram of Figure \ref{fig:decomp}.}
\label{fig:intermediate}
\end{figure}

Finally, to estimate the complexity of the algorithm, we should also
examine how to 
calculate $\Redl(\Nn)$ and $\Redn(\Nn)$ in lines~$7$ and~$14$ of the algorithm.
In order to calculate $\Redl(\Nn)$ we need to know $F(\ell)$. 
The invariant says that $\Nnl$ at that point has one trace.
So it is enough to execute this trace in $\Nnl$ to reach $F(\ell)$.
Similarly, for $\Redn(\Nn)$ we need to know $F(n)$.
The invariant says that $\Nn_n$ is a replication, so it is just a flow
graph with one final node $n_\fin$. We can  execute in $\Nn$ any path leading 
to the exit to calculate $F(n)$. 
To sum up, the calculations of $\Redl(\Nn)$ and $\Redn(\Nn)$ in
lines~$7$ and~$14$ can be done in linear time with respect to the size
of $\Nn$.

We summarize the results of this section:

\begin{theorem}\label{thm:main}
  Let $\Nn$ be a sound deterministic negotiation diagram, and $\sem{\_}$ a
  Mazurkiewicz invariant analysis framework. Algorithm~\ref{alg:red}
  stops and outputs $\sem{\Nn}$. The complexity of the algorithm is
  $\Oo(|\Nn|(C+|\Nn|))$ where $|\Nn|$ is the size of $\Nn$, and $C$ is
  the cost of $\sem{\_}$ analysis for flow-graphs of  size $|\Nn|$.  
\end{theorem}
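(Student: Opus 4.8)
The plan is to establish the three assertions---that the algorithm halts, that it outputs $\sem{\Nn}$, and that it runs within the stated bound---by assembling the lemmas already proved and applying the invariant lemma and Lemma~\ref{lem:equivl} to the \emph{current} diagram at each step; the real work is checking that the hypotheses of these lemmas are preserved along the computation. For termination, the decisive observation is that $\Redl$ and $\Redn$ alter only the outcome set and the transition function of the single node $n$ with $\dom(n)=X$, leaving every other node's transition untouched. By Theorem~\ref{thm:unique} the data $\first{m}$ and $\last{m}$ of a node $m$ depend only on nodes whose domain is included in $\dom(m)$, so reducing $n$ cannot change $\first{m},\last{m}$ for any $m$ with $\dom(m)\not\supseteq X$; in particular every node of smaller or incomparable domain that was already reduced stays reduced. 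Since a full pass of the while loop turns every non-reduced node of domain $X$ into a reduced one, the number of non-reduced nodes strictly decreases each pass, and as there are finitely many nodes the loop halts.

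For correctness I argue by induction on the reduction steps, using that $\Redl$ and $\Redn$ again produce sound deterministic diagrams, so the lemmas remain applicable. When lines~6 and~11 are reached, the chosen node is $\prec$-minimal among the non-reduced ones, so all nodes of strictly smaller domain are reduced in the current diagram; the invariant lemma then gives that $\Nnl$ is a one-trace negotiation and---after the first loop has reduced every location of domain $X$---that $\Nnn$ is a replication. For a one-trace negotiation all successful runs are Mazurkiewicz equivalent, so by Mazurkiewicz invariance $\sem{\Nnl}=\sem{w}$ for any successful run $w$, which is exactly the MFP of the induced sequence, returned as $\mathit{MOP}(\Nnl)$; for a replication the reachable part is a sequential flow-graph in disguise, whose MFP equals both $\sem{\Nnn}$ (by the correspondence between MOP and MFP for sequential flow-graphs) and the returned $\mathit{MOP}(\Nnn)$. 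Both subnegotiations are sound by Lemmas~\ref{lem:nsound} and~\ref{lem:lsound}, and scheduler-independent by Theorem~\ref{thm:mazur}, so these equalities are meaningful. Consequently the value assigned to each new location matches the hypothesis of Lemma~\ref{lem:equivl}, which then gives $\sem{\Nn}=\sem{\Redl(\Nn)}$ and $\sem{\Nn}=\sem{\Redn(\Nn)}$; hence $\sem{\Nn}$ is invariant along the whole run. When the loop stops every node is reduced, so the initial node $n_\init$ (of domain $\Proc$) has a single outcome $\ell$ leading directly to $\last{n_\init}=\Cfin$, forced by soundness. Thus $\ell$ is the only reachable location of the final diagram, its only successful run, and $\sem{\ell}=\sem{\Nn}$, which is returned.

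For the complexity, each node and each location is selected exactly once, since once reduced it stays reduced and is never revisited; hence there are $\Oo(|\Nn|)$ calls to $\mathit{MOP}$ and $\Oo(|\Nn|)$ reductions. Every call runs on a sound deterministic diagram that is a one-trace negotiation or a replication, i.e.\ on a sequential flow-graph of size at most $|\Nn|$, at cost $C$; and computing $\last{\ell}$ or $\last{n}$ and applying a reduction each require only executing one trace or one path to the exit and rewriting a single node, in time $\Oo(|\Nn|)$. Summing over the $\Oo(|\Nn|)$ selections gives $\Oo(|\Nn|(C+|\Nn|))$.

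The step I expect to be the main obstacle is the consistency of the sequential reductions: that reducing some nodes of the current domain $X$ does not invalidate the values $\mathit{MOP}(\Nnn)$ and the configurations $\last{n}$ computed for the remaining domain-$X$ nodes, nor the subnegotiations $\Nnl,\Nnn$ still to be analysed for this same $X$. This is why the algorithm first reduces all locations of domain $X$, then computes all the node MOPs, and only afterwards reduces the nodes; the justification rests on the locality of $\Redl,\Redn$ (they modify only the transition of a single domain-$X$ node, which lies outside every subnegotiation rooted at a strictly smaller domain and is not reused after the first step of a strict $X$-sequence) together with the invariant lemma applied to the current diagram, and it is here that the argument must be carried out with the most care.
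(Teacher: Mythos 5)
Your proof follows essentially the same route as the paper's: Lemma~\ref{lem:equivl} shows each reduction preserves $\sem{\Nn}$, the invariants lemma guarantees that every call to ${\it MOP}$ is made on a one-trace negotiation or a replication (hence on a sequential flow-graph of size at most $|\Nn|$), and the same counting of reductions together with the linear-time computation of the $F$-values yields termination and the $\Oo(|\Nn|(C+|\Nn|))$ bound. The only differences are bookkeeping---you count non-reduced nodes per pass where the paper counts reachable locations per reduction---and your explicit attention to the fact that reducing domain-$X$ nodes does not invalidate the reduced status or $F$-values of other nodes, a point the paper itself leaves implicit, so this is a faithful rendering of the intended argument rather than a deviation.
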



\section{Anti-Patterns and Gen/Kill Analyses}
\label{sec:genkill}

We saw in Section \ref{subsec:frameworks} that the problem of detecting an anti-pattern
can be naturally captured as an analysis framework, which however is not
Mazurkiewicz-invariant. We now show that, while the {\em natural} framework is not 
Mazurkiewicz invariant, an equivalent framework that returns the same result is. 
Then we sketch how this result generalizes to arbitrary Gen/Kill analysis
frameworks, a much studied class  \cite{Hecht77}.

We need some basic notions of Mazurkiewicz trace theory~\cite{book-of-traces}.
Given a sequence $w=w_1 \cdots w_n \in\Ll^\star$, define $i\tleq' j$ for two
positions $i,j$ of $w$ if $i\leq j$ and $w_i$ is not independent
from $w_j$ (see Definition~\ref{def:mazur}). Further, define
$\tleq$ is the transitive closure of the relation $\tleq'$. It is well-known
that if $w \equiv v$, i.e., if $w,v$ are
Mazurkiewicz equivalent, then $\tleq_w$ and $\tleq_v$ 
are isomorphic as labeled partial orders (the labels being the locations). 
We write $\bet(i,j)$ for the set of positions between $i$ and $j$, i.e., 
$\set{k : i\tleq k \tleq j}$. 

Recall that anti-pattern analysis asked if there is an execution with
the property ``$w\in L$'' for the language $L=\Ll^*\ell_1(\bK)^*\ell_2\Ll^*$.
Instead of this property consider the following property of $w$:
 \begin{quote}
  (*) there are two positions $i,j$ such that $w_i=\ell_1$,
  $w_j=\ell_2$,  $\bet(i,j)\cap K=\es$, and   not $j\tleq i$.
\end{quote}
The special case of this condition is when $\bet(i,j)=\es$; then, since
not  $j\tleq i$, we actually know that the two positions $i,j$ are
concurrent, so $w$ is Mazurkiewicz equivalent to $w_1\ell_1\ell_2 w_2$. 

\begin{lemma}
\label{lem:traceform}
  A negotiation diagram $\Nn$ has a successful run $w\in L$ iff it has
  a successful run $v$ with  property (*).
\end{lemma}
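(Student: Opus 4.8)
The plan is to prove both implications by moving freely within a single Mazurkiewicz class: by Lemma~\ref{lem:mazur}, if $v$ is a successful run and $w\equiv v$ then $w$ is also a successful run, so it suffices to rearrange one run into the other by commuting independent locations. I will phrase the rearrangement through linear extensions of the trace order $\tleq_v$, using the standard fact of trace theory that the words read off $v$ along the linear extensions of $\tleq_v$ are exactly the words $\equiv v$ (the companion to the order-isomorphism statement recalled just before the lemma).

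For the easy direction, from $L$ to (*), the same run already witnesses the property. Given a successful $w\in L$, factor it as $w=u_0\,\ell_1\,u_1\,\ell_2\,u_2$ with $u_1$ free of locations of $K$, and let $i<j$ be the positions of the displayed $\ell_1,\ell_2$. Since $\tleq$ refines the order of positions, $j\tleq i$ fails. Moreover $\bet(i,j)\incl\set{i,i+1,\dots,j}$, its interior positions lie in $u_1$ and hence avoid $K$, and (as the generating locations $\ell_1,\ell_2$ are not in $K$) the endpoints avoid $K$ too; thus $\bet(i,j)\cap K=\es$. So $w$ itself has property (*), and we take $v=w$.

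The substantial direction is from (*) to $L$. Given a successful run $v$ with witnesses $i,j$, I will construct a linear extension of $\tleq_v$ in which $i$ precedes $j$ and the only positions strictly between them are the ones \emph{forced} to be there, namely $B=\bet(i,j)\setminus\set{i,j}$, which by hypothesis carry no location of $K$. Partition the positions into the downset $\set{k:k\tleq i}$, the middle $B$, the upset $\set{k:j\tleq k}$, and a remainder $R$; using that $j\tleq i$ fails together with antisymmetry of $\tleq$ one checks the first three are pairwise disjoint. Place each $r\in R$ before $i$ when $i\tleq r$ fails and after $j$ otherwise, and order the whole run as $[\,\text{downset},\,R_{\text{before}}\,]\ \ell_1\ [\,B\,]\ \ell_2\ [\,\text{upset},\,R_{\text{after}}\,]$. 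Reading off the word $w$ gives $w\equiv v$, hence a successful run, and its shape is $\Ll^*\ell_1(\bK)^*\ell_2\Ll^*=L$, since the middle block $B$ avoids $K$.

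The main obstacle is checking that this block ordering really is a linear extension of $\tleq_v$, i.e. that no comparability $k\tleq k'$ points from a later block back to an earlier one. This is a short but careful case analysis over pairs of blocks; each potential backward edge contradicts either antisymmetry, the inclusion $B\incl\bet(i,j)$, or the placement rule for $R$ (an element sent after $j$ satisfies $i\tleq r$, one sent before $i$ satisfies that $i\tleq r$ fails). Once the extension is shown to exist, the conclusion is immediate from Lemma~\ref{lem:mazur} and the linear-extension fact above; the concurrent special case $B=\es$ mentioned in the text is simply the instance in which the middle block is empty.
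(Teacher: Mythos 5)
Your proof is correct. The paper in fact supplies no proof of this lemma (Section~\ref{sec:genkill} has no counterpart in the appendix); the argument it gestures at---the remark that in the special case $\bet(i,j)=\emptyset$ the two positions are concurrent and can be commuted next to each other---is exactly the general construction you carry out in full: build a linearization of the trace order of $v$ in which precisely the positions of $\bet(i,j)$, which avoid $K$ by hypothesis, lie between the chosen occurrences of $\ell_1$ and $\ell_2$, check block by block that no comparability is violated, and conclude with Lemma~\ref{lem:mazur} together with the standard fact that linearizations of the trace order are exactly the words in the equivalence class. Your explicit flagging of the assumption $\ell_1,\ell_2\notin K$ is also right: it is implicit in the paper's anti-pattern setup, and without it the direction from $L$ to (*) would fail, since $i$ and $j$ themselves belong to $\bet(i,j)$ whenever $i\tleq j$.
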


It remains to set a static analysis framework for tracking 
property (*).  Since this is a property of Mazurkiewicz traces, the framework
is Mazurkiewicz invariant. 

We need one more piece of notation. For a word $w$ and a process
$p$ we write $\bet(\ell_1,p)$ for the set $\bet(i,j)$ where $i$ is
the last occurrence of $\ell_1$, and $j$ is the last occurrence of a
location using process $p$.

We define now an auxiliary function $\a$ from sequences to
$\Pp(\Proc)^2\cup\set{\top}$. 
We set $\a(w)=\top$ if $w$ has property (*), otherwise
$\a(w)=(P_A,P_B)$ where
\begin{itemize}
\item $p\in P_A$ if $\bet(\ell_1,p)\not=\es$ and $\bet(\ell_1,p)\cap K=\es$,
\item $p\in P_B$ if $\bet(\ell_1,p)\not=\es$ and $\bet(\ell_1,p)\cap K\not=\es$.
\end{itemize}


We describe a \PTIME\ computable function $F$ such
that for every sequence $w$ and location $\ell$:
\begin{equation*}
  \a(w\,\ell)=F(\a(w),\ell)
\end{equation*}
For defining $F$ we first describe the update of
$\bet(\ell_1,p)$ when extending $w$ by $\ell$. Let us detail the more
interesting case where $\ell \not=\ell_1$. Observe that the set of
positions $\bet(\ell_1,p)$ does not change if $p \notin
\dom(\ell)$. If $p \in\dom(\ell)$ then the update of $\bet(\ell_1,p)$ is
the union of $\bet(\ell_1,q)$ over all $q \in \dom(\ell)$, plus
$\ell$. According to these observations, we define 
the update of $F$ in the case $\ell \not=\ell_1$ goes as follows. If 
$p \notin \dom(\ell)$ then
process $p$ remains in its set, $P_A$ or $P_B$. If $p
\in\dom(\ell)$, then: $p$ goes into the set $P'_B$ if 
either there was some $q \in\dom(\ell)$ in $P_B$, or $\ell \in
K$ and there is some $q \in\dom(\ell)$ in $P_A$; $p$ goes into the set 
$P'_A$ if $\ell \notin K$, no $q \in\dom(\ell)$ is in $P_B$ and there 
is at least one $q \in\dom(\ell)$ in $P_A$.  

The function $F$  can be extended to a monotone and distributive
function $\wh F$ on 
$\Pp(\Proc)^3\cup\set{\top}$ ordered componentwise, by turning $\a(w)$
into a  partition of $\Proc$ (adding a component $P_C=\Proc \setminus
(P_A \cup P_B)$) and embedding it into a suitable function
over $\Pp(\Proc)^3\cup\set{\top}$. 

With the help of the function $\wh F$ we define now the value of each location:
\begin{equation*}
  \sem{\ell}(P_A,P_B,P_C)=\wh F((P_A,P_B,P_C),\ell)
\end{equation*}
Observe that this gives us $\sem{w}=\a(w)$ for every sequence $w$. 
The above discussion yields two lemmas showing that $\sem{\_}$ is a
Mazurkiewicz invariant analysis framework that can be computed in
\PTIME, since $\wh F$ can be computed in \PTIME.

\begin{lemma}
  $\sem{\cdot}$ is Mazurkiewicz-invariant.
\end{lemma}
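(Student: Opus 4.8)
The goal is to establish, per the definition of Mazurkiewicz invariance, that $\sem{\ell}\circ\sem{\ell'}=\sem{\ell'}\circ\sem{\ell}$ for every pair of independent locations $\ell,\ell'$. The plan is to reduce this to a single structural fact about the auxiliary function $\a$, namely that it factors through Mazurkiewicz equivalence: $\a(w)=\a(v)$ whenever $w\equiv v$. Granting this, observe that by the definition of the abstract semantics of a run we have $\sem{\ell\ell'}=\sem{\ell'}\circ\sem{\ell}$ and $\sem{\ell'\ell}=\sem{\ell}\circ\sem{\ell'}$, while $\ell\ell'\equiv\ell'\ell$ because $\ell$ and $\ell'$ are independent. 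Using the identity $\sem{w}=\a(w)$ recorded just above the lemma, the factoring fact gives $\sem{\ell\ell'}=\sem{\ell'\ell}$, which is exactly the required commutation; more generally it yields $\sem{w}=\sem{v}$ for all $w\equiv v$, the property invoked downstream in Theorem~\ref{thm:mazur}.

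The core of the argument is therefore to show that $\a$ depends only on the Mazurkiewicz trace of its argument. First I would recall the quoted fact that $w\equiv v$ implies $\tleq_w$ and $\tleq_v$ are isomorphic as labeled partial orders. It then suffices to check that every ingredient in the definition of $\a$ is expressible purely in terms of this labeled partial order. Property~(*) is immediate: it quantifies over an $\ell_1$-labeled position $i$ and an $\ell_2$-labeled position $j$ subject to $\bet(i,j)\cap K=\es$ and $\neg(j\tleq i)$, and $\bet(i,j)=\set{k:i\tleq k\tleq j}$ is manifestly order-theoretic, so property~(*) transfers across the isomorphism. The sets $P_A,P_B$ (and thus $P_C$) are read off from the quantities $\bet(\ell_1,p)$ and their intersections with $K$, so it remains to treat these.

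The main obstacle is that $\bet(\ell_1,p)$ is defined through \emph{last} occurrences — the last occurrence of $\ell_1$ and the last occurrence of a location using $p$ — which are notions of the \emph{linear} word order and are not, a priori, preserved by a mere poset isomorphism. I would close this gap with a chain argument: any two $\ell_1$-labeled positions share the domain $\dom(\ell_1)$ and so are non-independent, hence $\tleq$-comparable, so the $\ell_1$-positions form a chain with a unique $\tleq$-maximum; likewise all positions whose location uses $p$ are pairwise non-independent and form a $\tleq$-chain with a unique maximum. Consequently ``last occurrence of $\ell_1$'' and ``last $p$-location'' are the maxima of label-definable chains, are preserved by the labeled-poset isomorphism, and therefore so are $\bet(\ell_1,p)$ and $\bet(\ell_1,p)\cap K$. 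This establishes $\a(w)=\a(v)$ for $w\equiv v$ and completes the proof. Intuitively the same disjointness underlies the commutation directly: for $\ell\neq\ell_1$ the update $\wh F(\cdot,\ell)$ rewrites and consults only the coordinates of processes in $\dom(\ell)$, so updates by locations with disjoint domains cannot interfere, with $\top$ absorbing in either order.
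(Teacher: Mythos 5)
Your proposal follows essentially the same route as the paper: the paper offers no proof of this lemma beyond the remark that property (*) and $\a$ are ``a property of Mazurkiewicz traces,'' i.e.\ definable from the labeled partial order $\tleq$, and your second paragraph is exactly that argument made rigorous. In particular your chain argument --- positions labeled $\ell_1$, or positions using a fixed process $p$, are pairwise dependent and hence totally ordered by $\tleq$, so ``last occurrence'' is an order-theoretic maximum preserved by the labeled-poset isomorphism --- is precisely the detail the paper leaves implicit, and it is the only non-obvious point in showing that $\a$ is a trace invariant. One caution about your first step, though: Mazurkiewicz invariance is by definition an equation between functions on the \emph{whole} lattice $\Pp(\Proc)^3\cup\set{\top}$, whereas trace invariance of $\a$ together with the identity $\sem{u}=\a(u)$ constrains $\sem{\ell}\circ\sem{\ell'}$ and $\sem{\ell'}\circ\sem{\ell}$ only on the values $\a(u)$, $u\in\Ll^*$; these values are partitions of $\Proc$ (or $\top$) and do not generate the lattice under joins or meets (e.g.\ $(\set{p},\es,\es)$ is neither a join nor a meet of such values), so monotonicity and distributivity cannot propagate the agreement to arbitrary triples, and off the trace values everything depends on how $F$ is extended to the ``suitable'' function $\wh F$, which the paper never pins down. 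Consequently, the load-bearing step for the lemma as literally stated is your closing remark rather than the reduction through $\a$: one should \emph{define} $\wh F(\cdot,\ell)$ so that it consults and rewrites only the coordinates of processes in $\dom(\ell)$, with $\top$ absorbing, and check commutation directly on all triples --- taking care with the case $\ell=\ell_1$, which your remark excludes, since executing $\ell_1$ also resets the coordinates of processes \emph{outside} $\dom(\ell_1)$ (their sets $\bet(\ell_1,p)$ become empty, as the reference point ``last occurrence of $\ell_1$'' moves), so that pair commutes for a different reason, namely that the $\ell_1$-update overwrites whatever an independent $\ell'$ did. With that remark promoted from intuition to the definition of $\wh F$, and the $\ell_1$ case added, your argument is complete; as written it is in any event at least as rigorous as the paper's own treatment, which consists only of the discussion preceding the lemma.
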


\begin{lemma} 
Consider a negotiation diagram $\Nn$ 
  over set of locations $\Ll$. 
For every sequence $w \in \Ll^*$: 
 $\sem{w}(\es,\es,\Proc)=\top$ iff $w\in L$. 
 Moreover,
 $\sem{\Nn}(\es,\es,\Proc)=\top$ iff $\Nn$ has a successful 
    execution in $L$.
\end{lemma}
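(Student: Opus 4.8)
The plan is to route everything through the pointwise identity $\sem{w}(\es,\es,\Proc)=\a(w)$ and then lift to runs via the join and Lemma~\ref{lem:traceform}. First I would prove, by induction on $|w|$, that $\sem{w}(\es,\es,\Proc)=\a(w)$ for every $w\in\Ll^*$, reading $\a(w)$ through its partition encoding $(P_A,P_B,P_C)$ (and as $\top$ when $w$ has property (*)). The base case is $\a(\epsilon)=(\es,\es,\Proc)$, which holds since no position carries $\ell_1$, so $\bet(\ell_1,p)=\es$ for every $p$. For the step, $\sem{w\ell}(\es,\es,\Proc)=\sem{\ell}\big(\sem{w}(\es,\es,\Proc)\big)=\wh F(\a(w),\ell)$ by the induction hypothesis and the definition $\sem{\ell}=\wh F(\cdot,\ell)$; since $\a(w)$ lies in the image of the partition encoding, $\wh F$ agrees there with $F$, and the defining equation $\a(w\ell)=F(\a(w),\ell)$ closes the induction. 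Once the value $\top$ is reached it is preserved, because $\wh F(\top,\ell)=\top$ and property (*) is inherited by any extension of $w$. This already yields $\sem{w}(\es,\es,\Proc)=\top$ iff $\a(w)=\top$, i.e.\ iff $w$ has property (*).

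It then remains to connect property (*) with membership in $L$. Here I would be careful: property (*) depends only on the Mazurkiewicz trace of $w$ and is therefore invariant under $\equiv$, whereas $L$ is not (for instance $\ell_1 k\ell_2$ with $k\in K$ independent of both $\ell_1,\ell_2$ has property (*) but is not in $L$, although it is equivalent to $\ell_1\ell_2 k\in L$). The correct pointwise reading of the first statement is thus ``$\sem{w}(\es,\es,\Proc)=\top$ iff $w\equiv v$ for some $v\in L$'', which is exactly the word-level content of Lemma~\ref{lem:traceform}: one direction takes the positions $i,j$ witnessing $w\in L$ and checks that $\bet(i,j)\subseteq[i,j]$ is $K$-free and $i\tleq j$ (using $\ell_1,\ell_2\notin K$), so property (*) holds for $w$ itself; the converse reorders the $K$-free causal interval $\bet(i,j)$ to sit linearly between $\ell_1$ and $\ell_2$, invoking Lemma~\ref{lem:mazur} to keep the reordering a run in $L$.

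Finally, for the ``Moreover'' part I would evaluate the join semantics $\sem{\Nn}(\es,\es,\Proc)$ used for anti-pattern detection. Since $\sem{\_}$ is Mazurkiewicz invariant and $\Nn$ is deterministic, Theorem~\ref{thm:mazur} lets me fix any scheduler $S$ and write $\sem{\Nn,S}=\bigsqcup\{\sem{w}\mid w\text{ successful and compatible with }S\}$; by Lemma~\ref{lem:mazur2} each $\equiv$-class of successful runs contributes exactly one compatible representative of the same value, so this join equals the join over \emph{all} successful runs. The decisive observation is that the join of any family of non-$\top$ elements is again non-$\top$, being their componentwise union in the finite lattice $\Pp(\Proc)^3$; hence $\sem{\Nn}(\es,\es,\Proc)=\top$ iff some successful run $w$ already has $\sem{w}(\es,\es,\Proc)=\top$, i.e.\ has property (*). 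By Lemma~\ref{lem:traceform} this happens iff $\Nn$ has a successful run in $L$, which is the claim.

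The induction is routine; the two points where the real work sits are the lattice fact that the componentwise join can never manufacture $\top$ (so that detecting the anti-pattern is genuinely detecting a single $\top$-run rather than an artifact of the merge), and the gap between the non-invariant language $L$ and the invariant property (*). The latter is precisely where Lemma~\ref{lem:traceform} is indispensable, and I expect it to be the main conceptual obstacle, since the pointwise statement $\sem{w}(\es,\es,\Proc)=\top\iff w\in L$ must be understood up to $\equiv$ and only becomes an exact equivalence once quantified over the successful runs of $\Nn$.
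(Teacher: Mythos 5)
Your proof is correct, and it follows essentially the route the paper intends: the paper gives no separate proof of this lemma, relying precisely on the preceding discussion ($\sem{w}=\a(w)$ by construction of $\wh F$, Lemma~\ref{lem:traceform}, and Mazurkiewicz invariance together with Theorem~\ref{thm:mazur} and Lemma~\ref{lem:mazur2} to handle the quantification over schedulers), which is exactly what you reconstruct, including the decisive lattice observation that joins in the finite lattice $\Pp(\Proc)^3$ cannot manufacture $\top$.

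The one place where you go beyond the paper is your remark that the first claim is literally false at the word level, and this remark is valid: the word $\ell_1\,k\,\ell_2$ with $k\in K$ independent of both $\ell_1$ and $\ell_2$ has property (*) (the positions of $\ell_1$ and $\ell_2$ satisfy $\bet(i,j)\cap K=\es$ because the $K$-position is causally unrelated to both), hence $\sem{w}(\es,\es,\Proc)=\top$, yet $w\notin L$. So the pointwise claim only holds in the direction $w\in L\Rightarrow\top$, and your corrected reading --- $\sem{w}(\es,\es,\Proc)=\top$ iff $w\equiv v$ for some $v\in L$, i.e.\ iff $w$ has property (*) --- is the statement the paper's own machinery actually supports. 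This imprecision in the lemma's statement is harmless for the ``Moreover'' part, which is what the section uses: there the set of successful runs is closed under $\equiv$ (Lemma~\ref{lem:mazur}), so quantifying over it collapses the difference between $L$ and its trace closure, exactly as Lemma~\ref{lem:traceform} encapsulates and as your final paragraph argues.
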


\subsection{Generalization to Gen/Kill analyses}
We consider general Gen/Kill analyses\footnote{Although not in bitvector form,
which we leave for future work (see the conclusions).}. We are given a set
of locations $G\incl \Ll$ that \emph{generate} something, and a set of locations
$K\incl \Ll$ (not necessarily disjoint with $G$) that \emph{kill} this something. The
lattice $\Dd$ has just two elements $\{0,1\}$, with $\wedge$ and $\vee$ as lattice operations,
and the transformer of a program instruction 
$\ell$ is of the form $\sem{\ell}(v) = (v \wedge (\ell \not\in K)) \vee (\ell \in G)$.
Classical examples from the static analysis of programs are the computation of reaching definitions, available expressions, live variables, very busy expressions, where the ``something''
are values assigned to a variable or an expression \cite{nielson}. 
The four main classes of Gen/Kill analyses differ
only on whether control-flow is interpreted forward or backwards, and
on whether we do ``merge over all paths'' or ``meet over all paths''.

\begin{itemize}
\item \textbf{may/forward}. For some configuration $C$ there is an  execution
  $C_\init\act{w}  C$ with $w\in\Ll^*G(\bar K)^*\ell$.
\item \textbf{must/forward}. For every configuration $C$ and every execution
  $C_\init\act{w} C$, if  $w$ ends with $\ell$ then $w\in \Ll^*G(\bK)^*\ell$.
\item \textbf{may/backward}. For some reachable configuration $C$ there is an 
  execution $C\act{w} C_\fin$ with  $w\in \ell(\bK)^*G\Ll^*$.
\item \textbf{must/backward}. For every reachable configuration $C$ and every execution
  $C\act{w} C_\fin$, if $w$ starts with $\ell$ then $w\in \ell(\bK)^*G\Ll^*$.
\end{itemize}

Observe that in backward properties we require that a configuration $C$
is reachable from the initial configuration. For forward properties
we do not need to assume that a configuration is co-reachable from the
final configuration, as this will be immediately implied by
soundness. 

The two existential properties above can be
expressed in terms of the existence of successful executions of a
particular form: executions $C_\init\act{w} C_\fin$ with $w$ belonging
to some language.  The same is true for the negation of the universal properties. 
Consider the languages given by regular expressions:
\begin{enumerate}
\item $E_1=\Ll^* \, G(\bK)^* \ell \,\Ll^*$,
\item $E_2=(\bK\cap\bG)^* \ell \,\Ll^*\cup \Ll^*(K\cap\bG)(\bK\cap\bG)^* \ell\,\Ll^*$,
\item $E_3=\Ll^*\, \ell(\bK)^*G\,\Ll^*$,
\item $E_4=\Ll^* \,\ell(\bK\cap\bG)^*\cup \Ll^*\, \ell(\bK\cap\bG)^*(K\cap\bG)\,\Ll^*$.
\end{enumerate}

\begin{lemma}\label{lem:reduction}
For a sound negotiation diagram we have the following:
  \begin{itemize}
  \item may/forward is equivalent to $\exists C_\init\act{w} C_\fin$
    with $w\in E_1$.
  \item negation of must/forward is equivalent to  $\exists C_\init\act{w} C_\fin$
    with $w\in E_2$. 
  \item may/backward is equivalent to $\exists C_\init\act{w} C_\fin$
    with $w\in E_3$.
  \item negation of must/backward is equivalent to $\exists C_\init\act{w} C_\fin$
    with $w\in E_4$. 
  \end{itemize}
\end{lemma}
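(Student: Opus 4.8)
The plan is to prove all four equivalences from a single template, and then supply the one genuinely combinatorial step needed for each. Each Gen/Kill property talks about a \emph{partial} run — a forward run $C_\init \act{w} C$ ending at the test location $\ell$, or a backward run $C \act{w} C_\fin$ issued from a \emph{reachable} $C$ — whereas the right-hand side of each item talks about a \emph{successful} run $C_\init \act{v} C_\fin$ whose trace lies in $E_i$. The bridge between them is, for the forward items, soundness: every partial run from $C_\init$ extends to a successful run (this is the definition of soundness, and it also supplies the co-reachability the forward properties silently rely on). For the backward items the bridge is the reachability of $C$ built into the property: we prepend an arbitrary witnessing run $C_\init \act{u} C$. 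In the reverse direction of every item we do the opposite, cutting $v$ at the distinguished occurrence of $\ell$ and keeping the relevant prefix or suffix; these directions need nothing beyond closure of runs under taking prefixes and suffixes.

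I would first dispatch the two ``may'' items, which are immediate. For \textbf{may/forward}, a partial run $C_\init \act{w} C$ with $w \in \Ll^* G(\bK)^*\ell$ extends by soundness to $C_\init \act{wz} C_\fin$ for some $z \in \Ll^*$, and then $wz \in \Ll^* G(\bK)^*\ell\,\Ll^* = E_1$; conversely any successful $v \in E_1$ factors as $v = w z$ with $w \in \Ll^* G(\bK)^*\ell$ the prefix up to the highlighted $\ell$, and the prefix run $C_\init \act{w} C$ witnesses may/forward. \textbf{may/backward} is the mirror image: reachability of $C$ gives $C_\init \act{u} C$ with $u \in \Ll^*$, so $uw \in \Ll^*\ell(\bK)^* G\,\Ll^* = E_3$, and conversely we split $v \in E_3$ at the highlighted $\ell$, take $C$ to be the configuration reached by the prefix (hence reachable), and read the suffix as the backward run.

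The heart of the argument, and the step I expect to be the main obstacle, is the two ``must'' items, where I must show that the \emph{complement} of the must-language is captured exactly by the two-term unions $E_2$ and $E_4$. I would prove this by a case analysis on the \emph{decisive} locations, those lying in $G \cup K$. For \textbf{must/forward}, writing a violating run as $w = w'\ell$, the failure $w \notin \Ll^* G(\bK)^*\ell$ is equivalent to $w' \notin \Ll^* G(\bK)^*$, which holds iff the rightmost decisive location of $w'$ lies in $K \cap \bG$, or there is no decisive location at all — a rightmost decisive location in $G$, even if also in $K$, would put $w'$ back into $\Ll^* G(\bK)^*$, since nothing decisive, hence no kill, follows it. The first alternative gives $w' \in \Ll^*(K\cap\bG)(\bK\cap\bG)^*$ and the second $w' \in (\bK\cap\bG)^*$; appending $\ell$ and, via soundness, a completion $z \in \Ll^*$ lands exactly in the two terms of $E_2$. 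Symmetrically, for \textbf{must/backward} I write a violating run as $w = \ell w_2$, note $w \notin \ell(\bK)^* G\,\Ll^*$ iff $w_2 \notin (\bK)^* G\,\Ll^*$, and observe this holds iff the \emph{leftmost} decisive location of $w_2$ lies in $K \cap \bG$ or there is none; prepending the reachability witness $u \in \Ll^*$ lands in the two terms of $E_4$. The reverse directions of both items simply read off the required partial run from the factorization of $v \in E_2$ (resp.\ $E_4$) at the highlighted $\ell$, needing no soundness.

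Thus soundness is used only in the two forward directions (for completion and the implicit co-reachability), reachability of $C$ only in the two backward directions, and the only nonroutine content is the decisive-location case analysis establishing that ``no live generation survives to $\ell$'' (forward) and ``no generation is met before a kill after $\ell$'' (backward) are exactly the complements described by $E_2$ and $E_4$.
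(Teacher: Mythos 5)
Your proof is correct. The paper states Lemma~\ref{lem:reduction} without proof (the appendix contains no proofs for Section~\ref{sec:genkill}), so there is no written argument to compare against; your route—soundness to complete forward witnesses into successful runs, reachability of $C$ to prefix backward witnesses, prefix/suffix splitting at the distinguished $\ell$ for the converses, and the rightmost/leftmost decisive-location case analysis showing that the complements of the must-languages are exactly the two-term unions $E_2$ and $E_4$—is precisely the intended routine argument signalled by the paper's remark before the lemma that soundness supplies the co-reachability needed for the forward properties.
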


The resource analysis at the beginning of this section corresponds to $E_3$. For each one
of $E_1, E_2, E_4$ it is easy to produce an analogon of Lemma \ref{lem:traceform} reformulating the property in trace terms. This allows us to check all properties in polynomial time using our algorithm for Mazurkiewicz invariant analysis frameworks.

\section{Conclusions}\label{sec:concl}

Previous work had identified deterministic negotiations -- a model of concurrency 
essentially isomorphic to free-choice 
workflow Petri nets -- as a class that has both  practical relevance for 
business process modeling, and admits \PTIME\ analysis for several
important properties once negotiations are assumed to be sound.
Moreover soundness is a natural prerequisite that can be checked in
\PTIME.

We have proposed a general notion of Mazurkiewicz-invariant analysis frameworks.
We have shown that computing the MOP in such frameworks for sound
deterministic negotiations is as easy as computing it for sequential
flow graphs (while computing the MOP of general frameworks takes
exponentially longer, unless \PTIME=\NP). 
This result not only subsumes all previous \PTIME\ results on analysis
of sound deterministic negotiations, but also yields \PTIME\
algorithms for new problems, like the computation of the
best-case/worst-case execution time, the detection of anti-patterns,
and general gen/kill analysis problems.
The result is particularly interesting for gen/kill problems: While
their natural formulation is not in terms of Mazurkiewicz-invariant
frameworks, we have shown that they can be reformulated as
such. 

In future work we plan to improve the degree of the polynomial
bounding the runtime of our algorithm.
Since our decomposition does not partition a negotiation into disjoint parts,
when computing MOPs of subnegotiations we are redoing computations.
Bounding the size of overlaps looks like a promising way of bringing
the complexity on a par with the sequential case. Section~\ref{sec:genkill} on gen/kill analyses raises a further question. In the sequential case, a gen/kill analysis can be simultaneously computed for all program points and all program variables (for example, one can compute for each program point the set of live variables at that point). This is not yet the case in our algorithm.
In fact, it is not clear what is a program point in a negotiation: If
one takes a configuration as a program point, then, since the number
of reachable configurations can grow exponentially in the size of the
negotiation diagram, any algorithm that explicitly computes the MOP
for each reachable configuration has exponential worst-case
complexity. 

\medskip 

\noindent \textbf{Acknowledgments.} We thank the anonymous reviewers for useful remarks, and J\"org Desel, Denis Kuperberg, and Philipp Hoffmann for helpful discussions.


\bibliographystyle{abbrv}
\bibliography{references}

\newcommand{\noopsort}[1]{} \newcommand{\singleletter}[1]{#1}
\begin{thebibliography}{10}

\bibitem{DBLP:journals/toplas/BouajjaniE13}
A.~Bouajjani and M.~Emmi.
\newblock Analysis of recursively parallel programs.
\newblock {\em {ACM} Trans. Program. Lang. Syst.}, 35(3):10:1--10:49, 2013.

\bibitem{DBLP:conf/pldi/ChughVJL08}
R.~Chugh, J.~W. Voung, R.~Jhala, and S.~Lerner.
\newblock Dataflow analysis for concurrent programs using datarace detection.
\newblock In {\em {SIGPLAN} 2008}, pages 316--326. {ACM}, 2008.

\bibitem{DBLP:conf/apn/DeselE15}
J.~Desel and J.~Esparza.
\newblock Negotiations and {Petri} nets.
\newblock In {\em PNSE'15}, volume 1372 of {\em {CEUR} Workshop Proceedings},
  pages 41--57. CEUR-WS.org, 2015.

\bibitem{book-of-traces}
V.~Diekert and G.~Rozenberg, editors.
\newblock {\em The Book of Traces}.
\newblock World Scientific, 1995.

\bibitem{DBLP:conf/apn/EisentrautHK013}
C.~Eisentraut, H.~Hermanns, J.~Katoen, and L.~Zhang.
\newblock A semantics for every {GSPN}.
\newblock In {\em {PETRI} {NETS} 2013}, volume 7927 of {\em LNCS}, pages
  90--109, 2013.

\bibitem{negI}
J.~Esparza and J.~Desel.
\newblock On negotiation as concurrency primitive.
\newblock In {\em CONCUR}, volume 8052 of {\em LNCS}, pages 440--454, 2013.
\newblock Extended version in arXiv:1307.2145.

\bibitem{negII}
J.~Esparza and J.~Desel.
\newblock On negotiation as concurrency primitive {II}: Deterministic cyclic
  negotiations.
\newblock In {\em FoSSaCS}, volume 8412 of {\em LNCS}, 2014.

\bibitem{DBLP:conf/fase/EsparzaH16}
J.~Esparza and P.~Hoffmann.
\newblock Reduction rules for colored workflow nets.
\newblock In {\em {FASE} 2016}, volume 9633 of {\em LNCS}, pages 342--358,
  2016.

\bibitem{DBLP:conf/qest/EsparzaHS16}
J.~Esparza, P.~Hoffmann, and R.~Saha.
\newblock Polynomial analysis algorithms for free choice probabilistic workflow
  nets.
\newblock In {\em {QEST} 2016}, volume 9826 of {\em LNCS}, pages 89--104, 2016.

\bibitem{DBLP:conf/concur/EsparzaKMW16}
J.~Esparza, D.~Kuperberg, A.~Muscholl, and I.~Walukiewicz.
\newblock Soundness in negotiations.
\newblock In {\em CONCUR 2016}, volume~59 of {\em LIPIcs}, pages 12:1--12:13,
  2016.

\bibitem{DBLP:conf/popl/EsparzaP00}
J.~Esparza and A.~Podelski.
\newblock Efficient algorithms for pre\({}^{\mbox{*}}\) and
  post\({}^{\mbox{*}}\) on interprocedural parallel flow graphs.
\newblock In {\em {POPL} 2000}, pages 1--11. {ACM}, 2000.

\bibitem{fahland2009instantaneous}
D.~Fahland, C.~Favre, B.~Jobstmann, J.~Koehler, N.~Lohmann, H.~V{\"o}lzer, and
  K.~Wolf.
\newblock Instantaneous soundness checking of industrial business process
  models.
\newblock In {\em Business Process Management}, pages 278--293. Springer, 2009.

\bibitem{DBLP:conf/bpm/FahlandV16}
D.~Fahland and H.~V{\"{o}}lzer.
\newblock Dynamic skipping and blocking and dead path elimination for cyclic
  workflows.
\newblock In {\em Business Process Management}, volume 9850 of {\em LNCS},
  pages 234--251. Springer, 2016.

\bibitem{DBLP:conf/sas/FarzanK10}
A.~Farzan and Z.~Kincaid.
\newblock Compositional bitvector analysis for concurrent programs with nested
  locks.
\newblock In {\em {SAS} 2010}, volume 6337 of {\em LNCS}, pages 253--270.
  Springer, 2010.

\bibitem{DBLP:conf/tacas/FarzanM07}
A.~Farzan and P.~Madhusudan.
\newblock Causal dataflow analysis for concurrent programs.
\newblock In {\em {TACAS} 2007}, volume 4424 of {\em LNCS}, pages 102--116,
  2007.

\bibitem{DBLP:journals/is/FavreFV15}
C.~Favre, D.~Fahland, and H.~V{\"{o}}lzer.
\newblock The relationship between workflow graphs and free-choice workflow
  nets.
\newblock {\em Inf. Syst.}, 47:197--219, 2015.

\bibitem{DBLP:conf/tacas/FavreVM16}
C.~Favre, H.~V{\"{o}}lzer, and P.~M{\"{u}}ller.
\newblock Diagnostic information for control-flow analysis of workflow graphs
  (a.k.a. free-choice workflow nets).
\newblock In {\em {TACAS} 2016}, volume 9636 of {\em LNCS}, pages 463--479.
  Springer, 2016.

\bibitem{DBLP:journals/toplas/GantyM12}
P.~Ganty and R.~Majumdar.
\newblock Algorithmic verification of asynchronous programs.
\newblock {\em {ACM} Trans. Program. Lang. Syst.}, 34(1):6, 2012.

\bibitem{Hecht77}
M.~S. Hecht.
\newblock {\em Flow Analysis of Computer Programs}.
\newblock Elsevier Science Inc., 1977.

\bibitem{DBLP:conf/acsd/Katoen12}
J.~Katoen.
\newblock {GSPNs} revisited: Simple semantics and new analysis algorithms.
\newblock In {\em {ACSD} 2012}, pages 6--11. {IEEE} Computer Society, 2012.

\bibitem{DBLP:conf/sas/LammichM08}
P.~Lammich and M.~M{\"{u}}ller{-}Olm.
\newblock Conflict analysis of programs with procedures, dynamic thread
  creation, and monitors.
\newblock In {\em {SAS} 2008, Valencia}, volume 5079 of {\em LNCS}, pages
  205--220, 2008.

\bibitem{nielson}
F.~Nielson, H.~R. Nielson, and C.~Hankin.
\newblock {\em Principles of program analysis}.
\newblock Springer, 1999.

\bibitem{DBLP:journals/toplas/Ramalingam00}
G.~Ramalingam.
\newblock Context-sensitive synchronization-sensitive analysis is undecidable.
\newblock {\em {ACM} Trans. Program. Lang. Syst.}, 22(2):416--430, 2000.

\bibitem{DBLP:journals/acta/ReifS88}
J.~H. Reif and S.~A. Smolka.
\newblock The complexity of reachability in distributed communicating
  processes.
\newblock {\em Acta Inf.}, 25(3):333--354, 1988.

\bibitem{DBLP:conf/popl/SchwarzSVLM11}
M.~D. Schwarz, H.~Seidl, V.~Vojdani, P.~Lammich, and M.~M{\"{u}}ller{-}Olm.
\newblock Static analysis of interrupt-driven programs synchronized via the
  priority ceiling protocol.
\newblock In {\em {POPL} 2011}, pages 93--104. {ACM}, 2011.

\bibitem{DBLP:conf/esop/SeidlS00}
H.~Seidl and B.~Steffen.
\newblock Constraint-based inter-procedural analysis of parallel programs.
\newblock In {\em {ESOP} 2000}, volume 1782 of {\em LNCS}, pages 351--365,
  2000.

\bibitem{DBLP:conf/caise/TrckaAS09}
N.~Trcka, W.~M.~P. van~der Aalst, and N.~Sidorova.
\newblock Data-flow anti-patterns: Discovering data-flow errors in workflows.
\newblock In {\em CAiSE 2009}, volume 5565 of {\em LNCS}, pages 425--439, 2009.

\bibitem{aalst}
W.~M.~P. van~der Aalst.
\newblock The application of {P}etri nets to workflow management.
\newblock {\em J. Circuits, Syst. and Comput.}, 08(01):21--66, 1998.

\bibitem{DBLP:conf/bpm/Aalst00}
W.~M.~P. van~der Aalst.
\newblock Workflow verification: Finding control-flow errors using
  {P}etri-net-based techniques.
\newblock In {\em Business Process Management, Models, Techniques, and
  Empirical Studies}, volume 1806 of {\em LNCS}, pages 161--183. Springer,
  2000.

\bibitem{van2004workflow}
W.~M.~P. van~der Aalst and K.~M. van Hee.
\newblock {\em Workflow management: models, methods, and systems}.
\newblock MIT press, 2004.

\bibitem{van2007verification}
B.~F. van Dongen, M.~H. Jansen-Vullers, H.~Verbeek, and W.~M. van~der Aalst.
\newblock Verification of the sap reference models using epc reduction,
  state-space analysis, and invariants.
\newblock {\em Computers in Industry}, 58(6):578--601, 2007.

\end{thebibliography}
\newpage
\appendix
\section*{Proofs from Section~\ref{subsec:mazur}}

\lemmaMazur*
\begin{proof}
It is easy to see that if $\ell_1$ and $\ell_2$ are independent and $C
\act{\ell_1\ell_2} C'$ for some configurations $C, C'$, then $C
\act{\ell_2\ell_1} C'$.  It follows that the same holds for any two  
Mazurkiewicz equivalent runs $v, w$. Indeed, if $w$ is successful, then by definition $C_\init \act{w} C_\fin$. Hence $C_\init \act{v} C_\fin$, and so $v$ is also successful.
\end{proof}

\lemmaMazurii*
\begin{proof} 
Let $w$ be a successful run, i.e., $C_{\it init} \act{w} C_{\it
fin}$. 
Observe that, since $\Nn$ is deterministic and the domain of $n_\fin$
contains all processes, $n_{\it fin}$ is the only node enabled 
at $C_{\it fin}$. 

We first prove that \emph{at most} one run $v\equiv w$ is
compatible with $S$. 
Assume there are two different such runs $v_1, v_2$. 
Then $v_1 = u \ell_1 u_1$ and $v_2 = u \ell_2
u_2$ for some $u, u_1, u_2$ and $\ell_1 \neq \ell_2$. Since $v_1 \equiv
v_2\equiv w$, the locations $\ell_1$ and $\ell_2$ are
independent. 
But then $S(u) \neq \ell_1$ or $S(u) \neq \ell_2$, and so at least one
of the two runs is not compatible with $S$.  

We now construct a run $v\equiv w$ that is compatible with $S$.
Suppose $w\equiv v_1w_2$ and $v_1$ is compatible with $S$. 
If $w_2$ is not empty then we will show how to prolong $v_1$ to an $S$
compatible run $v'_1$, and 
shorten $w_2$ to $w'_2$ so that still $w\equiv v'_1w'_2$.
Let $S(v_1)=n$. 
If we look at the run $C_\init\act{v_1}C_1\act{w_2}C_\fin$ then $n$ is
enabled in $C_1$. 
If $n=n_\fin$ then $w_2$ is the empty sequence and we are done.
Otherwise, since $n$ is enabled in $C_1$, $w_2$ must be of the form
$u\ell u'$ where  no literal in $u$ uses a process from
$\dom(n)$, and $\ell=(n,b)$ for some outcome $b$. 
But then $w_2\equiv \ell uu'$, and $w\equiv v_1\ell uu'$.
We have found our desired $v'_1=v_1\ell$ and $w'_2=uu'$.
\end{proof}

\begin{example}
Observe that Lemma \ref{lem:mazur2} may not hold for runs that are
not successful nor for non-deterministic negotiation
diagrams. We explain this second point in more detail.
Consider the diagram of Figure \ref{fig:sched} with $n_0$ and $n_4$ as initial and final 
nodes (the hyperarcs indicate $\delta(n_0,p_1,a)= \{n_1, n_2\}$ and 
$\delta(n_0,p_1,a)= \{n_2, n_3\}$. Consider the scheduler that after the occurrence of 
$(n_0, a)$ selects the node $n_1$. The successful trace $[(n_0,a) (n_2, a)]$ 
does not have any run compatible with this scheduler.\end{example}
\begin{figure}[ht]
\centerline{\scalebox{0.8}{\begin{tikzpicture}
\vnego[ports=2,id=n0,spacing=2.0]{0,0}
\node[above left = 0.5cm and -0.1 cm of n0_P0, font=\large] {$n_0$};
\vnego[ports=1,id=n1]{2.0,2.5}
\node[above left = -0.1cm and 0cm of n1_P0, font=\large] {$n_1$};
\vnego[ports=2,id=n2,spacing=1.0]{2.0,0.5}
\node[above left = 0cm and 0cm of n2_P0, font=\large] {$n_2$};
\vnego[ports=1,id=n3]{2.0,-0.5}
\node[below left = -0.1cm and 0cm of n3_P0, font=\large] {$n_3$};
\vnego[ports=2,id=n4,spacing=2.0]{4.0,0}
\node[above left = 0.5cm and -0.1cm of n4_P0, font=\large] {$n_4$};
\node[left = 0.5cm of n0_P1, font=\large] {$p_1$};
\node[left = 0.5cm of n0_P0, font=\large] {$p_2$};

\pgfsetarrowsend{latex}
\draw (n0_P1) -- (1,2) node [above,midway]{$a$} -- (1,2) -- (n1_P0);
\draw (1,2) -- (n2_P1);
\draw (n0_P0) -- (1,0) node [above,midway]{$a$} -- (1,0) -- (n2_P0);
\draw (1,0) -- (n3_P0);
\draw (n1_P0) -- (n4_P1)  node [above,midway]{$a$};
\draw (n2_P0) -- (n4_P0)  node [above,midway]{$a$};
\draw (n2_P1) -- (n4_P1)  node [below,midway]{$a$};
\draw (n3_P0) -- (n4_P0)  node [below,midway]{$a$};
\end{tikzpicture}}}
\caption{\small A non-deterministic negotiation diagram for which Lemma \ref{lem:mazur2} does not hold. }
\label{fig:sched}
\end{figure}
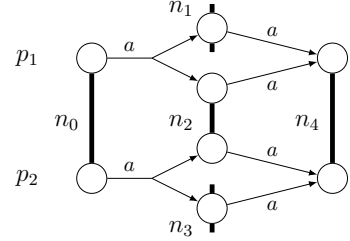

\thmMazur*
\begin{proof}
By Lemma \ref{lem:mazur2}, there exists a bijection $\phi$ between the successful runs 
compatible with $S$ and $S'$: Given a successful run $w$ compatible with $S$, we define
$\phi(w)$ as the unique run such that $w \equiv \phi(w)$. Since $\sem{\_}$ is Mazurkiewicz invariant, we further have $\sem{w} = \sem{\phi(w)}$. Letting $w$ and $w'$ range over the successful runs of $\Nn$ compatible with $S$ and $S'$, respectively we obtain
$\sem{\Nn, S} = \bigsqcup_w \sem{w} = \bigsqcup_w \sum{\phi(w)} = \bigsqcup_{w'} \sem{w'} = \sem{\Nn, S'}$.
\end{proof}

\section*{Proofs from Section~\ref{sec:decomp}}
We recall some definitions and two results, one from \cite{negII} and one from from \cite{DBLP:conf/concur/EsparzaKMW16}. A local path $n_0\act{p_0,a_0}n_1 \cdots n_{k-1}\act{p_{k-1},a_{k-1}} n_k$ is \emph{realizable} from a configuration $C$ if there is a run $C\act{w}$ with $w$ of the form $(n_0,a_0) w_1 (n_1,a_1)\cdots w_{k-1} (n_{k-1},a_{k-1})$, such that $p_i$ does not belong to the domain of any location of $w_{i+1}$. 

\begin{lemma}{(\cite{DBLP:conf/concur/EsparzaKMW16})}
\label{lem:exec}
Let $C$ be a reachable configuration of a sound deterministic
negotiation diagram $\Nn$.
Every local path whose initial node is enabled at $C$ is realizable from $C$.
\end{lemma}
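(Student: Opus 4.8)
The plan is to build the required run incrementally, location by location along the path, exploiting determinism to ``freeze'' each process once it has taken its step and soundness to guarantee that the next node can be enabled. Write the path as $n_0\act{p_0,a_0}n_1\cdots\act{p_{k-1},a_{k-1}}n_k$, so that $p_i\in\dom(n_i)\cap\dom(n_{i+1})$ and $a_i\in\out(n_i)$ for each $i$; in particular $n_i\neq n_\fin$ for $i\le k-1$, since $n_\fin$ is the only node without an outcome. First I would fire $(n_0,a_0)$ from $C$, which is legal because $n_0$ is enabled at $C$ and $a_0\in\out(n_0)$, reaching a reachable configuration in which, by determinism, process $p_0$ is ready \emph{only} for $n_1$. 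The cases $k=0$ and $k=1$ need no further work and are immediate.

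The inductive invariant I would maintain is: after a prefix $(n_0,a_0)\,w_1\cdots (n_{j-1},a_{j-1})$ has been executed from $C$, the reached configuration $C'$ is reachable and process $p_{j-1}$ is ready only for $n_j$. To extend it, I would invoke soundness at $C'$ to obtain a completion $C'\act{u}C_\fin$. Since $n_j\neq n_\fin$ for $j\le k-1$, process $p_{j-1}$ is not at $n_\fin$ in $C'$ but is there in $C_\fin$, so its readiness set must change along $u$; and as long as $p_{j-1}$ is ready only for $n_j$, the only location involving $p_{j-1}$ that can fire is one of the form $(n_j,\cdot)$, so $u$ must contain a firing of $n_j$. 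Taking $w_j$ to be the prefix of $u$ up to, but excluding, the first such firing, process $p_{j-1}$ stays frozen at $n_j$ throughout $w_j$ and hence participates in no location of $w_j$, which is exactly the realizability condition. After $w_j$ the node $n_j$ is enabled, so I would fire $(n_j,a_j)$ (legal since $a_j\in\out(n_j)$), which by determinism leaves $p_j$ ready only for $n_{j+1}$, restoring the invariant. Iterating for $j=1,\dots,k-1$ produces the run $(n_0,a_0)\,w_1\,(n_1,a_1)\cdots w_{k-1}\,(n_{k-1},a_{k-1})$ witnessing realizability.

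The step that needs the most care is this ``freezing'' argument, since it is the only place where both hypotheses are genuinely used and the only place non-realizability could creep in. Determinism is what forces $p_{j-1}$ to have the single option $n_j$, guaranteeing simultaneously that it cannot interfere during $w_j$ and that every completion must route it through $n_j$; soundness is what guarantees that such a completion exists from each intermediate reachable configuration. I would be careful to check that every intermediate configuration we construct is itself reachable, so that soundness keeps applying, and that firing our chosen outcome $a_j$ rather than the outcome used in the soundness witness $u$ is legitimate---it is, because at an enabled node any outcome in $\out(n_j)$ may be selected, and $a_j\in\out(n_j)$.
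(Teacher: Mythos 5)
The paper offers no proof of Lemma~\ref{lem:exec} to compare against: the lemma is imported from \cite{DBLP:conf/concur/EsparzaKMW16}, with the citation standing in for the argument. Judged on its own terms, your proof is correct, and it uses the two hypotheses exactly where they are needed. Determinism makes $\delta(n_{j-1},a_{j-1},p_{j-1})$ the singleton $\{n_j\}$, so after your prefix fires, the pivot process $p_{j-1}$ is ready only for $n_j$; every configuration you build is reachable from $\Cinit$ (prefix the run witnessing reachability of $C$), so soundness supplies a completion $u$ to $\Cfin$; since $C'(p_{j-1})=\{n_j\}$ and $\Cfin(p_{j-1})=\{n_\fin\}$ differ, $u$ must contain a location involving $p_{j-1}$, and the first such location can only be a firing of $n_j$, so the prefix of $u$ before it serves as the filler $w_j$, during which $p_{j-1}$ stays frozen and participates in nothing; finally, substituting your outcome $a_j$ for whichever outcome $u$ uses at $n_j$ is legitimate because enabledness of a node does not depend on which of its outcomes is selected. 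The one imprecision is your justification that $n_j\neq n_\fin$ on the grounds that ``$n_\fin$ is the only node without an outcome'': the paper literally assumes only that every node \emph{except} $n_\fin$ has at least one outcome. What your step actually needs, and what the standard convention for negotiation diagrams provides, is that $n_\fin$ has \emph{no} outcomes, so that $a_j\in\out(n_j)$ for $j\le k-1$ forces $n_j\neq n_\fin$; with that reading, the step---and the whole argument---goes through.
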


\begin{lemma}{(\cite{negII})}
\label{lem:sync}
Let $C$ be a reachable configuration of a sound deterministic
negotiation diagram $\Nn$
and let $C \act{w} C$ where $w = \ell_1 \cdots \ell_k$, $k>0$. There is an index
$1 \leq i \leq k$ such that $\ell_j\domleq \ell_i$ for every 
$1 \leq j \leq k$.
\end{lemma}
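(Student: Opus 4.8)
The plan is to prove the statement by contradiction with soundness. Write $P=\bigcup_{j=1}^k\dom(\ell_j)$ for the set of processes occurring in the cycle, and call a location $\ell_i$ \emph{dominant} if $\dom(\ell_i)=P$; this is exactly the index asked for. I assume that no dominant location exists and manufacture a reachable configuration $D\neq\Cfin$ enabling no node. Since $\Nn$ is sound and $D$ is reachable, the partial run reaching $D$ would have to extend to a successful run, which is impossible once no node is enabled; this contradiction proves the lemma. As in every intended use of the lemma---where $w$ realises a local circuit and consecutive locations share the process performing the corresponding step---I may assume the domains of $w$ form a hypergraph that is connected under intersection, so that a dominant location is precisely one whose domain is all of $P$.

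First I record the consequences of determinism. Since $\Nn$ is deterministic and $C$ is reachable from $\Cinit$, every reachable configuration assigns to each process a single node, so along $C\act{w}C$ each process $p\in P$ traces a genuine local circuit $\gamma_p$ in the graph of $\Nn$, returning to the node $C(p)$. The locations of $w$ are the synchronisation events of these circuits, and $\ell_j$ fires only when \emph{all} processes of $\dom(\ell_j)$ are simultaneously poised at its node. As no $\dom(\ell_j)$ equals $P$, every event leaves some process of $P$ out: the processes are perpetually ``out of phase''. Next I extract a circular wait. Among $\dom(\ell_1),\dots,\dom(\ell_k)$ take the inclusion-maximal clusters $M_1,\dots,M_s$. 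A single maximal cluster would contain every other domain and hence equal $P$; as this is excluded, $s\ge 2$, and by connectedness two clusters overlap without either containing the other. Chasing overlaps around the connected, dominant-free hypergraph produces a cyclic chain $M_{i_1},M_{i_2},\dots,M_{i_t},M_{i_1}$ in which consecutive clusters share a process but none can fire until a process still committed to the next cluster has been released---the hypergraph signature of a classical circular wait.

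The heart of the argument, and its main obstacle, is to turn this combinatorial circular wait into an honestly reachable deadlock of $\Nn$. Starting from $C$---at which the node of $\ell_1$ is enabled---the realisability Lemma~\ref{lem:exec} lets me push individual local paths forward while processes outside the current step do not interfere. Using it I desynchronise the processes around the chain: I advance each cluster along the cycle only up to the step at which it would need a process still committed, one step earlier, to the neighbouring cluster, and I leave that process there. Coordinating these stopping points around the chain yields a reachable configuration $D$ in which every process of $\bigcup_r M_{i_r}$ is poised at a node whose domain also contains a process held one step back in the next cluster, while the finitely many processes outside the chain have been driven to $n_\fin$ using soundness. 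At $D$ no node is enabled: any node $n$ whose domain meets the chain has, by the circular wait, a participant not poised at $n$; and no node has all its participants outside the chain, since those processes sit at $n_\fin$ and are ready only for $n_\fin$, while the chain keeps the processes off $n_\fin$. Hence $D\neq\Cfin$ is a reachable deadlock, contradicting soundness.

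The delicate step is the last one: the local circuits $\gamma_p$ are coupled exactly at the shared locations, so ``stopping one process one step short'' is legitimate only when the coupling permits it, and one must verify---using determinism (each $\gamma_p$ is forced) together with Lemma~\ref{lem:exec} (local paths are individually realisable)---that the out-of-phase configuration $D$ is genuinely reachable and not merely combinatorially consistent. Everything else is bookkeeping: determinism gives single-node configurations and forced local circuits, the maximal-cluster analysis produces the circular wait, and soundness converts the resulting reachable deadlock into the sought contradiction.
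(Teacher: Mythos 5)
First, note that the paper contains no proof of this statement: Lemma~\ref{lem:sync} is imported verbatim from \cite{negII}, so there is no internal argument to compare yours against, and your attempt must be judged on its own terms. On those terms it has genuine gaps.

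The first problem is that you prove a different statement. The lemma quantifies over arbitrary loops $C \act{w} C$, but you ``may assume'' that the domains of the locations of $w$ form a hypergraph connected under intersection. You are right that some such hypothesis is tacitly needed --- consider two processes $p,q$ that after the initial node move to private nodes $m_p,m_q$, each with a self-loop outcome and an exit outcome to $n_\fin$: this diagram is deterministic and sound, the configuration $C$ with $C(p)=\{m_p\}$, $C(q)=\{m_q\}$ is reachable, and the loop $w$ consisting of one self-loop at $m_p$ followed by one at $m_q$ has domains $\{p\}$ and $\{q\}$, so no index $i$ as in the lemma exists. Spotting this is to your credit, but a blind proof of the stated lemma cannot legislate the hypothesis into existence ``as in every intended use''; indeed, in the paper's one application (the proof of Lemma~\ref{lem:domin}) the loop $v$ interleaves the circuit locations with segments $w_i$ whose locations need not share processes with the circuit, so even the intended use does not hand you connectedness. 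The precise side conditions live in \cite{negII} and would have to be recovered from there.

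The second and decisive problem is that the deadlock is never constructed. From ``no domain equals $P$'' you pass to inclusion-maximal clusters and assert ``the hypergraph signature of a classical circular wait''; but the overlap pattern of domains is purely static and implies nothing about who waits for whom --- a location fires whenever its processes happen to be simultaneously at its node, which depends on the run, not on the overlap structure (moreover, two maximal clusters need not overlap directly; connectivity may pass only through smaller domains). The construction of $D$ is a sequence of stipulations: you do not show that the ``stopping points'' exist, that the frozen interleaving is a legal run of $\Nn$ (Lemma~\ref{lem:exec} realizes one local path while others idle; it does not let you freeze several mutually coupled circuits at chosen phases), nor that the processes outside the chain can be driven to $n_\fin$ without the participation of chain processes --- their routes to $n_\fin$ may pass through nodes whose domains meet the chain, in which case no such run exists. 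You yourself flag this verification as ``the delicate step''; it is in fact the entire content of the lemma, and it is missing. Finally, your target is stronger than necessary and possibly unattainable: contradicting soundness only requires a reachable configuration from which $C_\fin$ is unreachable, and unsoundness can manifest as a livelock with no reachable deadlock at all, so an argument forcing a literal deadlock may fail even where the lemma is true.
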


No we are ready to prove domination lemma restated below.

\lemmaDomin*
\begin{proof}
Let $\pi$ be a reachable local circuit of $\Nn$. 
By Lemma \ref{lem:exec}, every number of iterations of $\pi$ is
realizable. 
If we take a sufficiently big number of iterations, say $l$, and a reachable
configuration $C_1$, then we get an execution 
$C_1\act{u}C\act{v}C\act{u'}C_2$ with $uvu'$ realizing $\pi^l$.
We have that the looping part $v$ is of the form
 $v=(n_0,a_0) w_1 (n_1,a_1)\cdots w_{k-1} (n_{k-1},a_{k-1})$ 
where $n_0\act{p_0,a_0}n_1\act{p_1,a_1}\dots\act{p_{k-1},a_{k-1}}$ is
a, continues, subsequence of $\pi^l$.
Moreover realizability guarantees us that  $p_i$ does not belong to
the domain of  any location of $w_{i+1}$.
By Lemma~\ref{lem:sync} 
some location $\ell=(n,a)$ of $v$ satisfies $\dom(\ell') \incl
\dom(\ell)$ for every location $\ell'$ of $v$. 

We claim that $\ell= (n_j, a_j)$ for some $1 \leq j \leq k$, which
implies that $n_j$ is a dominating node of $\pi$. Let $C'$ be a
configuration reached during the execution of the loop $C \act{v} C$, that enables
$\ell$. Let $(n_i, a_i)$ be the last location of the path $\pi$
occurring in the loop before $C'$. By the definition of $\ell$ and of
$C'$ we have $C'(p) = n$ for every $p \in \dom(n_i)$, and so in
particular $C'(p_i)=n$. Since $p_i$ does not belong to the domain of
any location of $w_{i+1}$, 
we also
have $C'(p_i)= n_{i+1}$. So $n = n_{i+1}$, hence the claim is
proved by taking $j =i+1$. 
\end{proof}

Next we prove the unique configuration lemma, that is the main
technical tool for the theorem that follows.

\lemmaMainTechnical*
\begin{proof}
  First suppose that there are nodes $m_1$ and $m_2$
  such that $m_1$ is enabled in $C_1$ but not in $C_2$, and
  symmetrically $m_2$ is enabled in $C_2$ but not in $C_1$.
  In particular, $m_1 \not= m_2$. Consider $m_1$.
  Since $\dom(m_1)\cap X\not=\es$ by  (2), we can find
$p_1\in X$ such that $C_1(p_1)=m_1$; hence also $C_2(p_1)=m_1$ by (1).
  As $m_1$ is not enabled in $C_2$, there is $q_1\in\dom(m_1)$ with
  $C_2(q_1)=n_1\not=m_1$. 
  Applying the same arguments to $m_2$ we get also $p_2$ and $q_2$,
  and have the following properties:
  \begin{align*}
    C_1(p_1)=&C_1(q_1)=m_1& C_2(p_1)=&m_1& C_2(q_1)=n_1\\
    C_2(p_2)=&C_2(q_2)=m_2& C_1(p_2)=&m_2& C_1(q_2)=n_2
  \end{align*}
  
  By soundness there is an execution from $C_2$ of the form
  \begin{equation*}
    C_2\act{m_2}\act{u}\act{n_1}\act{v}\act{m_1}\act{w}C_\fin
  \end{equation*}
  this is because $C_2(p_1)=m_1$, so $m_1$ needs to be executed at
  some point, and since $C_2(q_1)=n_1$ and $q_1\in \dom(m_1)$ then
  $n_1$ should be executed before $m_1$ can be executed.
  Observe that no node in $u$ or $v$ uses $p_1$,
  since $C_2(p_1)=m_1$ and $m_1$ is not executed in $u$ or $v$.
  This gives us a path from $m_2$ to $m_1$ in the graph of the
  negotiation diagram without any node using $p_1$. 
  The same argument gives us a path from $m_2$ to $m_1$ without any
  node using $p_2$.
  Since $p_1\not\in\dom(m_2)$ and $p_2\not\in\dom(m_1)$ we get a cycle
  in the graph of the negotiation diagram without a dominant node, contradicting
  Lemma \ref{lem:domin}.

  Now we consider the general case, and show that if $C_1\not=C_2$
  then we can get to the situation considered at the beginning of the
  proof.
  Take an execution from $C_1$ to the final configuration. 
  Let $u$ be the longest prefix of this execution that is possible
  to execute from $C_2$.
  We have
  \begin{equation*}
    C_1\act{u}C^u_1\act{m_1}\qquad C_2\act{u}C_2^u\not\act{m_1}
  \end{equation*}
  Observe that such a prefix must exist since $C_1\not=C_2$.
  Simply $m_1$ is the first node using a process $q$
  such that $C_1(q)\not=C_2(q)$.
  We claim that there is $p\in\dom(m_1)$ with $C^u_1(p)=C^u_2(p)$.
  If $\dom(m_1)\cap\dom(u)\not=\es$ then we can take any $p$ in this
  intersection.
  Otherwise $C_1\sat m_1$, and by (2) we have $dom(m_1)\cap X\not=\es$.
  So for $p$ in this intersection we get $C^u_1(p)=C_1(p)=C_2(p)=C^u_2(p)$.

  Since $m_1$ is not enabled in $C^u_2$, there is process
  $q\in\dom(m_1)$ with $C^u_2(q)=n_1\not=m_1$. 
  Since $C^u_2(p)=m_1$, there is an execution from $C^u_2$ reaching
  $m_1$:
  \begin{equation*}
    C^u_2\act{w}\act{n_1}\act{w'}\act{m_1}\dots
  \end{equation*}
  Observe that $n_1$ must appear before $m_1$ on this execution. 
  Consider the longest  prefix $v$ of this execution that is possible
  from $C^u_1$. 
  We have that $v$ is a prefix (possibly not strict) of $w$, since it
  is not possible to execute $n_1$ from $C^u_1$ before executing
  $m_1$.
  We obtain
  \begin{equation*}
    C^u_2\act{v}C^{uv}_2\act{m_2}\qquad C^u_1\act{v}C^{uv}_1\not\act{m_2}
  \end{equation*}
  Finally, observe that $C^{uv}_1\act{m_1}$ since $C^u_1\act{m_1}$.
  Moreover, $C^{uv}_2\not\act{m_1}$ since $C^{uv}_2(q)=n_1$, and $m_1$
  is does not appear in $v$. 
  Thus $C^{uv}_1$ and $C^{uv}_2$ are configurations satisfying our
  assumption from the first paragraph.
  Since the first paragraph shows that such two different configurations
  cannot exist, we get $C_1=C_2$.
\end{proof}

\thmUnique*
\begin{proof}
For (i), take $X=\dom(m)$. Suppose 
that there are two configurations $I_1$ and $I_2$ as in (i). The hypotheses of Lemma~\ref{lemma:main-technical} are
satisfied, and so $I_1=I_2$.

For (ii) take $X=\Proc\setminus\dom(m)$. Suppose 
that there are two configurations $F_1$ and $F_2$ as in (ii). 
We get that the two configurations agree on $X$ since they are
obtained from $I(m)$ by $m$-sequences. By definition of the two $F$
configurations, every node enabled in
one of $F_1$, $F_2$ needs a process from $X$, so $F_1=F_2$
by Lemma~\ref{lemma:main-technical}.

For (iii) we take $m$ and $\ell$ as in the statement.
We prove a stronger property by induction. 
Let $C$ be a reachable configuration of $\Nn$, and $v,w$ two strict
$m$-sequences leading to $F_v(\ell)$ and $F_w(\ell)$, respectively,
satisfying the two conditions of (iii). 
We want to show that $F_v(\ell)=F_w(\ell)$. 
The proof is by induction on the sum of the lengths of $v$ and $w$.
The conclusion  then follows by taking as $C$ the configuration
obtained from $I(m)$ after doing $\ell$.

For the induction step, let us take the first location $(n,a)$ of $v$,
i.e. $v=(n,a)v'$. 
Now $w$ must be Mazurkiewicz equivalent to $(n,a)w'$, since $n$ is
enabled in $C$. 
Thanks to Lemma~\ref{lem:mazur},we obtain two computations
$C\act{(n,a)}C'\act{v'}F_v(\ell)$, and 
$C\act{(n,a)}C'\act{w'}F_w(\ell)$.
By induction hypothesis we have $F_v(\ell)=F_w(\ell)$.
\end{proof}

Before proving that $\Nnn$ is sound we need an important technical
lemma. 

\begin{restatable}{lemma}{lemmaMnotInside}
\label{lemma:m-not-inside}
  If $m=F(n)(p)$ for some process $p$ then there is no reachable
  configuration of $\Nnn$ where $m$ is enabled.
\end{restatable}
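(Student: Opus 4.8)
The plan is to argue by contradiction, and the key lies in the rerouting built into Definition~\ref{def:n-restriction}: since $m=\last{n}(p)$, \emph{every} transition of process $p$ whose $\Nn$-target is $m$ is redirected in $\Nnn$ to the fresh final node $n|^\fin_n$. Concretely, for any node $m'$ and outcome $a$ with $p\in\dom(m')$ we have $\d|_n(m',a,p)=n|^\fin_n$ whenever $\d(m',a,p)=m$, and otherwise $\d|_n(m',a,p)=\d(m',a,p)\neq m$. Hence firing any location in $\Nnn$ can never leave process $p$ sitting on $m$.

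First I would make this precise by induction on the length of a run $\first{n}\act{u}C$ of $\Nnn$, establishing the invariant that $C(p)=m$ forces $u$ to contain no location moving $p$. The induction step is immediate from the displayed case distinction: the only way to place $p$ onto $m$ in $\Nn$ is via a transition with target $m=\last{n}(p)$, and every such transition is rerouted to $n|^\fin_n\neq m$. Since the initial configuration of $\Nnn$ places every process of $\dom(n)$ on the initial node $n$, the invariant yields that in every reachable configuration $C$ of $\Nnn$ we have $C(p)=m$ only if $m=n$. Combining this with enabledness finishes the generic case: as $m=\last{n}(p)$ we have $p\in\dom(m)$, so if $m$ were enabled at a reachable $C$ then $C(p)=m$, which forces $m=n$.

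It therefore remains to exclude the coincidence $\last{n}(p)=n$, and here soundness must enter. If $\last{n}(p)=n$, then $n$ is \emph{not} enabled at $\last{n}$: otherwise all of $\dom(n)$ would sit on $n$, whence $\last{n}=\first{n}$ (the frozen processes of $\Proc\setminus\dom(n)$ occupy the same nodes as in $\first{n}$), and $n$ itself would be an enabled node with $n\domleq n$, contradicting the defining property of $\last{n}$ in Theorem~\ref{thm:unique}(ii). Since the defining $n$-sequence $\first{n}\act{u}\last{n}$ begins by firing $n$ (the only node enabled at $\first{n}$), process $p$ leaves $n$ and later returns to it, so $p$ traverses a local circuit $n\act{p,\cdot}\cdots\act{p,\cdot}n$ using only nodes $n'$ with $n'\domleq n$.

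The main obstacle is turning this circuit into a contradiction, and this is exactly the step where soundness is indispensable. By the Domination Lemma (Lemma~\ref{lem:domin}) the reachable circuit has a dominant node, which --- being itself a node of the circuit --- must have domain equal to $\dom(n)$. I would then track the configuration reached when this dominant node is fired inside $u$ and compare it, via the Unique Configuration Lemma (Lemma~\ref{lemma:main-technical}) and the uniqueness of $\first{n}$ from Theorem~\ref{thm:unique}(i), against $\first{n}$ itself. The expected outcome is that $p$ cannot return to $n$ without all of $\dom(n)$ re-synchronizing on a node of domain $\dom(n)$, re-enabling such a node at a configuration reachable from $\first{n}$ by an $n$-sequence, and hence contradicting either the maximality defining $\last{n}$ or the uniqueness of $\first{n}$. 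Making this incompatibility precise is the crux, since cyclic behaviour is permitted in general and only the interplay of determinism with soundness rules the configuration $\last{n}(p)=n$ out.
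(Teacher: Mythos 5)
Your first two paragraphs are correct, and they handle the case $m\neq n$ by a genuinely different and simpler route than the paper: you exploit the rerouting in Definition~\ref{def:n-restriction} (every transition of $p$ whose $\Nn$-target is $\last{n}(p)$ is redirected to $n|^\fin_n$ in $\Nnn$), so no firing of $\Nnn$ can ever place $p$ on $m$, and the lemma reduces to excluding the coincidence $\last{n}(p)=n$. The paper never uses this observation; it runs one uniform circuit-plus-domination argument for every $m$. Up to there your argument is sound.

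The gap is that this remaining case, which you yourself flag as ``the crux'', is the entire content of the lemma, and the route you sketch for it would fail. Your circuit consists only of nodes fired in the $n$-sequence $u$, hence only of nodes with domain included in $\dom(n)$, and for such a circuit the Domination Lemma has no force. Concretely, in the sound diagram of Figure~\ref{fig:decomp} take $n=n_2$: the circuit $n_2\act{p_2,a}n_3\act{p_2,a}n_7\act{p_2,b}n_2$ of process $p_2$ is realized inside the $n_2$-sequence $(n_2,a)(n_3,a)(n_4,a)(n_7,b)$, which leads from $\first{n_2}$ back to $\first{n_2}$; its dominant node $n_7$ has domain exactly $\dom(n_2)$, and nothing is contradicted --- re-enabling a node of domain $\dom(n)$ inside an $n$-sequence is unremarkable, your Unique-Configuration comparison merely re-derives that the configuration reached before firing the dominant node $d$ is $\first{d}$, and the run can still be extended (here via $(n_7,a)$) to $\last{n_2}$. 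So neither the uniqueness of $\first{n}$ nor the maximality of $\last{n}$ can be violated by data living entirely inside $u$. What is actually needed, and what the paper's proof supplies, is a circuit that \emph{leaves} $\dom(n)$: since $p$ waits at $n$ in $\last{n}$, soundness yields a continuation from $\last{n}$ that re-enables $n$ and never moves $p$; the defining property of $\last{n}$ (no enabled node has domain inside $\dom(n)$) produces along this continuation a waiting chain ending in a node $m_k$ with $\dom(m_k)\not\subseteq\dom(n)$; the contradiction then comes from a circuit traversing both $u$ and this continuation and containing both $n$ and $m_k$, whose dominant node would need $p$ in its domain (impossible for nodes of the continuation, which avoids $p$) and a domain not included in $\dom(n)$ (impossible for nodes fired in $u$). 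Without a node escaping $\dom(n)$ on your circuit, no contradiction can be extracted, so the case $\last{n}(p)=n$ remains unproven.
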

\begin{proof}
Suppose by contradiction that $m$ is executed in $\Nnn$ and
$F(n)(p)=m$ for some process $p$. So we can find an $n$-sequence
$u$ such that  
\begin{equation*}
  I(m)\act{u} F(n)
\end{equation*}
is an execution of $\Nn$. We choose $u$ minimal, so that $m$ is
executed only once, at the beginning of $u$. 
Since in  $F(n)$ no node $m' \preceq n$ is
enabled, we have $F(n)(p_1)=m_1\not=m$ for some $p_1\in\dom(m)$. 

If $m_1 \preceq n$, then since $m_1$ is not enabled
in $F(n)$ we have $F(n)(p_2)=m_2$ for some $p_2 \in
\dom(m_1)$. By induction we get a sequence of processes
$p_1,\ldots,p_k$ and of nodes $m_1,\ldots, m_k$ such that:
\begin{itemize}
\item $F(n)(p_i)=m_i$ and $p_i \in\dom(m_{i-1})$ for all $i$,
\item $m_i \preceq n$ for all $i<k$, and $m_k \not\preceq n$.
\end{itemize}

Intuitively, each $p_i$ has to wait for $p_{i+1}$ in order to execute
$m_i$. Since $\Nn$ is sound there is some execution from $F(n)$ that enables
$m$. Let us consider such an execution
\[
C(m) \act{u} F(n) \act{u_k} C_k \act{u_{k-1}}
\cdots \act{u_1} C_1 \act{u_0} C
\]
such that
\begin{itemize}
\item $C \sat m$,
\item for all $i \le k$, $u_{i-1}$ starts with $\ell_i=(m_i,a_i)$ for some $a_i$,
\item process $p$ does not occur in $u_k
  \cdots u_1 u_0$.
\end{itemize}

Recall that $u$ starts by $(m,a)$ for some $a$. The above execution
yields some local path $\pi$ from $m$ to $m$ containing 
$m_k$. 
This path should have a dominant node by Lemma~\ref{lem:domin}.
Since $m_k \not\preceq n$ the set of processes occurring in
$\pi$ is not included in $\dom(n)$. 
So the dominant node cannot be a part of $u$ since the domains of
nodes in $u$ are included in $\dom(n)$.
The dominant node cannot be a part of $u_k\dots u_0$ either since 
process $p$ does not occur in $u_k \cdots u_0$. We
obtain thus a contradiction.



\end{proof}

\lemmansound*
\begin{proof}
Consider a run $C_{\it init}|_n \act{v} C_1$ in $\Nn|_n$, where $C_{{\it init}}|_n$ is the initial
configuration of $\Nn|_n$, i.e., $C_{\it init}|_n(p)= \{n\}$ for every $p \in \dom(n)$. We prove that the run can be extended to as successful run 
of $\Nn|_n$.

Since $n$ is reachable in $\Nn$, by Theorem \ref{thm:unique}(i) we can take a run $C_{\it{init}}\act{u} \first{n}$ and prolong it to $C_{\it{init}} \act{u} \first{n} \act{v} C_1'$ such that (i) $C_1'(p)=\first{n}(p)$ for every process $p \notin \dom(n)$, and 
  (ii)  $C_1'(p)=C_1(p)$ or ($C_1(p)=n|^\fin_n$ and $C_1'(p)=\last{n}$) for every process $p \in \dom(n)$.  Since $\Nn$ is sound, we can prolong the run further to an accepting one, say $C_{{\it init}} \act{u} \first{n} \act{v} C_1'\act{w} C_{{\it fin}}$.
We now permute exhaustively consecutive independent outcomes $\ell \ell'$ in $w$ such that $\dom(\ell) \subseteq \dom(n)$ and $\dom(\ell')$ is not included in $\dom(n)$; say the result is $w_1w_2$. Then we have 
 $C_{{\it init}} \act{u} \first{n} \act{v} C_1'\act{w_1} C_2'$.
Lemma~\ref{lemma:m-not-inside} gives us then an execution  $C_{\it
  init}|_n \act{v} C_1 
\act{w_1} C_2$ for configurations $C_2$ and $C_2'$ satisfying the
conditions (i) and (ii) above  (that is, $C_2$ and $C_2'$ satisfy the
same tow conditions as $C_1$ and $C_1'$). 
Moreover, since the outcomes are permuted exhaustively, either $C_2'=C_{fin}$, or every node enabled in $C_2'$ needs a process outside of $\dom(n)$. By Theorem \ref{thm:unique}(ii), in both cases we have $C_2'=\last{n}$. By condition (ii) above, we have $C_2(p)= n|^\fin_n$ for every process $p \in \dom(n)$. So $C_2$ is the final configuration of $\Nn|_n$, and the run can be prolonged to a successful run. 
\end{proof}

\lemlsound*
\begin{proof}
Analogous to the proof of Lemma \ref{lem:nsound}, replacing 
Theorem \ref{thm:unique}(ii) by Theorem \ref{thm:unique}(iii).
\end{proof}

\section*{Proofs from Section~\ref{sec:compMOP}}

\lemEquivl*
\begin{proof}
We will consider only the first statement. The proof of the second is
analogous.

Since $\Nn$ is deterministic and the framework is Mazurkiewicz-invariant, we have $\sem{\Nn} = \sem{\Nn,S}$ for
every scheduler $S$. Let $S$ be the scheduler that gives priority to
nodes outside $\Nn|_\ell$ over nodes of $\Nn|_\ell$. 
By Theorem~\ref{thm:mazur}, every successful run $C_{\it init}
\act{w} C_{\it fin}$ of $\Nn$ compatible with $S$ can be split into $w
= w_0 u_1 w_1 \cdots u_k w_k$ such that $C_{\it init} \act{w_0}
\first{\ell}$, $\first{\ell} \act{u_i} \last{\ell} \act{w_i}
\first{\ell}$ for every $1 \leq i \leq k-1$, and $\first{\ell}
\act{u_k} \last{\ell} \act{w_k} C_{\it fin}$, where $u_1, \ldots, u_k$
are successful runs of $\Nn|_\ell$, and $w_k$ does not contain
$\ell$. 
Let ${\cal R}(w_1, \ldots, w_k)$ stand for the set of all successful
runs of this form for some fixed $w_1,\dots, w_k$.
By distributivity of $\sem{\_}$, the total
contribution of ${\cal R}(w_1, \ldots, w_k)$ to $\sem{\Nn}$ is
$\sem{{\cal R}(w_1, \ldots, w_k)}=\sem{w_0} \circ \sem{\Nn|_\ell}
\circ \sem{w_1} \circ \cdots \circ \sem{\Nn|_\ell}\circ\sem{w_k}$.

Consider now a successful run $w'$ of $\Redl(\Nn)$.
It is of the form $w_0 \ell' w_1 \ell' w_2\cdots  \ell' w_k$ with
$w_k$  not containing $\ell'=(n,a_\ell)$. 
Since $\sem{\ell'} = \sem{\Nnl}$, the contribution of $w'$ to 
$\sem{{\it Red}_\ell(\Nn)}$ is $\sem{w'} = \sem{{\cal R}(w_1, \ldots,
  w_n)}$. 
Abusing language, let us write ${\cal R}(w')$ instead of 
${\cal R}(w_1, \ldots, w_n)$. Letting $w'$ range over the successful runs of ${\it Red}_\ell(\Nn)$, we get 
$\sem{{\it Red}_\ell(\Nn)} = \sqcap_{w'} \sem{w'}  = \sqcap_{w'} \sem{{\cal R}(w')} = \sem{\Nn}$.
\end{proof}

\invariants*

\begin{proof}
\noindent (1) Let $n$ and $\ell$ be as in the assumption. By the
definition of $\Nnl$, all nodes  $m \neq n$ of $\Nnl$ (except for the
final node)
have a smaller domain than $n$ and so, by the minimality of $n$, are
reduced.
In particular, they all have a single outcome. 
Since $n$ also has one single outcome in $\Nnl$, namely $\ell$,
every node of $\Nnl$ has one single outcome. Finally, we observe that
$\Nnl$ is acyclic: any circuit $\Nnl$ would contain a dominant node
$n'\not= n$ by Lemmas~\ref{lem:lsound},
\ref{lem:domin}. Since $n'$ is reduced we have $\delta(n',a,p)=F(n')(p)$ for every $p
\in \dom(n')$. So it cannot be the case that $n'$ is dominant, by the
definition of $F(n')$.

\noindent (2) Let $C_{\it init}|_n \act{\ell_1 \cdots \ell_k} C$ be an arbitrary run of $\Nn|_n$, where $\ell_i=(n_i,a_i)$ for every
$1 \leq i \leq k$. We prove that for every $1 \leq i \leq k-1$, the outcome $\ell_i$ satisfies $\delta|_n(n_i,a,p)=n_{i+1}$ for every process $p \in \dom(n)$, which implies that $\Nn|_n$ is a replication.

Assume that the above property does not hold, and let  $\ell_i$ be the first location
that does not satisfy the property.
By the definition of $\Nn|_n$ we have $\dom(n_j) \subseteq \dom(n)$
for every $1 \leq j \leq k$.  Since $n_{i+1}$ is enabled after
the occurrence of $\ell_i$, we have $\delta(n_i, a_i, p) = n_{i+1}$
for every process $p \in \dom(n_{i+1})$. Since $\ell_i$ does not
satisfy the property, we have $\dom(n_{i+1}) \subset \dom(n_i)$. But
then $\Nn|_{\ell_i}$ contains at least the nodes $n_i$ and $n_{i+1}$,
and so the location $\ell_i$ is not reduced, contradicting the
hypothesis.
\end{proof}


\end{document}